\documentclass[runningheads]{llncs}

\usepackage{fullpage}
\usepackage[utf8]{inputenc}
\usepackage{amsmath}
\usepackage{hyperref,cleveref}
\usepackage{algorithm}
\usepackage[noend]{algpseudocode}
\usepackage{amsfonts}
\usepackage{amssymb}
\usepackage{mathrsfs}
\usepackage{mathtools}
\usepackage{xcolor}
\usepackage{wrapfig}
\setlength{\intextsep}{0pt}
\usepackage{breqn}
\usepackage{cite}
\usepackage{thm-restate}
\usepackage{tikz}
\usepackage{pgfplots}

\usepackage{amsthm}
\usepackage{orcidlink}
\usepackage{subcaption}

\allowdisplaybreaks
\raggedbottom
\sloppy

\setcounter{tocdepth}{1}

\DeclarePairedDelimiter{\norm}{\lVert}{\rVert}
\DeclarePairedDelimiter{\abs}{\lvert}{\rvert}
\DeclarePairedDelimiter{\parens}{(}{)}
\DeclarePairedDelimiter{\bracks}{[}{]}
\DeclarePairedDelimiter{\braces}{\{}{\}}
\DeclarePairedDelimiter{\tup}{\langle}{\rangle}
\DeclareMathOperator*{\argmin}{arg\,min}
\DeclareMathOperator*{\argmax}{arg\,max}
\DeclareMathOperator*{\E}{\mathbb{E}}
\DeclareMathOperator*{\Var}{\mathbb{V}}
\DeclareMathOperator*{\Prob}{\mathbb{P}}
\newcommand{\R}{\mathbb{R}}
\newcommand{\dd}{\mathcal{D}}
\newcommand{\loss}{\ell}
\newcommand{\diameter}[2][{}]{\mathrm{diam}^{#1}\parens*{#2}}

\newcommand{\cluster}{\mathsf{cluster}}

\newcommand{\epsconverge}{\epsilon}
\newcommand{\epsmodel}{\delta}
\newcommand{\fp}{f}

\newcommand{\fopt}{f^*}
\newcommand{\thresh}{n_b}
\newcommand{\explain}[1]{{\color{blue}\text{#1}}}

\newcommand{\remove}[1]{}

\begin{document}

\title{Asynchronous Fully-Decentralized SGD in the Cluster-Based Model\thanks{
    This work was supported by Pazy grant 226/20 and 
    the Israel Science Foundation grants 380/18 and 22/1425.}}
\author{Hagit Attiya \orcidlink{0000-0002-8017-6457} \and Noa Schiller \orcidlink{0009-0007-0285-6194}}
\institute{Department of Computer Science, Technion, Israel \\
\email{\{hagit,noa.schiller\}@cs.technion.ac.il}}

\authorrunning{Hagit Attiya \and Noa Schiller}

\maketitle

\begin{abstract}
This paper presents fault-tolerant asynchronous \emph{Stochastic 
Gradient Descent} (\emph{SGD}) algorithms.
SGD is widely used for approximating the minimum of a cost function $Q$,
a core part of optimization and learning algorithms. 
Our algorithms are designed for the \emph{cluster-based} model,
which combines message-passing and shared-memory communication layers.
Processes may fail by \emph{crashing}, and the algorithm inside 
each cluster is \emph{wait-free}, using only reads and writes.

For a \emph{strongly convex} $Q$,
our algorithm \emph{can withstand partitions of the system}.
It provides convergence rate that is the maximal distributed 
acceleration over the optimal convergence rate of \emph{sequential} SGD.

For arbitrary smooth functions, the convergence rate has an additional term that
depends on the maximal difference between the parameters at the same iteration.
(This holds under standard assumptions on $Q$.)
In this case, the algorithm obtains the same convergence rate as sequential SGD, 
up to a logarithmic factor. 
This is achieved by using, at each iteration, a \emph{multidimensional 
approximate agreement} algorithm, tailored for the cluster-based model. 

The general algorithm communicates with nonfaulty processes 
belonging to clusters that include a majority of all processes.
We prove that this condition is necessary when optimizing some 
non-convex functions.
\end{abstract} 

\keywords{
Cluster-based model,
Distributed learning,
Asynchronous computing,
Multi-dimensional approximate agreement,
Stochastic gradient descent}

\section{Introduction}

An \emph{optimization} problem attempts to minimize the value of 
a \emph{cost function} $Q : \R^d \rightarrow \R$,
that is, find $x^*\in \argmin_{x\in \R^d} Q(x)$.
Among their many uses, 
optimization problems play a key role in machine and deep learning~\cite{lecun2015deep},
often using \emph{stochastic gradient descent} (SGD).
SGD~\cite{sgd} repeatedly applies the update rule
$\mathbf{x}_{t+1} = \mathbf{x}_t - \eta_t G(\mathbf{x}_t,z_t)$,
in each iteration $t$.
The arguments for this rule are the \emph{learning parameter} $\mathbf{x}_t$,
the \emph{learning rate} $\eta_t$,
and a random sample $z_t$ from \emph{data distribution} $\dd$;
$G(\mathbf{x}_t,z_t)$ computes the \emph{stochastic gradient} 
of $\mathbf{x}_t$ and $z_t$, 
which is an \emph{unbiased estimator} of the true gradient 
$\nabla Q(\mathbf{x}_t)$.
Intuitively, the gradient points to the direction of the steepest 
slope at that point, and its opposite direction gives the biggest 
(local) decrease in function value. 
When the function $Q$ is strongly convex, 
$\mathbf{x}_t$ will converge to the unique minimum of $Q$~\cite{bottou2018optimization};
otherwise, it will converge to a point with zero gradient~\cite{GhadimiL13a}.


In learning applications, SGD is applied to a function $Q$ of high dimension $d$, 
using many stochastic gradients~\cite{NIPS2012_6aca9700}.
The convergence of the basic SGD algorithm can be improved by 
\emph{mini-batch SGD},
which computes $b$ stochastic gradients using $b$ samples,
drawn uniformly at random from $\dd$. 
The average of these $b$ gradients has variance that is a factor of $b$ smaller
than $\sigma^2$, the variance of a single stochastic gradient, 
implying a linear reduction in the number of iterations for convergence.
Since gradients are computed independently, 
(mini-batch) SGD is a prime target for large-scale distributed and 
parallel computing.
Mini-batch SGD provides a baseline for measuring 
performance in the distributed setting.

In an iteration of a typical distributed SGD algorithm, 
a \emph{worker} performs some local computation and then, 
all computed values are aggregated to collectively compute 
parameters for the next iteration~\cite{Ben-NunH19}.
This can be done in a \emph{centralized} manner, 
where a \emph{parameter server} aggregates all the computation 
done by the workers (e.g.,~\cite{AlistarhSK18,RechtRWN11,
NIPS2015_452bf208,NEURIPS2018_a07c2f3b,NIPS2011_f0e52b27,El-MhamdiGGHR20,AfekGPS21}), 
or in a \emph{decentralized} manner,
where each worker holds a copy of the parameters (e.g.,~\cite{wagmaSGD,
li2020taming,NIPS2017_f7552665,pmlr-v80-lian18a,
CollaborativeLearningNeurIPS}).
A straightforward, synchronized implementation of (mini-batch) SGD 
requires locks or barriers to ensure that workers proceed in lock-step, 
thereby harming the performance.

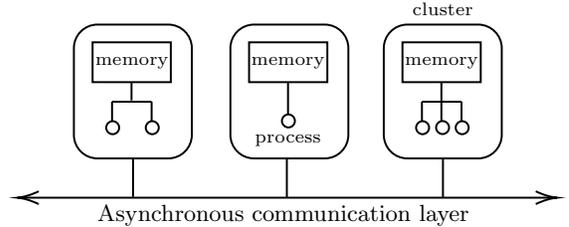
\begin{wrapfigure}{R}{0.46\textwidth}
\centering
\resizebox{0.45\textwidth}{!}{\tikzset{every picture/.style={line width=0.75pt}} 

\begin{tikzpicture}[x=0.75pt,y=0.75pt,yscale=-1,xscale=1]

\draw    (53.4,109.88) -- (326.6,109.88) ;
\draw [shift={(328.6,109.88)}, rotate = 180] [color={rgb, 255:red, 0; green, 0; blue, 0 }  ][line width=0.75]    (10.93,-3.29) .. controls (6.95,-1.4) and (3.31,-0.3) .. (0,0) .. controls (3.31,0.3) and (6.95,1.4) .. (10.93,3.29)   ;
\draw [shift={(51.4,109.88)}, rotate = 0] [color={rgb, 255:red, 0; green, 0; blue, 0 }  ][line width=0.75]    (10.93,-3.29) .. controls (6.95,-1.4) and (3.31,-0.3) .. (0,0) .. controls (3.31,0.3) and (6.95,1.4) .. (10.93,3.29)   ;
\draw   (80.4,33) .. controls (80.4,26.31) and (85.83,20.88) .. (92.52,20.88) -- (128.88,20.88) .. controls (135.57,20.88) and (141,26.31) .. (141,33) -- (141,77.56) .. controls (141,84.26) and (135.57,89.68) .. (128.88,89.68) -- (92.52,89.68) .. controls (85.83,89.68) and (80.4,84.26) .. (80.4,77.56) -- cycle ;
\draw   (90,30) -- (130.2,30) -- (130.2,50.48) -- (90,50.48) -- cycle ;
\draw    (110,50.68) -- (110,59.88) ;
\draw    (100,60.55) -- (120.33,60.55) ;
\draw    (100,60.55) -- (100,70.55) ;
\draw   (97,73.88) .. controls (97,72.04) and (98.49,70.55) .. (100.33,70.55) .. controls (102.17,70.55) and (103.67,72.04) .. (103.67,73.88) .. controls (103.67,75.72) and (102.17,77.22) .. (100.33,77.22) .. controls (98.49,77.22) and (97,75.72) .. (97,73.88) -- cycle ;
\draw    (120.33,60.55) -- (120.33,70.55) ;
\draw   (117.33,73.88) .. controls (117.33,72.04) and (118.83,70.55) .. (120.67,70.55) .. controls (122.51,70.55) and (124,72.04) .. (124,73.88) .. controls (124,75.72) and (122.51,77.22) .. (120.67,77.22) .. controls (118.83,77.22) and (117.33,75.72) .. (117.33,73.88) -- cycle ;
\draw   (160.8,33) .. controls (160.8,26.31) and (166.23,20.88) .. (172.92,20.88) -- (209.28,20.88) .. controls (215.97,20.88) and (221.4,26.31) .. (221.4,33) -- (221.4,77.56) .. controls (221.4,84.26) and (215.97,89.68) .. (209.28,89.68) -- (172.92,89.68) .. controls (166.23,89.68) and (160.8,84.26) .. (160.8,77.56) -- cycle ;
\draw   (170.4,30) -- (210.6,30) -- (210.6,50.48) -- (170.4,50.48) -- cycle ;
\draw    (190.4,50.68) -- (190.4,67.08) ;
\draw   (187.27,70.42) .. controls (187.27,68.58) and (188.76,67.08) .. (190.6,67.08) .. controls (192.44,67.08) and (193.93,68.58) .. (193.93,70.42) .. controls (193.93,72.26) and (192.44,73.75) .. (190.6,73.75) .. controls (188.76,73.75) and (187.27,72.26) .. (187.27,70.42) -- cycle ;
\draw   (239.6,33) .. controls (239.6,26.31) and (245.03,20.88) .. (251.72,20.88) -- (288.08,20.88) .. controls (294.77,20.88) and (300.2,26.31) .. (300.2,33) -- (300.2,77.56) .. controls (300.2,84.26) and (294.77,89.68) .. (288.08,89.68) -- (251.72,89.68) .. controls (245.03,89.68) and (239.6,84.26) .. (239.6,77.56) -- cycle ;
\draw   (249.2,30) -- (289.4,30) -- (289.4,50.48) -- (249.2,50.48) -- cycle ;
\draw    (269.2,50.68) -- (269.2,70.48) ;
\draw    (259.2,60.55) -- (279.53,60.55) ;
\draw    (259.2,60.55) -- (259.2,70.55) ;
\draw   (256.2,73.88) .. controls (256.2,72.04) and (257.69,70.55) .. (259.53,70.55) .. controls (261.37,70.55) and (262.87,72.04) .. (262.87,73.88) .. controls (262.87,75.72) and (261.37,77.22) .. (259.53,77.22) .. controls (257.69,77.22) and (256.2,75.72) .. (256.2,73.88) -- cycle ;
\draw    (279.53,60.55) -- (279.53,70.55) ;
\draw   (276.53,73.88) .. controls (276.53,72.04) and (278.03,70.55) .. (279.87,70.55) .. controls (281.71,70.55) and (283.2,72.04) .. (283.2,73.88) .. controls (283.2,75.72) and (281.71,77.22) .. (279.87,77.22) .. controls (278.03,77.22) and (276.53,75.72) .. (276.53,73.88) -- cycle ;
\draw    (110.8,89.28) -- (110.8,109.48) ;
\draw    (189.8,89.68) -- (189.8,109.88) ;
\draw    (270,89.68) -- (270,109.88) ;
\draw   (266.47,73.82) .. controls (266.47,71.98) and (267.96,70.48) .. (269.8,70.48) .. controls (271.64,70.48) and (273.13,71.98) .. (273.13,73.82) .. controls (273.13,75.66) and (271.64,77.15) .. (269.8,77.15) .. controls (267.96,77.15) and (266.47,75.66) .. (266.47,73.82) -- cycle ;

\draw (89.9,36) node [anchor=north west][inner sep=0.75pt]  [font=\scriptsize] [align=left] {{memory}};
\draw (170.1,36) node [anchor=north west][inner sep=0.75pt]  [font=\scriptsize] [align=left] {{memory}};
\draw (249,36) node [anchor=north west][inner sep=0.75pt]  [font=\scriptsize] [align=left] {{memory}};
\draw (252.8,8) node [anchor=north west][inner sep=0.75pt]  [font=\scriptsize] [align=left] {cluster};
\draw (172,76) node [anchor=north west][inner sep=0.75pt]  [font=\scriptsize] [align=left] {process};
\draw (91,112) node [anchor=north west][inner sep=0.75pt]  [font=\footnotesize] [align=left] {Asynchronous communication layer};

\end{tikzpicture}}
\caption{Cluster-based model}
\label{fig:cluster-based}
\end{wrapfigure}

This paper considers completely decentralized and {fully-}asynchronous SGD 
in a \emph{cluster-based model}~\cite{RaynalC19} that combines 
both shared memory and message passing (Figure~\ref{fig:cluster-based}):
processes are partitioned to disjoint clusters,
each sharing a memory space, accessed with reads and writes; 
additionally, all processes can communicate by message passing.
Processes may fail by \emph{crashing}, that is, stopping to take steps.
This model is interesting from a practical point-of-view,
as it captures several system architectures, 
for example \emph{high-performance computing} systems~\cite{HPC}.

\emph{Our first main contribution} 
shows that when the function $Q$ is strongly convex, 
a \emph{simple asynchronous} algorithm that collects $\thresh$ messages in each iteration, 
matches the convergence rate of a sequential mini-batch SGD algorithm 
(for strongly-convex functions)
with batch size $\thresh$~\cite{bottou2018optimization,AgarwalBRW12}.
Our analysis of this algorithm is relatively simple, 
and leverages the strong convexity of $Q$ to prove convergence, despite 
the fact that each process holds a local copy of the learning parameters.
Specifically, we prove (Theorem~\ref{thm:strongly-convex}) 
that if $Q$ is a smooth strongly-convex function,
then the convergence rate of the algorithm
after $T$ iterations with parameter $\thresh$
is ${O} \parens*{1/{\thresh T}}$.
Progress is ensured as long as {at least $\thresh$ processes do not fail}.

The algorithm for strongly-convex functions does not rely on communication
within clusters and it applies to general message-passing systems.
Even more importantly, 
{the algorithm works even if the system partitions,
that is, the computation proceeds within disjoint sets of processes}.
Roughly, since $Q$ is strongly convex, it has a single minimum,
and thus, all processes will independently converge to this minimum.

\emph{Our second main contribution} (Theorem~\ref{thm:thmlb})
shows that in general, when $Q$ is not strongly convex, 
any SGD algorithm requires that 
the set of non-failed processes \emph{represents} a majority of the processes, 
when a process represents all the processes in its cluster.
In the special case where communication is only through message passing, 
this reduces to {requiring that the number of processes $n$ is larger than $2f$}.


\emph{Our third main contribution} is a general SGD algorithm,
under the same assumption that the set of non-failed processes represents 
a majority of the processes.
It has a weaker convergence guarantee, relative to the baseline,
with an additional term $\Delta$, depending on the difference 
between the learning parameters of the different processes 
during the algorithm execution. 
We show 
that if $Q$ is a smooth function, 
then the convergence rate of the algorithm after $T$ iterations with parameter $\thresh \leq n-f$ is $\mathcal{O} \parens*{1/{\sqrt{\thresh T}} + T \Delta}$.
%
%
The first term 
matches the convergence rate achieved in standard analysis 
for non-convex objectives~\cite{GhadimiL13a}.

Unlike the strongly convex case, where the difference between the 
learning parameters is intrinsically bounded, here we bound $\Delta$ 
using \emph{multidimensional approximate agreement} (\emph{MDAA})~\cite{MendesH13}.
In MDAA, processes start with inputs in $\R^d$, 
and the outputs of nonfaulty processes should be ``close together''
and in the \emph{convex hull} of their inputs.
We use a shared-memory adaptation of~\cite{fggerFastApproximate} 
to bound the difference between the values sent from the same cluster.
By ensuring every pair of processes communicate with a representative 
process (not necessarily the same process) from at least one 
common cluster, we ensure good contraction at each iteration.
MDAA encapsulates the use of the shared memory at each cluster,
as well as the fault-tolerance required from the algorithm. 
This algorithm is interesting by itself, 
beyond its application in distributed learning.

Each MDAA iteration contains several \emph{communication rounds},
where each process sends a message and receives responses 
representing $n - f$ processes.
We prove that our general SGD algorithm can match the convergence rate 
of the sequential algorithm, \emph{up to a logarithmic blowup}
in the number of communication rounds.
Specifically (Theorem~\ref{thm:non-convex-small-error}), 
for a smooth function $Q$, the convergence rate of the algorithm 
after $R$ communication rounds 
is $ \widetilde{\mathcal{O}} \parens*{1/{\sqrt{\thresh R}}}$.


In our algorithms, each process serves as both a computation and a communication thread, 
executing the computation and sending its result to the other processes. 
Within the cluster, the algorithm is wait-free and uses only reads and writes 
to the cluster's shared memory;
no locks or barriers are used. 
Since processes operate in an independent manner, 
the algorithms achieve a speedup in the total number 
of processes and not the number of clusters.
Our algorithms also improve the resilience relative to a pure 
message passing model, 
since a process can \emph{represent} its cluster~\cite{AttiyaKS20}.

\section{Related Work}
\label{sec:related}

In the decentralized setting, \emph{synchronous} SGD can use \emph{allreduce}~\cite{mpi40} 
to exchange the stochastic gradients among all workers in each iteration.
To reduce synchronization cost of allreduce, 
it was suggested to partially reduce information within non overlapping groups of processes, 
in each iteration~\cite{wagmaSGD}.
However, since the reduce operation is activated by early workers, 
this could lead to averaging \emph{stale} gradients with more advanced ones. 
This can be bounded by performing a full allreduce every few iterations.
It is also possible to collect non-stale gradients from at least half of the 
workers~\cite{li2020taming}, ensuring convergence when staleness is bounded.
Our work has some similarities to these two papers. 
In a sense, we also relax the allreduce operation, where all processes 
send messages to each other but do not wait to receive from all of them. 
Similarly to~\cite{li2020taming}, our algorithm also waits for enough 
workers to move on to the next iteration. 
However, we ignore the stale gradients from previous iterations, 
allowing us to make no assumption on the staleness of the gradients.
This also means that workers will ignore these gradients
when they eventually receive them.

\emph{Elastic consistency}~\cite{Alistarh2021-ElasticConsistency} is a 
framework that assumes that the difference between the parameter 
used to compute the stochastic gradient by a process 
and the actual global parameter is bounded.
SGD converges under this assumption, 
for both convex and non-convex objective functions.  
This framework is agnostic to the system model and can be applied to 
several existing frameworks and algorithms.
They prove that several popular frameworks obey the above assumption. 
One example is in shared-memory~\cite{AlistarhSK18,RechtRWN11}, 
where processes access the same learning parameter stored in memory and 
update it, each coordinate at a time, using fetch\&add.
Another example is with message-passing~\cite{NIPS2015_452bf208,NIPS2011_f0e52b27}, 
where the parameter server may receive stale gradients, 
i.e., gradients computed using old parameters.
Both cases assume \emph{bounded asynchrony}, 
with a maximum delay $\tau$ on the staleness of gradients. 
Our algorithms are completely asynchronous, 
and do not assume any bound on the difference between the iterations 
different processes are in at any point in the execution.
We achieve this by ignoring gradients from past iterations, 
allowing us to explicitly bound the convergence, 
without bounding the maximum delay or relying on elastic consistency. 

Another line of work in the decentralized setting is when 
communication is dictated by a graph,
either synchronously~\cite{NIPS2017_f7552665}, 
or asynchronously~\cite{pmlr-v80-lian18a}.
This is {related} to \emph{gossip SGD}~\cite{gossipingSGD}, 
where the communication graph is random. 
Choosing the right (possibly random and time varying) communication graph 
allows information propagation from each process to all the others. 
This allows to solve \emph{average consensus}~\cite{Boyd,1333204,Tsitsiklis84}, 
a problem widely studied in the literature on control theory. 
Average consensus can be used to solve distributed optimization 
and ensure workers agree on the final estimate~\cite{nedic2009}. 
We do not make any assumptions on the communication graph, 
except that workers receive enough messages in each iteration. 
Therefore, information is not guaranteed to propagate during an execution 
of the algorithm and we must explicitly bound the difference in estimates 
among the workers.

 

\emph{Approximate agreement} was originally defined over the reals~\cite{DolevLPSW86}.
\emph{Multidimensional approximate agreement} was defined for asynchronous 
systems with malicious failures~\cite{MendesH13}, requiring 
that the outputs of nonfaulty processes' are  close together and
in the \emph{convex-hull} of their inputs.
They prove that the optimal fault-tolerance for this problem is $n>f(d+2)$, 
for inputs in $\R^d$. 
Algorithms for multidimensional approximate agreement, assuming both crash and malicious failures, 
with constant contraction rate are {known}~\cite{fggerFastApproximate}.

The lower bound of $n>f(d+2)$ can be circumvented by using \emph{averaging 
agreement}~\cite{CollaborativeLearningNeurIPS}, where convexity 
is replaced by the requirement that the Euclidean difference between 
the average of the nonfaulty inputs and outputs should be proportional 
to the initial Euclidean distance between the inputs.
Using averaging agreement allows to prove the convergence of the \emph{average} 
of the outputs of nonfaulty nodes~\cite{CollaborativeLearningNeurIPS}.
The lower bound of $n>f(d+2)$~\cite{MendesH13} does not hold with
crash failures, and we solve asynchronous MDAA assuming only $n>2f$. 
Using convexity of MDAA, we prove that 
the learning parameters at each process converge \emph{individually}, 
and not just the average parameters over all processes.
\emph{Byzantine  optimization} was introduced in \cite{SuVaidya}.
It was shown that \emph{$2f$-redundancy} is a necessary and 
sufficient condition for $f$-resilient Byzantine deterministic 
optimization, both exact~\cite{Redundancy-PODC20}
and approximate~\cite{ApproxByzOptimization-PODC21}. 
The lower bound of $n>2f$ relies on injecting malicious values, 
and does not capture a partitioning of the system.

In the heterogeneous case, where each process has a different cost function, 
averaging agreement is equivalent to asynchronous decentralized Byzantine 
learning~\cite{CollaborativeLearningNeurIPS}. 
This paper shows two algorithms, one that requires $n \geq 6f +1$, 
achieving the best-possible asymptotic averaging constant,
and another that requires only $n \geq 3f+1$, 
by using improved techniques for approximate agreement~\cite{AbrahamAD04}.
They also prove a lower bound of $n>3f$ for averaging agreement, 
implying the same lower bound for Byzantine learning.
%
Our algorithms are similar to the homogeneous-cost 
algorithm of~\cite{CollaborativeLearningNeurIPS},
but their algorithm tolerates malicious failures while 
our algorithm tolerates only crash failures.
Our algorithms are tailored to the cluster-based model, 
exploiting shared-memory communication, 
while their algorithms rely only on message passing.
Our analysis also accounts for the extra communication rounds 
of the averaging agreement protocol, which is ignored in their analysis. 
While the algorithm of~\cite{CollaborativeLearningNeurIPS} could also 
tolerate crash failures, it is not optimized for this case and 
does not achieve a speedup {proportional to} the number of workers.
Our algorithms improve the convergence rate as a function 
of the number of participating processes,
a feature that is lacking in many Byzantine-resilient SGD 
algorithms (see~\cite{MaurerOPODIS21}).
An exception is the algorithm of~\cite{NEURIPS2018_a07c2f3b}, 
which achieves speedup in the number of processes, 
but this algorithm is synchronous and centralized.


A hybrid setting, with a set of parameter servers and a set of workers
is considered in~\cite{El-MhamdiGGHR20}. 
They assume that the number of workers is $\geq 3f_w + 1$, 
where at most $f_w$ workers can be malicious,
and the number of parameter servers is $\geq 3f_{ps} + 2$, 
where at most $f_{ps}$ parameter servers can be malicious.
They make a non-standard assumption that any communication pattern 
between non-malicious servers eventually holds with probability 1.
They consider smooth non-convex cost functions, 
but do not bound the convergence rate of their algorithm 
and only show eventual convergence. 

\emph{Federated learning}~\cite{KairouzMABBBBCC21} considers the 
centralized setting with synchronous iterations; 
however, not all the workers participate in each learning iteration.
A prominent algorithm is \emph{FedAvg}~\cite{McMahanMRHA17},
where each worker does one or more SGD iterations and sends its updated 
parameters to the parameter server,
which simply averages them to compute the parameters for the next iteration. 
It was shown~\cite{karimireddy2021scaffold} that non-uniform distribution 
of data across the workers can hinder the convergence of FedAvg, 
by introducing drift in the updates sent by the workers to the server. 
SCAFFOLD~\cite{karimireddy2021scaffold} adds a correction term 
to mitigate the drift introduced by the workers. 
Despite the differences between our model and the federated learning 
framework, our proofs utilize some techniques from~\cite{karimireddy2021scaffold}.

\section{Preliminaries}
\label{sec:model}

\subsection{Model of Computation}

There are $n$ \emph{processes}, $1,\dots,n$, 
which are partitioned into $m\leq n$ disjoint \emph{clusters}, 
$P_1,...,P_m$. 
Formally, $P_i\subseteq \{1,\dots,n\}$, 
$P_i\cap P_j = \emptyset$ for every $1\leq i<j\leq m$, 
and $\bigcup_{i=1}^m P_i = \{1,\dots,n\}$.
Given a process $i$, $\cluster(i)$ is its cluster, i.e., 
$\cluster(i) = P_j$ such that $i\in P_j$. 
We slightly abuse notation, and for a set $V\subseteq [n]$, 
denote $\cluster(V) = \cup_{i\in V} \cluster(i)$.
Processes may \emph{crash} and stop taking steps. 
A process is \emph{nonfaulty} if it never stops taking steps. 
For simplicity, a process keeps taking (empty) steps even after
it completes the algorithm.
A \emph{shared memory} is associated with each cluster,
and it is accessed with read and write operations, 
only by the processes in the cluster. 
In addition, each process can send messages to each other process, 
using an asynchronous, reliable communication link. 
This means that messages between nonfaulty processes are eventually 
delivered, but {they can be arbitrarily delayed}.
Figure~\ref{fig:cluster-based} shows $n=6$ processes organized in $m=3$ clusters:
$P_1 = \{1,2\}, P_2 = \{3\}$ and $P_3 = \{4,5,6\}$.
The maximal number of processes that can crash is denoted $\fp$, 
$1 \leq \fp < n$.

A process $i$ \emph{represents} all the processes in $\cluster(i)$.
Let $\fopt$ be the maximal integer such that any set $P\subseteq [n]$ 
of size $n-\fopt$ represents a majority of the processes, i.e., 
$|{\cluster(P)}| \geq \lfloor n/2 \rfloor + 1$.
Intuitively, this is the exact number of failures a system can withstand 
without \emph{partitioning}, i.e., the situation where 
two disjoint sets of processes run without communication.
In the special case of a pure message-passing system (with singleton 
clusters) requiring $\fp \leq \fopt$ amounts to $f<n/2$.
Note that $\fopt \geq \lfloor (n-1)/2 \rfloor$, and we have:

\begin{lemma}[\cite{AttiyaKS20}]
\label{lemm:fopt-comm}
    If $f > \fopt$, then there are two sets of processes $P, Q \subseteq [n]$, each of size $n-f$, such that $\cluster(P) \cap \cluster(Q) = \emptyset$. In other words, any process in $P$ and process in $Q$ cannot communicate using shared memory.
\end{lemma}

\begin{lemma}[\cite{AttiyaKS20}]
\label{lemm:fopt}
    If $\fp \leq \fopt$, then any two sets $P,Q\subseteq [n]$,
    each representing $n - \fp$ processes, must include a process from the same cluster.
\end{lemma}

A \emph{configuration} $C$ consists of the local state of each process, 
pending messages that were not received yet 
and the shared memory state of each cluster. 
An \emph{event} is either a delivery of some message by process $i$ 
or some operation on its cluster shared memory. 
A \emph{step} consists of some local computation, 
possibly a \emph{coin flip} {(i.e., generating some local randomness)}, and a single event.
By applying a step performed by process $i$ to configuration $C$, 
we obtain a new configuration with a new local state for process $i$, 
possibly removing or adding messages from the pending messages buffer 
and the updated shared memory state of $i$'s cluster.

Given a configuration $C$, for every process there is a fixed 
probability for every step it can take from $C$.
An \emph{execution tree} $\mathcal{T}$
is a directed weighted tree where each node is a configuration 
and the edges are all the possible steps that can be taken 
from this configuration. 
The weight on each edge is exactly the probability for the step 
to be taken from the parent configuration. 
The root of the execution tree is an \emph{initial configuration}.
Any path in the execution tree, beginning from the root, 
defines a legal \emph{execution}. 
The probability over the execution tree for an execution to occur is the product of weights along the path that defines the execution. The execution tree induces a well-defined probability
space, which is not explicitly defined.


An execution tree $\mathcal{T}$ is \emph{valid} if all the outgoing edges from nodes in the same level of the tree are all the possible steps of the same process. In other words, the process \emph{schedule} of each execution in the tree, which is the sequence of processes in the order that they take steps, are identical.
In addition, each execution is infinite and 
consists of at most $f$ failures. Note that this induces scheduling of a \emph{weak static adversary} that cannot observe the local states of the processes and has to choose the schedule in advance.



\subsection{Stochastic Gradient Descent}

The Euclidean norm of a vector $\mathbf{x}=(x_1,...,x_d)\in \R^d$
is $\norm*{\mathbf{x}}_2 \triangleq \sqrt{\sum_{i=1}^d\abs*{x_i}^2}$;
we use the standard notation, 
$\norm*{\mathbf{x}}_2^2 \triangleq \parens{\norm*{\mathbf{x}}_2}^2 = \sum_{i=1}^d\abs*{x_i}^2$.
The \emph{inner product} of $\mathbf{x},\mathbf{y}\in \R^d$ is $\tup{\mathbf{x},\mathbf{y}} \triangleq \mathbf{x}^T\mathbf{y} = \sum_{i=1}^d x_i y_i$.
The \emph{variance} of a random vector $\mathbf{x}$ is $\Var\bracks*{\mathbf{x}} \triangleq \E \bracks*{\norm*{\mathbf{x} - \E \bracks*{\mathbf{x}}}_2^2}$. 

Each process can access the same \emph{data distribution} $\dd$, 
and \emph{loss function} $\loss(\mathbf{x},z)$, 
which takes a \emph{learning parameter} $\mathbf{x}\in \R^d$ 
and a data point $z\in \dd$. 
Given a learning parameter $\mathbf{x}\in \R^d$, the \emph{cost function} $Q$ is:
\begin{equation*}
Q(\mathbf{x}) \triangleq \E_{z\sim \dd} \bracks*{\loss(\mathbf{x},z)}.
\end{equation*}
A distributed \emph{Stochastic Gradient Descent (SGD)} algorithm collectively 
minimizes the cost function $Q$, i.e., it finds 
$\mathbf{x}^*\in \argmin_{\mathbf{x}\in \R^d} Q(\mathbf{x})$.

The cost function is \emph{differentiable} and \emph{smooth}, 
i.e., for a constant $L\in \R^{+}$,
\begin{equation}
\label{l-lipsc}
        \forall \mathbf{x},\mathbf{y}\in \R^d,\, \norm*{\nabla Q(\mathbf{x})-\nabla Q(\mathbf{y})}_2\leq L\norm*{\mathbf{x} - \mathbf{y}}_2,
\end{equation}
where $\nabla Q(\mathbf{x})\in \R^d$ is the \emph{gradient} of $Q$ at $\mathbf{x}$.
The gradient at $\mathbf{x}\in \R^d$ can be estimated by 
the \emph{stochastic gradient} $G(\mathbf{x}, z) = \nabla \loss(\mathbf{x},z) \in \R^d$, 
calculated at a data point $z$ that is drawn uniformly at random from $\dd$. 
The stochastic gradient is an \emph{unbiased estimator} of the true gradient
\begin{equation}
\label{unbiased-est}
\E_{z\sim \dd} \bracks*{G(\mathbf{x},z)} = \nabla Q(\mathbf{x}).
\end{equation}
In addition, the estimations have \emph{bounded variance}, 
i.e., there is a non-negative constant $\sigma\in \R$ such that
\begin{equation}
\label{bounded-var}
\Var_{z\sim \dd} \bracks*{G(\mathbf{x}, z)} = \E_{z\sim \dd} \bracks*{\norm*{G(\mathbf{x}, z) - \nabla Q(\mathbf{x})}^2_2} \leq \sigma^2 .
\end{equation}
These are standard assumptions in SGD analysis~\cite{Bubeck15,bottou2018optimization,GhadimiL13a}.

At the end of the algorithm, each nonfaulty process $i$ outputs an 
estimate of the learning parameter, $\mathbf{x}^i \in \R^d$. 
We require the algorithm to \emph{externally converge 
with expected error $\epsconverge>0$} (simply called 
\emph{to converge} in the optimization literature). 
The convergence requirement expresses the quality of the solution 
relative to a minimal one. 
It varies according to the assumptions on the cost function, 
whether it is strongly-convex (Section~\ref{section:strongly convex}) 
or not (Section~\ref{section:non convex}).

The algorithm \emph{internally converge with expected error $\delta>0$},
if for every pair of nonfaulty processes $i$ and $j$ and valid execution tree of the algorithm $\mathcal{T}$:
\begin{equation}
    \label{diff-require}
    \E_{\mathcal{T}} \bracks*{\norm*{\mathbf{x}^i - \mathbf{x}^j}_2^2} \leq \epsmodel.
\end{equation}
We often omit the subscript of $\mathcal{T}$ from $\E$ when it is clear from the context.

As we only consider valid execution trees, we assume a {weak adversary} that does not observe the local state 
of the processes, and is oblivious to their local coin flips 
when scheduling the processes.
\section{Strongly-Convex Cost Functions}

\label{section:strongly convex}

We start by considering a strongly convex function $Q$. 
Formally, $Q$ is \emph{$\mu$-strongly convex}, for $\mu > 0$, 
if for every $\mathbf{x},\mathbf{y}\in \R^d$,
\begin{equation*}
    Q(\mathbf{y}) \geq Q(\mathbf{x}) + \tup*{\nabla Q(\mathbf{x}), \mathbf{y} - \mathbf{x}} + \frac{\mu}{2} \norm*{\mathbf{y} - \mathbf{x}}_2^2.
\end{equation*}
A strongly convex cost function has a single minimum, denoted $\mathbf{x}^*$.
In this case, an algorithm externally converges if
for every nonfaulty process $i$ and valid execution tree of the algorithm $\mathcal{T}$,
\begin{equation}
    \E_{\mathcal{T}} \bracks*{\norm*{\mathbf{x}^i - \mathbf{x}^*}_2^2} \leq \epsconverge.
\end{equation}
Following standard analysis~\cite{bottou2018optimization}, 
the convergence rate of vanilla SGD in this case
is $\mathcal{O}\parens*{1/{T}}$ after $T$ iterations.
That is, $\E \bracks*{\norm*{\mathbf{x}_T - \mathbf{x}^*}_2^2} \leq C/T$ 
for a constant $C$ depending on $\mu$, $L$, $\sigma$ 
and $\norm*{\mathbf{x}_1 - \mathbf{x}^*}_2^2$.
This implies that $\E \bracks*{\norm*{\mathbf{x}_T - \mathbf{x}^*}_2^2} \leq \epsconverge$
after $T=\mathcal{O}\parens*{\epsilon^{-1}}$ iterations.
Using a minibatch of size $b$ gives a linear speedup 
in convergence rate to $\mathcal{O}\parens*{{1}/b T}$.
This will serve as the baseline for the external convergence rate.

\begin{algorithm}[tb]
    \caption{
    \small{Cluster-based SGD, for strongly-convex function: code for process $i$}}
    \label{alg:dis-sgd-strongly-convex}
	\begin{algorithmic}[1]
	    \Statex \textbf{Local input:} initial point $\mathbf{x}^i_1$
		\For{$t=1\ldots T$}
		\State draw uniformly at random $z_t^i \in \dd$ 
        \label{lin:random}
		\State $\mathbf{g}_t^i \gets G\parens*{\mathbf{x}_t^i, z_t^i}$
		\State $\mathbf{y}_{t}^i \gets \mathbf{x}_t^i - \eta_t \mathbf{g}_t^i$
        \State broadcast $\tup*{t,\mathbf{y}_{t}^i}$ to all processes
        \label{lin:sent-msg}
        \State wait to receive $\thresh$ messages of the form $\tup{t,-}$
        \State $\mathbf{x}_{t+1}^i \gets \mathrm{avg}(\text{received learning parameters})$
        \label{lin:x_t+1}
        \EndFor
        \State output $\mathbf{x}_{T+1}^i$
	\end{algorithmic}
\end{algorithm}

Algorithm~\ref{alg:dis-sgd-strongly-convex} works in iterations,
corresponding to those of sequential SGD. 
A process starts an iteration $t$ with a local learning parameter,
and computes a new one for the next iteration.
First, the process computes a stochastic gradient using its current 
learning parameter,
performs a local SGD step,
and sends the updated learning parameter to all the other processes.
After receiving learning parameters for iteration $t$
from $\thresh$ processes, 
the process averages all the parameters and sets this value to be its learning parameter of the next iteration.
In the last iteration, each process outputs the last learning parameter
it has computed.
The algorithm ignores \emph{stale} gradients from previous iterations, 
i.e., gradients computed using learning parameter of a previous round.

The value of $\thresh$ can be arbitrary;
we only assume that $\thresh \leq n - \fp$, to ensure progress.
As we show, the larger $\thresh$ is, 
the better the convergence of the algorithm is.
The learning parameter for the first iteration of process $i$
is $\mathbf{x}_1^i\in \R^d$. Note that the learning parameters of the first iteration may be different across the processes.
The \emph{learning rate} for iteration $t$ is $\eta_t$, and for each $t$,
it is the same for all processes; 
the learning rate set in Algorithm~\ref{alg:dis-sgd-strongly-convex} is decreasing, i.e., 
$\eta_t = \mathcal{O} \parens*{1/t}$ for every iteration $t$.

Fix a valid execution tree $\mathcal{T}$, throughout the proof we simply use $\E$ instead of $\E_{\mathcal{T}}$.
Let $V_t$, $t \geq 1$, 
be the set of processes that compute learning parameters 
for iteration $t+1$ in Line~\ref{lin:x_t+1}.
$V_0$ is the set of processes that execute the first line.
These sets are well defined over $\mathcal{T}$, since the underlying process schedule of each execution in the tree is the same, and the randomness in Line~\ref{lin:random} only effects the values sent by the processes.
Note that $V_{t}\subseteq V_{t-1}$, for $t\geq 1$,  
and all processes that send a message in iteration $t$ 
(Line~\ref{lin:sent-msg}) are in $V_{t-1}$.
We use the notation $\E_t$ and $\Var_t$ for the expectation and variance over iteration $t$, 
i.e., the expectation only over the randomness of the stochastic gradients computed in iteration $t$. {Note that the statement and proof of some useful mathematical propositions are deferred to Appendix~\ref{app:use-props}.}

We can prove that the average of $b$ stochastic gradients, 
each computed using different learning parameters, 
has variance of at most ${\sigma^2}/{b}$. (See Appendix~\ref{app:add-proofs} for the proof.)

\begin{restatable}{lemma}{vartotal}
\label{var-total}
For any $\mathbf{x}_1, \ldots, \mathbf{x}_b \in \R^d$ and 
$z_1,\ldots, z_b$ drawn uniformly at random from $\dd$,
\begin{equation*}
    \Var \bracks*{\frac{1}{b}\sum_{i=1}^b G(\mathbf{x}_i,z_i)} \leq \frac{\sigma^2}{b}.
\end{equation*}
\end{restatable}
%
%
%
Since Line~\ref{lin:x_t+1} in Algorithm~\ref{alg:dis-sgd-strongly-convex}
averages the received parameters and the weak adversary cannot
determine the learning parameters that arrive at each process,
Lemma~\ref{var-total} implies:
\begin{lemma}
\label{var-total-strongly-convex}
For any iteration $t\geq 1$ {}and random set of processes $S\subseteq V_t$, 
$\Var \bracks*{\frac{1}{|S|} \sum_{i\in S} \mathbf{g}_t^i} \leq \frac{\sigma^2}{|S|}$
\end{lemma}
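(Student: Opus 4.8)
The plan is to reduce directly to Lemma~\ref{var-total}. First I would unpack the definition of $\mathbf{g}_t^i$: by Lines~3--4 of Algorithm~\ref{alg:dis-sgd-strongly-convex}, process $i$ sets $\mathbf{g}_t^i = G(\mathbf{x}_t^i, z)$, where $z$ is drawn uniformly at random from $\dd$ at the start of iteration $t$, independently of the samples drawn by the other processes and of all samples drawn in earlier iterations. Enumerating $S = \{i_1,\dots,i_B\}$ with $B = |S|$, the family $\{\mathbf{g}_t^i : i \in S\}$ is therefore exactly a family of stochastic gradients evaluated at the (possibly distinct) points $\mathbf{x}_t^{i_1},\dots,\mathbf{x}_t^{i_B}$ using mutually independent data points. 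Applying Lemma~\ref{var-total} with these $B$ points and samples then yields $\Var\bracks*{\frac{1}{|S|}\sum_{i\in S}\mathbf{g}_t^i}\le\frac{\sigma^2}{|S|}$, which is the claim.

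The one point that needs care is the status of the learning parameters $\mathbf{x}_t^i$ and of the set $S\subseteq V_t$: both are determined by the randomness of iterations $1,\dots,t-1$ together with the (adversarial) message schedule, so they are not deterministic. I would handle this by conditioning on the history through the end of iteration $t-1$ (equivalently, on the values $\{\mathbf{x}_t^i\}$ and on $S$): Lemma~\ref{var-total} applies conditionally since it holds for \emph{arbitrary} fixed points, and because its bound $\sigma^2/|S|$ depends only on $|S|$ and not on the points, the conditional bound transfers to the stated (conditional) variance. The only thing one must check is that the schedule cannot correlate with the fresh sample used at iteration $t$ — which holds because each process draws its iteration-$t$ sample only after $\mathbf{x}_t^i$ is fixed, and its value is not revealed to the adversary before being consumed. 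I expect this bookkeeping to be the only nontrivial part; the estimate itself is immediate from Lemma~\ref{var-total}.
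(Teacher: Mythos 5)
Your proposal is correct and matches the paper, which states this lemma as an immediate corollary of Lemma~\ref{var-total} since each $\mathbf{g}_t^i$ is a stochastic gradient computed with a fresh independent sample. Your extra bookkeeping about conditioning on the history is sound and is exactly what the paper handles implicitly through its $\E_t$ notation (expectation over the iteration-$t$ randomness only).
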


We first show that the diameter of the parameters at each iteration 
is contracted compared to the diameter of the previous iteration,
even if processes communicate with disjoint sets of processes, 
e.g., due to a system partition.
This contraction happens implicitly due to strong convexity
since intuitively, all the processes gravitate to the unique minimum.
Using this, we will show that the algorithm achieves internal convergence. For this purpose, we use the next lemma:
\begin{lemma}[{\cite[Lemma 6]{karimireddy2021scaffold}}]
\label{contractive-mapping}
    Let $Q$ be an $L$-smooth and $\mu$-strongly convex function and let $\eta \leq \frac{1}{L}$. Then for any $\mathbf{x},\mathbf{y}\in \R^d$,
    \begin{equation*}
        \norm*{\mathbf{x} - \eta \nabla Q(\mathbf{x}) - \mathbf{y} + \eta \nabla Q(\mathbf{y})}_2^2 \leq \parens*{1 - \eta \mu} \norm*{\mathbf{x} - \mathbf{y}}_2^2
    \end{equation*}
\end{lemma}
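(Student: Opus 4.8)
The plan is to reduce the statement to two classical first-order characterizations of smooth strongly-convex functions, applied to the pair $\mathbf{x},\mathbf{y}$: the \emph{monotonicity} estimate $\tup*{\nabla Q(\mathbf{x}) - \nabla Q(\mathbf{y}),\mathbf{x}-\mathbf{y}} \geq \mu\norm*{\mathbf{x}-\mathbf{y}}_2^2$, and \emph{co-coercivity} of the gradient, $\tup*{\nabla Q(\mathbf{x}) - \nabla Q(\mathbf{y}),\mathbf{x}-\mathbf{y}} \geq \tfrac{1}{L}\norm*{\nabla Q(\mathbf{x}) - \nabla Q(\mathbf{y})}_2^2$. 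Writing $\mathbf{d} = \mathbf{x}-\mathbf{y}$ and $\mathbf{h} = \nabla Q(\mathbf{x}) - \nabla Q(\mathbf{y})$ and expanding the Euclidean square, $\norm*{\mathbf{d} - \eta\mathbf{h}}_2^2 = \norm*{\mathbf{d}}_2^2 - 2\eta\tup*{\mathbf{d},\mathbf{h}} + \eta^2\norm*{\mathbf{h}}_2^2$, so the whole argument amounts to controlling the last two terms with these two inequalities.

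First I would derive the monotonicity estimate: apply the strong-convexity inequality \eqref{strong-convex} once with the pair $(\mathbf{x},\mathbf{y})$ and once with $(\mathbf{y},\mathbf{x})$ and add; the function values cancel and what remains is $\tup*{\mathbf{h},\mathbf{d}} \geq \mu\norm*{\mathbf{d}}_2^2$. Next comes the only genuinely nontrivial step, co-coercivity. Since $\mu$-strong convexity implies convexity, I would fix $\mathbf{x}$ and consider the auxiliary function $\phi(\mathbf{z}) = Q(\mathbf{z}) - \tup*{\nabla Q(\mathbf{x}),\mathbf{z}}$, which is convex, $L$-smooth (its gradient $\nabla Q(\mathbf{z}) - \nabla Q(\mathbf{x})$ is $L$-Lipschitz by \eqref{l-lipsc}), and minimized at $\mathbf{z}=\mathbf{x}$. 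Applying the descent bound \eqref{descent-lemma} for $\phi$ at the point $\mathbf{z} - \tfrac{1}{L}\nabla\phi(\mathbf{z})$ and using $\phi(\mathbf{x}) \leq \phi\parens*{\mathbf{z} - \tfrac{1}{L}\nabla\phi(\mathbf{z})}$ yields $\phi(\mathbf{x}) \leq \phi(\mathbf{z}) - \tfrac{1}{2L}\norm*{\nabla Q(\mathbf{z}) - \nabla Q(\mathbf{x})}_2^2$; unfolding the definition of $\phi$ gives $Q(\mathbf{x}) \leq Q(\mathbf{z}) + \tup*{\nabla Q(\mathbf{x}),\mathbf{x}-\mathbf{z}} - \tfrac{1}{2L}\norm*{\nabla Q(\mathbf{z})-\nabla Q(\mathbf{x})}_2^2$, and symmetrizing in $\mathbf{x},\mathbf{z}$ (adding the analogous inequality with the roles of $\mathbf{x}$ and $\mathbf{z}$ swapped, taking $\mathbf{z}=\mathbf{y}$) produces $\tup*{\mathbf{h},\mathbf{d}} \geq \tfrac{1}{L}\norm*{\mathbf{h}}_2^2$.

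Finally I would assemble the pieces. Since $\eta \leq \tfrac{1}{L}$, we have $\eta^2\norm*{\mathbf{h}}_2^2 \leq \tfrac{\eta}{L}\norm*{\mathbf{h}}_2^2 \leq \eta\tup*{\mathbf{d},\mathbf{h}}$ by co-coercivity, hence $\norm*{\mathbf{d} - \eta\mathbf{h}}_2^2 \leq \norm*{\mathbf{d}}_2^2 - \eta\tup*{\mathbf{d},\mathbf{h}}$; the monotonicity estimate then gives $\norm*{\mathbf{d} - \eta\mathbf{h}}_2^2 \leq (1-\eta\mu)\norm*{\mathbf{d}}_2^2$, which is exactly the claim.

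The main obstacle is the co-coercivity step; the rest is bookkeeping. It is worth noting why co-coercivity is needed: bounding $\eta^2\norm*{\mathbf{h}}_2^2$ crudely via $\norm*{\mathbf{h}}_2 \leq L\norm*{\mathbf{d}}_2$ would only give the factor $1 - 2\eta\mu + \eta^2 L^2$, which is at most $1-\eta\mu$ only under the stronger condition $\eta \leq \mu/L^2$; co-coercivity is precisely what lets the clean factor $1-\eta\mu$ survive under the milder hypothesis $\eta \leq 1/L$. (Since the lemma is quoted from \cite{karimireddy2021scaffold}, one may alternatively just cite it; the sketch above is the standard argument.)
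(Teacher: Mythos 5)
Your proof is correct: the expansion of the square, the strong-monotonicity bound obtained by symmetrizing \eqref{strong-convex}, and the co-coercivity bound obtained from the auxiliary function $\phi(\mathbf{z}) = Q(\mathbf{z}) - \tup*{\nabla Q(\mathbf{x}),\mathbf{z}}$ all check out, and the assembly under $\eta \leq 1/L$ gives exactly the claimed factor $1-\eta\mu$. Note that the paper itself does not prove this lemma at all --- it is imported verbatim as Lemma~6 of \cite{karimireddy2021scaffold} --- so your write-up is a self-contained reconstruction of the standard argument behind the cited result rather than an alternative to anything in the paper; your closing remark about why the crude bound $\norm*{\mathbf{h}}_2 \leq L\norm*{\mathbf{d}}_2$ would force the stronger stepsize condition $\eta \leq \mu/L^2$ is a nice touch that correctly identifies co-coercivity as the essential ingredient.
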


\begin{lemma}
\label{strong-convex-step-diam}
    Let $Q$ be an $L$-smooth and $\mu$-strongly convex function, 
    then for every iteration $t \geq 1$ where $\eta_t \leq \frac{1}{L}$,
    \begin{equation*}
        \max_{i,j\in V_t} \E \bracks*{\norm*{\mathbf{x}_{t+1}^i - \mathbf{x}_{t+1}^j}_2^2}
        \leq \parens*{1 - \eta_t\mu} \max_{i,j \in V_{t-1}} \E \bracks*{\norm*{\mathbf{x}_t^i - \mathbf{x}_t^j}_2^2} + \frac{4\sigma^2 \eta_t^2}{\thresh}
\end{equation*}
\end{lemma}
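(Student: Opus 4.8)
The plan is to fix $i, j \in V_t$ and expand $\mathbf{x}_{t+1}^i - \mathbf{x}_{t+1}^j$ in terms of the quantities sent in iteration $t$. By Line~\ref{lin:x_t+1}, $\mathbf{x}_{t+1}^i = \frac{1}{N}\sum_{k \in S_i} \mathbf{y}_t^k$ where $S_i \subseteq V_t$ is the set of $N$ processes whose messages $i$ received, and similarly $\mathbf{x}_{t+1}^j = \frac{1}{N}\sum_{k \in S_j} \mathbf{y}_t^k$ with $|S_i| = |S_j| = N$. Writing $\mathbf{y}_t^k = \mathbf{x}_t^k - \eta_t \mathbf{g}_t^k = \mathbf{x}_t^k - \eta_t \nabla Q(\mathbf{x}_t^k) - \eta_t(\mathbf{g}_t^k - \nabla Q(\mathbf{x}_t^k))$, I would split the difference into a ``deterministic-gradient-step'' part and a ``noise'' part. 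Concretely,
\begin{equation*}
\mathbf{x}_{t+1}^i - \mathbf{x}_{t+1}^j = \underbrace{\tfrac{1}{N}\!\sum_{k\in S_i}\!\big(\mathbf{x}_t^k - \eta_t\nabla Q(\mathbf{x}_t^k)\big) - \tfrac{1}{N}\!\sum_{k\in S_j}\!\big(\mathbf{x}_t^k - \eta_t\nabla Q(\mathbf{x}_t^k)\big)}_{=:A} \;-\; \eta_t\underbrace{\Big(\tfrac{1}{N}\!\sum_{k\in S_i}\!\boldsymbol{\xi}_t^k - \tfrac{1}{N}\!\sum_{k\in S_j}\!\boldsymbol{\xi}_t^k\Big)}_{=:\,\boldsymbol{\zeta}},
\end{equation*}
where $\boldsymbol{\xi}_t^k = \mathbf{g}_t^k - \nabla Q(\mathbf{x}_t^k)$ is the zero-mean gradient noise at process $k$. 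Since $\E_t[\boldsymbol{\zeta}] = 0$ and $A$ is a function only of the iterates $\mathbf{x}_t^k$ (which are determined before iteration $t$'s randomness), the cross term vanishes under $\E_t$, giving $\E_t\norm{\mathbf{x}_{t+1}^i - \mathbf{x}_{t+1}^j}_2^2 = \norm{A}_2^2 + \eta_t^2\,\E_t\norm{\boldsymbol{\zeta}}_2^2$.

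For the noise term: $\boldsymbol{\zeta}$ is a signed combination of at most $2N$ gradient noises $\boldsymbol{\xi}_t^k$, but each $k$ appearing in $S_i \cap S_j$ cancels, so $\boldsymbol{\zeta} = \frac{1}{N}\big(\sum_{k \in S_i \setminus S_j}\boldsymbol{\xi}_t^k - \sum_{k \in S_j \setminus S_i}\boldsymbol{\xi}_t^k\big)$, a sum of at most $2N$ independent zero-mean terms each with variance $\le \sigma^2$ (by~\eqref{bounded-var}); hence $\E_t\norm{\boldsymbol{\zeta}}_2^2 \le \frac{2N\sigma^2}{N^2} = \frac{2\sigma^2}{N}$. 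Actually, to match the stated constant $4\sigma^2\eta_t^2/N$ cleanly I would instead bound $\norm{\boldsymbol{\zeta}}_2^2 \le 2\norm{\frac{1}{N}\sum_{k\in S_i}\boldsymbol{\xi}_t^k}_2^2 + 2\norm{\frac{1}{N}\sum_{k\in S_j}\boldsymbol{\xi}_t^k}_2^2$ and apply Lemma~\ref{var-total-strongly-convex} to each of the two $N$-term averages, getting $\E_t\norm{\boldsymbol{\zeta}}_2^2 \le 2\cdot\frac{\sigma^2}{N} + 2\cdot\frac{\sigma^2}{N} = \frac{4\sigma^2}{N}$, which yields exactly the $\frac{4\sigma^2\eta_t^2}{N}$ additive term.

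For the deterministic term $A$: I would write $A = \frac{1}{N}\sum_{k\in S_i}\phi(\mathbf{x}_t^k) - \frac{1}{N}\sum_{k\in S_j}\phi(\mathbf{x}_t^k)$ where $\phi(\mathbf{x}) = \mathbf{x} - \eta_t\nabla Q(\mathbf{x})$ is the gradient map. This is a difference of two averages of $\phi$ over $N$-subsets of $V_t \subseteq V_{t-1}$, so it lies in the convex hull of differences $\{\phi(\mathbf{x}_t^a) - \phi(\mathbf{x}_t^b) : a, b \in V_{t-1}\}$ — formally, $A = \frac{1}{N}\sum_{\ell=1}^{N}(\phi(\mathbf{x}_t^{a_\ell}) - \phi(\mathbf{x}_t^{b_\ell}))$ under an arbitrary pairing of $S_i \setminus S_j$ with $S_j \setminus S_i$ and cancellation on the intersection. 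By convexity of the squared norm (Jensen) and then Lemma~\ref{contractive-mapping} applied to each pair (valid since $\eta_t \le \tfrac1L$), $\norm{A}_2^2 \le \frac{1}{N}\sum_\ell \norm{\phi(\mathbf{x}_t^{a_\ell}) - \phi(\mathbf{x}_t^{b_\ell})}_2^2 \le \frac{1}{N}\sum_\ell (1-\eta_t\mu)\norm{\mathbf{x}_t^{a_\ell} - \mathbf{x}_t^{b_\ell}}_2^2 \le (1-\eta_t\mu)\max_{a,b\in V_{t-1}}\norm{\mathbf{x}_t^a - \mathbf{x}_t^b}_2^2$. Taking $\E_t$, then total expectation, then the max over $i,j \in V_t$, and combining with the noise bound finishes the proof. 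The main obstacle is the bookkeeping for $A$: getting the cancellation/pairing argument right so that $A$ is genuinely expressed as an average of $N$ pairwise differences with indices in $V_{t-1}$, and being careful that $V_t \subseteq V_{t-1}$ is what lets the pairwise contraction bound refer to the previous iteration's diameter. Everything else is Jensen plus Lemmas~\ref{contractive-mapping} and~\ref{var-total-strongly-convex}.
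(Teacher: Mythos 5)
Your proposal follows essentially the same route as the paper's proof: the bias--variance split of $\E_t\bracks*{\norm*{\mathbf{x}_{t+1}^i-\mathbf{x}_{t+1}^j}_2^2}$ into a gradient-map term plus noise (Proposition~\ref{prop:sep-mean-var}), the $4\sigma^2\eta_t^2/N$ noise bound via Proposition~\ref{prop:var-diff} and Lemma~\ref{var-total-strongly-convex}, and Jensen plus Lemma~\ref{contractive-mapping} applied to an arbitrary pairing of the two received sets (the paper simply enumerates $S_1=\{i_1,\dots,i_N\}$, $S_2=\{j_1,\dots,j_N\}$ and pairs $i_k$ with $j_k$; your cancellation over $S_i\cap S_j$ and the sharper $2\sigma^2/N$ noise bound are harmless refinements). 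One small correction at the very end: do not take $\max_{a,b\in V_{t-1}}\norm*{\mathbf{x}_t^a-\mathbf{x}_t^b}_2^2$ pointwise before applying the expectation, since that yields $\E\bracks*{\max}\geq\max\E$ and hence a statement weaker than the lemma (and one that would not feed into the recursion of Lemma~\ref{bottou-1/t}); instead take expectations of the per-pair sum $\frac{1}{N}\sum_\ell\norm*{\mathbf{x}_t^{a_\ell}-\mathbf{x}_t^{b_\ell}}_2^2$ first and then bound each $\E\bracks*{\norm*{\mathbf{x}_t^{a_\ell}-\mathbf{x}_t^{b_\ell}}_2^2}$ by $\max_{a,b\in V_{t-1}}\E\bracks*{\norm*{\mathbf{x}_t^a-\mathbf{x}_t^b}_2^2}$, exactly as the paper does.
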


\begin{proof}
Consider two processes $i,j \in V_{t}$.
Let $S_1 = \{i_1,...,i_\thresh\}$ and $S_2 = \{j_1,...,j_\thresh\}$ be the sets of processes that were used to compute $\mathbf{x}^i_{t+1}$ and $\mathbf{x}^j_{t+1}$ in Line~\ref{lin:x_t+1}, respectively. 
\begin{align*}
    & \E_t \bracks*{\norm*{\mathbf{x}_{t+1}^{i} - \mathbf{x}_{t+1}^{j}}_2^2}
    = \E_t \bracks*{\norm*{\frac{1}{\thresh} \sum_{k\in S_1} \mathbf{y}_{t}^k - \frac{1}{\thresh} \sum_{k\in S_2} \mathbf{y}_{t}^k}_2^2} \\
    &= \E_t \bracks*{\norm*{\frac{1}{\thresh} \sum_{k=1}^\thresh \mathbf{x}_t^{i_k} - \frac{\eta_t}{\thresh} \sum_{k=1}^\thresh \mathbf{g}_t^{i_k} - \frac{1}{\thresh} \sum_{k=1}^\thresh \mathbf{x}_t^{j_k} + \frac{\eta_t}{\thresh} \sum_{k=1}^\thresh \mathbf{g}_t^{j_k}}_2^2} \\
    &= \norm*{\frac{1}{\thresh} \sum_{k=1}^\thresh \parens*{ \mathbf{x}_t^{i_k} - \eta_t \E_t \bracks*{\mathbf{g}_t^{i_k}} - \mathbf{x}_t^{j_k} + {\eta_t} \E_t \bracks*{\mathbf{g}_t^{j_k}}}}_2^2 + \Var \bracks*{\frac{\eta_t}{\thresh} \sum_{k=1}^\thresh \mathbf{g}_t^{j_k} - \frac{\eta_t}{\thresh} \sum_{k=1}^\thresh \mathbf{g}_t^{i_k}} && \explain{by  Proposition~\ref{prop:sep-mean-var}} \\
    &\leq \norm*{\frac{1}{\thresh} \sum_{k=1}^\thresh \parens*{ \mathbf{x}_t^{i_k} - \eta_t \E_t \bracks*{\mathbf{g}_t^{i_k}} - \mathbf{x}_t^{j_k} + {\eta_t} \E_t \bracks*{\mathbf{g}_t^{j_k}}}}_2^2 + 2\Var \bracks*{\frac{\eta_t}{\thresh} \sum_{k=1}^\thresh \mathbf{g}_t^{j_k}} + 2 \Var \bracks*{\frac{\eta_t}{\thresh} \sum_{k=1}^\thresh \mathbf{g}_t^{i_k}} && \explain{by Proposition~\ref{prop:var-diff}} \\
    &\leq \norm*{\frac{1}{\thresh} \sum_{k=1}^\thresh \parens*{\mathbf{x}_t^{i_k} - \eta_t \nabla Q(\mathbf{x}_t^{i_k}) - \mathbf{x}_t^{j_k} + \eta_t \nabla Q(\mathbf{x}_t^{j_k})}}_2^2 + \frac{4\sigma^2\eta_t^2}{\thresh} \\
    &\leq \frac{1}{\thresh} \sum_{k=1}^\thresh  \norm*{\mathbf{x}_t^{i_k} - \eta_t \nabla Q(\mathbf{x}_t^{i_k}) - \mathbf{x}_t^{j_k} + \eta_t \nabla Q(\mathbf{x}_t^{j_k})}_2^2 + \frac{4\sigma^2\eta_t^2}{\thresh} && \explain{by \eqref{relaxed-triangle-ineq2}} 
\end{align*}
In the second to last inequality we use \eqref{unbiased-est} for the expectation value and Lemma~\ref{var-total-strongly-convex} for the variance value.
Therefore,
\begin{align*}
    &\E \bracks*{\norm*{\mathbf{x}_{t+1}^{i} - \mathbf{x}_{t+1}^{j}}_2^2}
    = \E \bracks*{\E_t \bracks*{\norm*{\mathbf{x}_{t+1}^{i} - \mathbf{x}_{t+1}^{j}}_2^2}} \\
    &\leq \frac{1}{\thresh} \sum_{k=1}^\thresh  \E \bracks*{\norm*{\mathbf{x}_t^{i_k} - \eta_t \nabla Q(\mathbf{x}_t^{i_k}) - \mathbf{x}_t^{j_k} + \eta_t \nabla Q(\mathbf{x}_t^{j_k})}_2^2} + \frac{4\sigma^2\eta_t^2}{\thresh}\\
    &\leq \frac{1}{\thresh} \sum_{k=1}^\thresh \parens*{1 - \eta_t \mu} \E \bracks*{\norm*{\mathbf{x}_t^{i_k} - \mathbf{x}_t^{j_k}}_2^2} + \frac{4\sigma^2 \eta_t^2}{\thresh} && \explain{by Lemma~\ref{contractive-mapping}}\\
    &\leq \parens*{1 - \eta_t \mu} \max_{k,l\in V_t} \E \bracks*{\norm*{\mathbf{x}_t^{k} - \mathbf{x}_t^{l}}_2^2} + \frac{4\sigma^2 \eta_t^2}{\thresh}
\end{align*}
\end{proof}

The next lemma is useful to prove convergence for strongly convex functions. 
\begin{lemma}[{\cite[Lemma 9]{Alistarh2021-ElasticConsistency}}]
\label{eq:strong-convex}
    Let $Q$ be an $L$-smooth and $\mu$-strongly convex function with a single minimum $\mathbf{x}^*$, then for any $\mathbf{x} \in \R^d$
    \begin{equation*}
        \tup{\nabla Q(\mathbf{x}), \mathbf{x} - \mathbf{x}^*} \geq \frac{1}{2L}\norm*{\nabla Q(\mathbf{x})}_2^2 + \frac{\mu}{2} \norm*{\mathbf{x} - \mathbf{x}^*}_2^2
    \end{equation*}
\end{lemma}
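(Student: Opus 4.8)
The plan is to obtain the bound by \emph{chaining} two one-sided inequalities — one that uses only $\mu$-strong convexity, and one that uses only $L$-smoothness — rather than averaging them, which would cost a factor of two in both coefficients.

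First I would instantiate the strong-convexity inequality \eqref{strong-convex} at $\mathbf{y} = \mathbf{x}^*$ for the given point $\mathbf{x}$. Since $\nabla Q(\mathbf{x}^*) = 0$, rearranging gives
\begin{equation*}
  \tup{\nabla Q(\mathbf{x}), \mathbf{x} - \mathbf{x}^*} \;\geq\; \parens*{Q(\mathbf{x}) - Q(\mathbf{x}^*)} + \frac{\mu}{2}\norm*{\mathbf{x} - \mathbf{x}^*}_2^2 .
\end{equation*}
This already contributes the full $\frac{\mu}{2}\norm*{\mathbf{x}-\mathbf{x}^*}_2^2$ term, so it remains only to lower bound the suboptimality gap $Q(\mathbf{x}) - Q(\mathbf{x}^*)$ by $\frac{1}{2L}\norm*{\nabla Q(\mathbf{x})}_2^2$.

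For that I would apply the descent inequality \eqref{descent-lemma} with $\mathbf{y} = \mathbf{x} - \tfrac{1}{L}\nabla Q(\mathbf{x})$: the linear term becomes $-\tfrac{1}{L}\norm*{\nabla Q(\mathbf{x})}_2^2$ and the quadratic term $\tfrac{1}{2L}\norm*{\nabla Q(\mathbf{x})}_2^2$, so $Q(\mathbf{y}) \leq Q(\mathbf{x}) - \tfrac{1}{2L}\norm*{\nabla Q(\mathbf{x})}_2^2$. Since $\mathbf{x}^*$ is the (unique) global minimum of $Q$, we have $Q(\mathbf{x}^*) \leq Q(\mathbf{y})$, hence $Q(\mathbf{x}) - Q(\mathbf{x}^*) \geq \tfrac{1}{2L}\norm*{\nabla Q(\mathbf{x})}_2^2$. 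Substituting this into the displayed inequality above yields the claim exactly (and, incidentally, no relation between $\mu$ and $L$ is needed).

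The only real pitfall is the natural but lossy temptation to lower bound $\tup{\nabla Q(\mathbf{x}), \mathbf{x}-\mathbf{x}^*}$ \emph{twice} — once through strong convexity and once through smoothness — and then average the two estimates, which would yield the weaker coefficients $\tfrac{1}{4L}$ and $\tfrac{\mu}{4}$. Keeping the $Q(\mathbf{x}) - Q(\mathbf{x}^*)$ term from strong convexity intact and only afterwards estimating it via smoothness is what recovers the full constants. Everything beyond this observation is routine substitution using only \eqref{strong-convex}, \eqref{descent-lemma}, and $\nabla Q(\mathbf{x}^*)=0$.
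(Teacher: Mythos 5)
Your proof is correct: chaining the strong-convexity inequality at $\mathbf{y}=\mathbf{x}^*$ with the bound $Q(\mathbf{x})-Q(\mathbf{x}^*)\geq \frac{1}{2L}\norm*{\nabla Q(\mathbf{x})}_2^2$ obtained from \eqref{descent-lemma} at $\mathbf{y}=\mathbf{x}-\frac{1}{L}\nabla Q(\mathbf{x})$ gives exactly the stated constants, and this is the standard argument for this lemma. The paper itself imports the statement from the cited reference without proof, so there is nothing different to compare against; the only cosmetic remark is that your appeal to $\nabla Q(\mathbf{x}^*)=0$ is unnecessary, since the first step only uses the minimality $Q(\mathbf{x}^*)\leq Q(\mathbf{y})$ and the rearrangement of \eqref{strong-convex}.
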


The external convergence rate in the strongly-convex case does not depend on the diameter of the learning parameters at the same iteration:
\begin{lemma}
\label{strong-convex-step}
    Let $Q$ be an $L$-smooth and $\mu$-strongly convex function with a single minimum $\mathbf{x}^*$. Then, for every iteration $t\geq 1$ where $\eta_t \leq \frac{1}{L}$,
    \begin{equation*}
        \max_{i\in V_t} \E \bracks*{\norm*{\mathbf{x}^i_{t+1} - \mathbf{x}^*}_2^2} \leq \parens*{1 - \eta_t \mu} \max_{i\in V_{t-1}} \E \bracks*{\norm*{\mathbf{x}_t^i - \mathbf{x}^*}_2^2} + \frac{\sigma^2 \eta_t^2}{\thresh}
    \end{equation*}
\end{lemma}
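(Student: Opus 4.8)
The plan is to follow the same template as the proof of Lemma~\ref{strong-convex-step-diam}, but measuring distance to $\mathbf{x}^*$ instead of between two processes. Fix $i \in V_t$ and let $S = \{i_1,\dots,i_N\}$ be the set of processes whose iteration-$t$ parameters process $i$ averaged in Line~\ref{lin:x_t+1}, so that $\mathbf{x}^i_{t+1} = \frac1N\sum_{k=1}^N \mathbf{x}^{i_k}_t - \frac{\eta_t}{N}\sum_{k=1}^N \mathbf{g}^{i_k}_t$. Conditioning on the execution through the start of iteration $t$, the vectors $\mathbf{x}^{i_k}_t$ are deterministic, $\frac1N\sum_k \mathbf{g}^{i_k}_t$ is an unbiased estimator of $\frac1N\sum_k \nabla Q(\mathbf{x}^{i_k}_t)$ by~\eqref{unbiased-est}, and by Lemma~\ref{var-total-strongly-convex} its variance is at most $\sigma^2/N$. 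Applying the bias--variance split (Proposition~\ref{prop:sep-mean-var}) with the constant vector $\mathbf{x}^*$ then gives
\[
\E_t\bracks*{\norm*{\mathbf{x}^i_{t+1} - \mathbf{x}^*}_2^2} \le \norm*{\tfrac1N\sum_{k=1}^N\parens*{\mathbf{x}^{i_k}_t - \eta_t\nabla Q(\mathbf{x}^{i_k}_t)} - \mathbf{x}^*}_2^2 + \frac{\sigma^2\eta_t^2}{N} ;
\]
note the variance contribution is $\sigma^2\eta_t^2/N$, not the $4\sigma^2\eta_t^2/N$ of Lemma~\ref{strong-convex-step-diam}, since here there is only a single average of $N$ gradients rather than a difference of two.

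Next I would rewrite $\mathbf{x}^* = \frac1N\sum_{k=1}^N\parens*{\mathbf{x}^* - \eta_t\nabla Q(\mathbf{x}^*)}$, using $\nabla Q(\mathbf{x}^*) = 0$, and apply convexity of $\norm*{\cdot}_2^2$ (inequality~\eqref{relaxed-triangle-ineq2}) to bound the first term on the right by $\frac1N\sum_{k=1}^N\norm*{\mathbf{x}^{i_k}_t - \eta_t\nabla Q(\mathbf{x}^{i_k}_t) - \mathbf{x}^*}_2^2$. The key per-term estimate is that for any $\mathbf{x}\in\R^d$ and $\eta_t \le 1/L$,
\[
\norm*{\mathbf{x} - \eta_t\nabla Q(\mathbf{x}) - \mathbf{x}^*}_2^2 \le \parens*{1 - \eta_t\mu}\norm*{\mathbf{x} - \mathbf{x}^*}_2^2 ,
\]
which one proves by expanding the left side as $\norm*{\mathbf{x}-\mathbf{x}^*}_2^2 - 2\eta_t\tup*{\nabla Q(\mathbf{x}),\mathbf{x}-\mathbf{x}^*} + \eta_t^2\norm*{\nabla Q(\mathbf{x})}_2^2$, invoking Lemma~\ref{eq:strong-convex} on the inner-product term, and discarding the resulting nonpositive multiple $\eta_t^2 - \eta_t/L$ of $\norm*{\nabla Q(\mathbf{x})}_2^2$ (this is exactly where $\eta_t \le 1/L$ enters). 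Combining the two displays yields $\E_t\bracks*{\norm*{\mathbf{x}^i_{t+1}-\mathbf{x}^*}_2^2} \le \frac{1-\eta_t\mu}{N}\sum_{k=1}^N\norm*{\mathbf{x}^{i_k}_t-\mathbf{x}^*}_2^2 + \frac{\sigma^2\eta_t^2}{N}$.

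Finally I would take total expectation over the past, use that each $i_k$ sent a message in iteration $t$ and hence lies in $V_t \subseteq V_{t-1}$, so that $\E\bracks*{\norm*{\mathbf{x}^{i_k}_t-\mathbf{x}^*}_2^2} \le \max_{k\in V_{t-1}}\E\bracks*{\norm*{\mathbf{x}^k_t-\mathbf{x}^*}_2^2}$, and then take the maximum over $i \in V_t$ on the left. I do not expect a genuine obstacle here: the calculation is routine, and the only points needing care are the conditioning bookkeeping (treating the $\mathbf{x}^{i_k}_t$ as deterministic under $\E_t$ while the fresh stochastic gradients supply the variance) and the identification of the index set $S$ together with the containment $S \subseteq V_t \subseteq V_{t-1}$ --- handled exactly as in Lemma~\ref{strong-convex-step-diam}. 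One could alternatively apply Lemma~\ref{contractive-mapping} with $\mathbf{y} = \mathbf{x}^*$ in place of the per-term estimate, but since $\nabla Q(\mathbf{x}^*)=0$ it is equally quick to pass through Lemma~\ref{eq:strong-convex} directly, which also makes transparent why the contraction factor is precisely $1-\eta_t\mu$ and why no residual gradient term survives.
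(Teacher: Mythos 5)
Your proposal is correct and follows essentially the same route as the paper's proof: the same bias--variance split via Proposition~\ref{prop:sep-mean-var} with Lemma~\ref{var-total-strongly-convex} giving the $\sigma^2\eta_t^2/N$ term, the same Jensen-type step~\eqref{relaxed-triangle-ineq2}, and the same per-term expansion combined with Lemma~\ref{eq:strong-convex} and the condition $\eta_t \le 1/L$ to obtain the $(1-\eta_t\mu)$ contraction, before passing to total expectation and the maximum over $V_{t-1}$. The only cosmetic difference is that you isolate the per-term estimate as a standalone inequality (and note the alternative via Lemma~\ref{contractive-mapping}), whereas the paper performs the expansion inline within the averaged sum.
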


\begin{proof}
Let process $i\in V_t$ and let $S$ be the set of $\thresh$ processes 
that was used to compute $\mathbf{x}^i_{t+1}$ in Line~\ref{lin:x_t+1}.
    \begin{align*}
        &\E_t \bracks*{\norm*{\mathbf{x}^i_{t+1} - \mathbf{x}^*}_2^2} 
        = \E_t \bracks*{\norm*{\frac{1}{\thresh}\sum_{j\in S} \mathbf{y}_{t}^j - \mathbf{x}^*}_2^2}\\
        &= \E_t \bracks*{\norm*{\frac{1}{\thresh}\sum_{j\in S} \mathbf{x}_t^{j} - \frac{\eta_t}{\thresh}\sum_{j\in S} \mathbf{g}_t^j - \mathbf{x}^*}_2^2}\\
         &= \norm*{\frac{1}{\thresh}\sum_{j\in S} \mathbf{x}_t^{j} - \E_t \bracks*{\frac{\eta_t}{\thresh}\sum_{j\in S} \mathbf{g}_t^j} - \mathbf{x}^*}_2^2 + \Var \bracks*{\frac{\eta_t}{\thresh}\sum_{j\in S} \mathbf{g}_t^j} && \explain{by Proposition~\ref{prop:sep-mean-var}}\\
        &\leq \norm*{\frac{1}{\thresh}\sum_{j\in S} \mathbf{x}_t^{j} - \frac{\eta_t}{\thresh}\sum_{j\in S} \nabla Q(\mathbf{x}_t^{j}) - \mathbf{x}^*}_2^2 + \frac{\eta_t^2 \sigma^2}{\thresh} && \explain{by Lemma~\ref{var-total-strongly-convex} and \eqref{unbiased-est}}\\ 
        &\leq \frac{1}{\thresh}\sum_{j\in S} \norm*{\mathbf{x}_t^{j} - \eta_t \nabla Q(\mathbf{x}_t^{j}) - \mathbf{x}^*}_2^2 + \frac{\eta_t^2 \sigma^2}{\thresh} && \explain{by \eqref{relaxed-triangle-ineq2}}
    \end{align*}
    Note that
    \begin{align*}
        &\norm*{\mathbf{x}_t^{j} - \eta_t \nabla Q(\mathbf{x}_t^{j}) - \mathbf{x}^*}_2^2 \\
        &= \norm*{\mathbf{x}_t^j - \mathbf{x}^*}_2^2 - 2\eta_t \tup{\nabla Q(\mathbf{x}_t^j),\mathbf{x}_t^j - \mathbf{x}^*} + \eta_t^2 \norm*{\nabla Q(\mathbf{x}_t^j)}_2^2 \\
        &\leq \parens*{1 - \eta_t \mu}\norm*{\mathbf{x}_t^j - \mathbf{x}^*}_2^2 + \parens*{\eta_t^2 - \frac{\eta_t}{L}} \norm*{\nabla Q(\mathbf{x}_t^j)}_2^2 && \explain{by Lemma~\ref{eq:strong-convex}} \\
        &\leq \parens*{1 - \eta_t \mu}\norm*{\mathbf{x}_t^j - \mathbf{x}^*}_2^2 && \explain{$\eta_t \leq \frac{1}{L}$}
    \end{align*}
    Therefore,
    \begin{align*}
        \E \bracks*{\norm*{\mathbf{x}^i_{t+1} - \mathbf{x}^*}_2^2} 
        &= \E \bracks*{\E_t \bracks*{\norm*{\mathbf{x}^i_{t+1} - \mathbf{x}^*}_2^2}}\\
        &\leq \frac{1}{\thresh}\sum_{j\in S} \parens*{1 - \eta_t \mu} \E \bracks*{\norm*{\mathbf{x}_t^j - \mathbf{x}^*}_2^2} + \frac{\eta_t^2 \sigma^2}{\thresh} \\
        &\leq \parens*{1 - \eta_t \mu} \max_{j\in V_{t-1}} \E \bracks*{\norm*{\mathbf{x}_t^j - \mathbf{x}^*}_2^2} + \frac{\eta_t^2 \sigma^2}{\thresh}
    \end{align*}
\end{proof}

Note that the terms we get in Lemma~\ref{strong-convex-step-diam} 
and Lemma~\ref{strong-convex-step} are very similar to the ones obtained 
in~\cite[Section 4.2]{bottou2018optimization} for the strongly-convex case 
using mini-batches of size $\thresh$.
However, we bound the difference between the learning parameters themselves,
while they bound the difference between the function values 
at the learning parameters. 
The next lemma is adapted from \cite[Theorem~4.7]{bottou2018optimization} and the full proof appears in Appendix~\ref{app:add-proofs}.
\begin{restatable}{lemma}{bottouoneovert}
\label{bottou-1/t}
If for any $t\geq 1$,
\begin{equation*}
    a_{t+1} \leq \parens*{1-\eta_t \mu} a_t + \frac{c\eta_t^2}{b}
\end{equation*}
for some $c,\mu > 0$ and $b\geq1$. Then, for decreasing learning rate $\eta_t$, such that for all $t\geq 1$,
    $\eta_t = \frac{\beta}{\gamma + t}$
for some $\beta > \frac{1}{\mu}$ and $\gamma > 0$,
\begin{equation*}
    a_t \leq \frac{\nu}{b\parens*{\gamma + t}},
\end{equation*}
where $\nu = \max\braces*{\frac{\beta^2 c}{\beta\mu - 1},\parens*{\gamma + 1}a_1}$.
\end{restatable}

The next theorem shows that for strongly-convex functions our algorithm 
achieves the same convergence rate (both internal and external) 
as the sequential baseline. 
Intuitively, as there is a single minimum, all processes will converge 
to this point independently. 
Hence, despite different processes holding different learning parameters, 
the expected gradients will point in the same direction. 
This allows us to obtain terms similar to the ones in the classical 
strongly-convex analysis~\cite{bottou2018optimization}, 
and achieve the same convergence rate. 
Lemma~\ref{strong-convex-step-diam} and Lemma~\ref{strong-convex-step},
together with Lemma~\ref{bottou-1/t}, give our first main result: 

\begin{theorem}
\label{thm:strongly-convex}
Let $Q$ be an $L$-smooth and $\mu$-strongly convex function with a single minimum $\mathbf{x}^*$, then for decreasing learning rate $\eta_t = \frac{\beta}{\gamma + t} \leq \frac{1}{L}$ for some constants $\beta > \frac{1}{\mu}$ and $\gamma > 0$,
\begin{align*}
    \max_{i,j\in V_T} \E \bracks*{\norm*{\mathbf{x}^{i}_{T+1} - \mathbf{x}^{j}_{T+1}}_2^2}
    &\leq  \frac{\parens*{\gamma + 1} \max_{i,j\in V_0} \norm*{\mathbf{x}_1^i - \mathbf{x}_1^j}_2^2}{\parens*{\gamma + T + 1}\thresh} + \frac{4\beta^2 \sigma^2}{\parens*{\beta \mu - 1} \parens*{\gamma + T + 1}\thresh{}} 
    \tag{Internal convergence}\\
    \max_{i\in V_T} \E \bracks*{\norm*{\mathbf{x}^i_{T+1} - \mathbf{x}^*}_2^2} 
    &\leq  \frac{\parens*{\gamma + 1} \max_{i\in V_0} \norm*{\mathbf{x}_1^i - \mathbf{x}^*}_2^2}{\parens*{\gamma + T + 1}\thresh} + \frac{\beta^2 \sigma^2}{\parens*{\beta \mu - 1} \parens*{\gamma + T + 1}\thresh{}}
    \tag{External convergence}
\end{align*}
\end{theorem}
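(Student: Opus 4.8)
The plan is to observe that both displayed inequalities are nothing but solved instances of the scalar recursion in Lemma~\ref{bottou-1/t}, so the whole argument reduces to choosing the right sequences and constants and then splitting the maximum that defines $\nu$.

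First, for \emph{external convergence}, I would set $a_t \triangleq \max_{i\in V_{t-1}} \E\bracks*{\norm*{\mathbf{x}_t^i - \mathbf{x}^*}_2^2}$. Since every process executes $\mathbf{x}^i_1 \gets \mathbf{x}_1$ in the first line, $a_1 = \norm*{\mathbf{x}_1 - \mathbf{x}^*}_2^2$. Because $\eta_t = \beta/(\gamma+t)$ is decreasing and $\eta_1 \le 1/L$ by hypothesis, we have $\eta_t \le 1/L$ for every $t$, so Lemma~\ref{strong-convex-step} yields exactly $a_{t+1} \le (1-\eta_t\mu) a_t + c\,\eta_t^2$ with $c = \sigma^2/N$ (here the left side is the max over $V_t$ and the right side's max is over $V_{t-1}$, which is precisely how $a_{t+1}$ and $a_t$ are indexed; the nesting $V_t \subseteq V_{t-1}$ is already built into the one-step lemma and need not be invoked again). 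Applying Lemma~\ref{bottou-1/t} with this $c$, and with $\beta > 1/\mu$, $\gamma > 0$, gives $a_t \le \nu/(\gamma+t)$ where $\nu = \max\braces*{\frac{\beta^2\sigma^2}{(\beta\mu-1)N},\,(\gamma+1)a_1}$. Taking $t = T+1$ and using $\max\{A,B\} \le A+B$ for $A,B \ge 0$ separates $\nu$ into the two advertised terms, which is the external bound.

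Second, for \emph{internal convergence}, I would run the identical argument with $b_t \triangleq \max_{i,j\in V_{t-1}} \E\bracks*{\norm*{\mathbf{x}_t^i - \mathbf{x}_t^j}_2^2}$. Lemma~\ref{strong-convex-step-diam} then gives $b_{t+1} \le (1-\eta_t\mu) b_t + c'\,\eta_t^2$ with $c' = 4\sigma^2/N$, and since all processes start at the common point $\mathbf{x}_1$ we have $b_1 = 0$; Lemma~\ref{bottou-1/t} therefore gives $\nu = \frac{4\beta^2\sigma^2}{(\beta\mu-1)N}$ and $b_{T+1} \le \frac{4\beta^2\sigma^2}{(\gamma+T+1)(\beta\mu-1)N}$. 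The extra term $\frac{(\gamma+1)\norm*{\mathbf{x}_1-\mathbf{x}^*}_2^2}{\gamma+T+1}$ appearing in the statement is a harmless nonnegative slack kept only for visual symmetry with the external bound.

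Honestly there is no real obstacle here: the only thing worth a second look is the bookkeeping of which vertex set indexes each side of the recursions, so that the max-valued sequences $a_t,b_t$ genuinely satisfy the single-variable hypothesis of Lemma~\ref{bottou-1/t}; once that is lined up, and once $\eta_t \le 1/L$ is noted to hold for all $t$ by monotonicity, everything is routine substitution.
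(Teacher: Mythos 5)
Your proposal is correct and follows essentially the same route as the paper: the paper also obtains both bounds by feeding the one-step recursions of Lemma~\ref{strong-convex-step-diam} and Lemma~\ref{strong-convex-step} into Lemma~\ref{bottou-1/t} and bounding the resulting $\nu$ by the sum of its two candidate values. Your extra remark that the internal bound's first term is slack (since the diameter sequence starts at $0$) is accurate and only sharpens the stated inequality, so nothing is lost.
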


For the special case where the processes start at the same initial point $\mathbf{x}_1$ we have:
\begin{corollary}[Algorithm~\ref{alg:dis-sgd-strongly-convex} with a single initial point]
    Let $Q$ be an $L$-smooth and $\mu$-strongly convex function with a single minimum $\mathbf{x}^*$, then for initial point $\mathbf{x}_1$ such that $\mathbf{x}_1^i = \mathbf{x}_1$ for every $1\leq i \leq n$ and decreasing learning rate $\eta_t = \frac{\beta}{\gamma + t} \leq \frac{1}{L}$ for some constants $\beta > \frac{1}{\mu}$ and $\gamma > 0$,
\begin{align*}
    \max_{i,j\in V_T} \E \bracks*{\norm*{\mathbf{x}^{i}_{T+1} - \mathbf{x}^{j}_{T+1}}_2^2}
    &\leq \frac{4\beta^2 \sigma^2}{\parens*{\beta \mu - 1} \parens*{\gamma + T + 1}\thresh{}} 
    \tag{Internal convergence}\\
    \max_{i\in V_T} \E \bracks*{\norm*{\mathbf{x}^i_{T+1} - \mathbf{x}^*}_2^2} 
    &\leq  \frac{\parens*{\gamma + 1} \norm*{\mathbf{x}_1 - \mathbf{x}^*}_2^2}{\parens*{\gamma + T + 1}\thresh} + \frac{\beta^2 \sigma^2}{\parens*{\beta \mu - 1} \parens*{\gamma + T + 1}\thresh{}}
    \tag{External convergence}
\end{align*}
\end{corollary}
Neglecting dependencies on 
$\max_{i,j\in V_0} \norm*{\mathbf{x}_1^i - \mathbf{x}_1^j}_2^2$, $\max_{i\in V_0} \norm*{\mathbf{x}_1^i - \mathbf{x}^*}_2^2$,
$\mu$, $L$ and $\sigma$, this means that Algorithm~\ref{alg:dis-sgd-strongly-convex} converges externally 
in $\mathcal{O} \parens*{\epsconverge^{-1}/\thresh}$ iterations, 
and internally in $\mathcal{O} \parens*{\epsmodel^{-1}/\thresh}$ iterations;
that is, the rates are the same.
Since each iteration takes a single communication round, 
we get the same upper bound on the number of rounds.

\section{Non-Convex Cost Functions}
\label{section:non convex}

In the general case, where the function $Q$ is non-convex, 
the algorithm has to converge to a point with zero gradient; 
that is, for every nonfaulty process $i$ and valid execution tree $\mathcal{T}$ of the algorithm:
\begin{equation}
    \label{convergence-require}
    \E_{\mathcal{T}} \bracks*{\norm*{\nabla Q (\mathbf{x}^i)}_2^2} \leq \epsconverge.
\end{equation}
We assume that $Q$ is lower bounded by $Q^*$, 
i.e., for every $\mathbf{x}\in \R^d$, $Q(\mathbf{x}) \geq Q^*$~\cite{bottou2018optimization}.

The convergence rate of (vanilla) minibatch SGD with batch 
size $b$ is $\mathcal{O}\parens*{1/\sqrt{b T}}$, 
i.e., $\E \bracks*{\norm*{\nabla Q(\mathbf{x}_\tau)}_2^2} \leq C/\sqrt{b T}$, for iteration $\tau$ drawn uniformly at random from $[T]$ and constant $C$ which depends on $L$, $\sigma$ and $\parens*{Q(\mathbf{x}_1) - Q^*}$~\cite{GhadimiL13a}. This implies that $\E \bracks*{\norm*{\nabla Q(\mathbf{x}_\tau)}_2^2} \leq \epsconverge$ after $T = \mathcal{O}\parens{\epsconverge^{-2}/\sqrt{b}}$ iterations.
As in the strongly-convex case, this will serve as our baseline.

When the function $Q$ is not strongly convex, 
processes that obtain disjoint estimations at an iteration may compute \emph{diverging} learning parameters.
For this reason, we need to ensure that processes communicate
with intersecting sets of \emph{clusters}. 
To further expedite the contraction rate, 
and reduce the distance between the learning parameters, 
we end each iteration with \emph{multidimensional 
approximate agreement} (\emph{MDAA}). 
The input to MDAA is the local learning parameter, 
and its output serves as the learning parameter
for the next iteration.

Formally, in \emph{multidimensional approximate agreement}~\cite{MendesH13},
each process $i$ starts with input $\mathbf{x}_i \in \R^d$ 
and outputs a value $\mathbf{y}_i \in \R^d$, such that:
\begin{description}
\item[Convexity:] 
The outputs are in the \emph{convex hull} of the inputs,
that is, they are a \emph{convex combination} of the outputs.
\item[$q$-Contraction:]
The outputs are contracted by a factor of $q$ relative to the inputs,
that is, for every pair of nonfaulty processes $i,j$,
$\norm*{\mathbf{y}_i - \mathbf{y}_j}_2^2 
\leq q \, \diameter{\braces*{\mathbf{x_1},\dots\mathbf{x}_n}}$,
where the squared Euclidean \emph{diameter} of a set $A\subseteq \R^d$ is 
$\diameter{A} \triangleq \max_{\mathbf{x},\mathbf{y}\in A} \norm*{\mathbf{x} - \mathbf{y}}_2^2$.
\end{description}
Standard approximate agreement~\cite{MendesH13,DolevLPSW86}  
requires \emph{$\epsilon$-agreement}, that is, 
the distance between outputs is at most $\epsilon$. 
We only require contraction \emph{relative to the diameter of the inputs},
rather than a predefined maximal distance.
Section~\ref{sec:approximate agreement} presents a cluster-based MDAA algorithm assuming $f \leq \fopt$, 
which requires $\mathcal{O}\parens*{\log q^{-1}}$ communication rounds to achieve $q$ contraction.

\begin{algorithm}[tb]
   \caption{
    Cluster-based SGD: code for process $i$}
    \label{alg:dis-sgd}
	\begin{algorithmic}[1]
        \Statex \textbf{Global input:} initial point $\mathbf{x}_1$ and random iteration $\tau$
	    \State $\mathbf{x}^i_1 \gets \mathbf{x}_1$
		\For{$t=1\dots T$}
		\State draw uniformly at random $z_t^i \in \dd$ 
		 \State broadcast $\tup*{t,G\parens*{\mathbf{x}_t^i, z_t^i}}$ to all processes
		 \State wait to receive $\thresh$ messages of the form $\tup{t,-}$
		\State $\mathbf{g}_t^i \gets \mathrm{avg}(\text{received stochastic gradients})$
		\label{lin:avg-grad}
        \State $\mathbf{y}_t^i \gets \mathbf{x}_t^i - \eta_t \mathbf{g}_t^i$
        \State $\mathbf{x}_{t+1}^i \gets $ {\sf MDAA}$_t(\mathbf{y}_t^i,  q)$
        \label{lin:approx-agree}
        \EndFor
        \State output $\mathbf{x}_\tau^i$ 
	\end{algorithmic}
\end{algorithm}
Algorithm~\ref{alg:dis-sgd} 
deals with non-convex functions. 
One difference from Algorithm~\ref{alg:dis-sgd-strongly-convex} 
is in Line~\ref{lin:approx-agree}, 
calling MDAA with contraction parameter $q$.
Another difference is that processes send the stochastic gradients they computed, 
average the received gradients to a \emph{mini-batch} stochastic gradient 
and then use it to perform a local SGD step.
Every process outputs the learning parameter of the \emph{same} iteration $\tau\in [T]$, 
which is drawn uniformly at random;
this is a typical practice in SGD algorithms for 
non-convex objective functions~\cite{bottou2018optimization,GhadimiL13a}. The convergence proof of the non-convex algorithm uses a constant learning rate, i.e., $\eta_t = \eta$ for all $t$ and constant $\eta$.

Algorithm~\ref{alg:dis-sgd} considers the case in which all processes start with the same initial point $\mathbf{x}_1 \in \R^d$. 
Later, in Appendix~\ref{sec:differentinitialpoints}, we discuss the case where the processes start at different initial points.

The algorithm is similar to those in~\cite{CollaborativeLearningNeurIPS},
but they use aggregation rules which are resilient against malicious failures, and \emph{averaging agreement},
while we rely on the convexity property ensured by MDAA.

The parameter $\thresh{} \in [n]$ determines how many messages a process 
waits for in every iteration;
to ensure progress, we require that $\thresh{} \leq  n - \fp$, 
as in the strongly-convex case.
Furthermore, we assume that $\fp \leq \fopt$ to guarantee the convergence of the MDAA algorithm.

Fix a valid execution tree $\mathcal{T}$, as in Section~\ref{section:strongly convex}, we omit the subscript of $\mathcal{T}$ throughout the proof.
Let $V_t$, $t \geq 1$, be the set of processes that compute learning parameters 
for iteration $t+1$ (Line~\ref{lin:approx-agree}) and $V_0$ be the set of processes that execute the first line.
Similarly to the strongly-convex case, following Lemma~\ref{var-total}, we have the next lemma:
\begin{lemma}
    \label{var-total-non-convex}
For every iteration $t\geq 1$ and process $i\in V_t$,
    $\Var \bracks*{\mathbf{g}_t^i} \leq \frac{\sigma^2}{\thresh{}}$.
\end{lemma}

The proof for external convergence uses the
\emph{effective gradient}~\cite{El-MhamdiGGHR20}, 
which is defined as the difference between two consecutive iterations parameters.
Formally, the effective gradient of iteration $t$ and process $i\in V_t$ is:
\begin{equation}
\label{eq:effective-grad}
    \mathbf{G}_t^i \triangleq \frac{\mathbf{x}_t^i - \mathbf{x}_{t+1}^i}{\eta_t} .
\end{equation}
We bound the difference between the effective gradient and 
the true gradient in each iteration,
depending on the diameter of the learning parameters in the same iteration. 
This allows to prove convergence, 
as the effective change between two consecutive iterations is 
close enough to the true gradient.
The next lemma 
is the analog of Lemma~\ref{var-total-non-convex} for the effective gradient.

\begin{lemma}
\label{effect-grad-var}
For every iteration $t\geq 1$ and process $i\in V_t$,
\begin{equation*}
    \Var_t \bracks*{\mathbf{G}_t^i} \leq \frac{\sigma^2}{\thresh}.
\end{equation*}
\end{lemma}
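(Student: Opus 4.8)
The plan is to unwind $\mathbf{G}_t^i$ into a weighted combination of the per-process stochastic gradients of iteration $t$, verify that the weights form a ``sub-$\tfrac1N$'' probability vector, and then invoke independence of those gradients. Fix the communication pattern of iteration $t$, as in the proof of Lemma~\ref{rec-model-diam}, so that the set $S_j$ of gradients averaged by process $j$ in Line~\ref{lin:avg-grad} is determined and satisfies $|S_j|\ge N$; write $\mathbf{G}_k \triangleq G(\mathbf{x}_t^k,z_k)$ for the unique stochastic gradient broadcast by process $k$ at iteration $t$. The vectors $\mathbf{G}_k$ are mutually independent given the history before iteration $t$, and $\Var_t\bracks*{\mathbf{G}_k}\le\sigma^2$ by \eqref{bounded-var}.

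First I would use the convexity property of {\ttfamily MAA}: the output $\mathbf{x}_{t+1}^i$ of process $i$ at Line~\ref{lin:approx-agree} is a convex combination $\sum_{j}\lambda_j\mathbf{y}_t^j$ of the inputs $\mathbf{y}_t^j=\mathbf{x}_t^j-\eta_t\mathbf{g}_t^j$, with $\lambda_j\ge0$ and $\sum_j\lambda_j=1$. Substituting $\mathbf{g}_t^j=\frac{1}{|S_j|}\sum_{k\in S_j}\mathbf{G}_k$ and collecting terms,
\[
  \mathbf{G}_t^i=\frac{\mathbf{x}_t^i-\mathbf{x}_{t+1}^i}{\eta_t}
  =\frac{\mathbf{x}_t^i-\sum_j\lambda_j\mathbf{x}_t^j}{\eta_t}+\sum_k\nu_k\,\mathbf{G}_k,
  \qquad \nu_k\triangleq\sum_{j\,:\,k\in S_j}\frac{\lambda_j}{|S_j|}.
\]
One checks that $\nu_k\ge0$, $\sum_k\nu_k=\sum_j\lambda_j=1$, and $\nu_k\le\frac{1}{N}\sum_j\lambda_j=\frac1N$ because $|S_j|\ge N$. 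The first summand is non-random under $\E_t$ (the $\mathbf{x}_t^\cdot$ are fixed before iteration $t$, and the pattern is fixed), so it does not contribute to $\Var_t\bracks*{\mathbf{G}_t^i}$.

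It then remains to bound $\Var_t\bracks*{\sum_k\nu_k\mathbf{G}_k}$. By independence of the $\mathbf{G}_k$ the cross terms vanish, exactly as in the proof of Lemma~\ref{var-total}, so
\[
  \Var_t\bracks*{\textstyle\sum_k\nu_k\mathbf{G}_k}
  =\sum_k\nu_k^2\,\Var_t\bracks*{\mathbf{G}_k}
  \le\sigma^2\sum_k\nu_k^2
  \le\sigma^2\bigl(\max_k\nu_k\bigr)\Bigl(\sum_k\nu_k\Bigr)
  \le\frac{\sigma^2}{N},
\]
which is the statement. I expect the one genuinely delicate point to be the bookkeeping of the effective weights $\nu_k$ together with the decomposition above: namely, that after mini-batch averaging and then approximate agreement, the update of process $i$ is still an affine image of the fresh stochastic gradients whose noise-carrying coefficients $\nu_k$ sum to one and never exceed $1/N$ --- this is exactly what preserves the $1/N$ variance reduction of a batch of size $N$ --- while the remaining, parameter-difference part is absorbed into the deterministic offset that drops out of $\Var_t$. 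Everything else (independence, the cross-term cancellation, the elementary inequality $\sum_k\nu_k^2\le\max_k\nu_k$) is routine.
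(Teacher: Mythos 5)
Your proof is correct and gives the same bound, but it takes a noticeably different route from the paper's. The paper stops at the level of the mini-batch gradients: by convexity of MAA it writes $\mathbf{G}_t^i$ (up to an offset treated as deterministic) as the convex combination $\sum_j w_j \mathbf{g}_t^{i_j}$, applies the Jensen-type inequality of Proposition~\ref{prop:sep-weights} to the centered vectors, and then invokes Lemma~\ref{var-total-non-convex}, which already gives $\Var\bracks*{\mathbf{g}_t^{i_j}}\leq \sigma^2/N$ for each single process; the bound follows because the weights sum to one. You instead unwind one level further, down to the individual broadcast gradients $\mathbf{G}_k$, and use their mutual independence directly, concluding via $\sum_k\nu_k^2\leq\parens*{\max_k\nu_k}\parens*{\sum_k\nu_k}\leq 1/N$. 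In effect you re-prove a weighted analogue of Lemma~\ref{var-total} inline: your argument is self-contained, makes explicit why the $1/N$ factor survives the double averaging, and even yields the slightly sharper intermediate bound $\sigma^2\sum_k\nu_k^2$; the paper's argument is shorter and more modular, and---since it only uses convexity of $\norm*{\cdot}_2^2$ at the outer level---it never needs independence across the mini-batch averages $\mathbf{g}_t^{i_j}$, which may overlap and hence be correlated (independence is used only once, inside Lemma~\ref{var-total}). One caveat applies equally to both arguments: the MAA weights are determined by the realized values (MidExtremes selects the extreme pair), not by the communication pattern alone, so treating your $\lambda_j$ (respectively the paper's $w_j$) as fixed under $\E_t$ is an implicit simplification; since the paper's own proof exchanges $\E_t$ with the convex combination in exactly the same way, this is not a gap relative to the paper.
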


\begin{proof}
By the convexity property of the approximate agreement algorithm, 
$\mathbf{x}^i_{t+1} = \sum_{j=1}^k w_j \mathbf{y}_t^{i_j}$ 
for some $\braces*{i_1,\dots,i_k}\subseteq V_t$ and weights $\sum_{j=1}^k w_j = 1$.
\begin{equation*}
    \mathbf{x}^i_t - \mathbf{x}^i_{t+1}
    = \mathbf{x}^i_t - \sum_{j=1}^k w_j \mathbf{y}_t^{i_j} = \mathbf{x}^i_t - \sum_{j=1}^k w_j \parens*{\mathbf{x}_t^{i_j} - \eta_t \mathbf{g}_t^{i_j}}
\end{equation*}
Hence,
    \begin{align*}
        \Var_t \bracks*{\mathbf{G}_t^i} 
        &= \E_t \bracks*{\norm*{\mathbf{G}_t^i - \E_t \bracks*{\mathbf{G}^i_t}}_2^2} \\
        &= \E_t \bracks*{\norm*{\frac{1}{\eta_t} \parens*{\mathbf{x}_t^i - \mathbf{x}_{t+1}^i} - \E_t \bracks*{\frac{1}{\eta_t} \parens*{\mathbf{x}_t^i - \mathbf{x}_{t+1}^i}}}_2^2} \\
        &= \E_t \bracks*{\norm*{\sum_{j=1}^k w_j \mathbf{g}_t^{i_j} - \sum_{j=1}^k w_j \E_t \bracks*{\mathbf{g}_t^{i_j}}}_2^2}\\
        &\leq \sum_{j=1}^k w_j \E_t \bracks*{\norm*{ \mathbf{g}_t^{i_j} - \E_t \bracks*{\mathbf{g}_t^{i_j}}}_2^2} && \explain{by Proposition~\ref{prop:sep-weights}} \\
        &= \sum_{j=1}^k w_j  \Var \bracks*{\mathbf{g}_t^{i_j}} \leq \frac{\sigma^2}{\thresh} && \explain{by Lemma~\ref{var-total-non-convex}}
    \end{align*}
\end{proof}

A key part of our analysis is separating the mean and the variance of the 
effective gradient when plugging it in the standard analysis of SGD. 
The convexity property of MDAA allows to prove the next
lemma, showing that the expectation of the difference between 
the expectation of the effective gradient 
and the true gradient does not depend on $\sigma$.

\begin{lemma}
\label{effect-grad-diff}
For every iteration $t\geq 1$ and process $i\in V_t$,
\begin{equation*}
    \E \bracks*{\norm*{\E_t \bracks*{\mathbf{G}^i_t} - \nabla Q(\mathbf{x}_t^i)}_2^2} \leq \parens*{\frac{2}{\eta_t^2} + 2L^2}\max_{i,j\in V_t} \E \bracks*{\norm*{\mathbf{x}_t^i - \mathbf{x}^j_t}_2^2}
\end{equation*}
\end{lemma}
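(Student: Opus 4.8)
I would prove this in the same spirit as the proof of Lemma~\ref{effect-grad-var}: expand $\mathbf{x}_{t+1}^i$ through the convexity property of the {\ttfamily MAA} instance, take the iteration-$t$ expectation so that the stochastic-gradient noise disappears, and then trade gradient differences for parameter differences via $L$-smoothness. The whole point of the lemma is that $\E_t[\mathbf{G}_t^i]$ differs from $\nabla Q(\mathbf{x}_t^i)$ \emph{only} because the averaged/agreed-upon parameters $\mathbf{x}_t^{i_j}$ that implicitly define $\mathbf{G}_t^i$ are spread around $\mathbf{x}_t^i$ — no $\sigma^2$ term should appear — so the decomposition must be arranged to kill the noise term before bounding.

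Concretely, by convexity of {\ttfamily MAA} in Line~\ref{lin:approx-agree} there are processes $\braces*{i_1,\dots,i_k}\subseteq V_t$ and weights $w_1,\dots,w_k\ge 0$ with $\sum_j w_j=1$ such that $\mathbf{x}_{t+1}^i=\sum_j w_j\mathbf{y}_t^{i_j}=\sum_j w_j\parens*{\mathbf{x}_t^{i_j}-\eta_t\mathbf{g}_t^{i_j}}$, where, exactly as in Lemma~\ref{effect-grad-var}, the weights $w_j$ and indices $i_j$ are treated as fixed by the schedule so that they do not interact with the iteration-$t$ randomness. Dividing by $\eta_t$, using $\sum_j w_j=1$ to write $\mathbf{x}_t^i=\sum_j w_j\mathbf{x}_t^i$, taking $\E_t$ with the unbiasedness \eqref{unbiased-est}, and subtracting $\nabla Q(\mathbf{x}_t^i)=\sum_j w_j\nabla Q(\mathbf{x}_t^i)$ gives
\[
    \E_t\bracks*{\mathbf{G}_t^i}-\nabla Q(\mathbf{x}_t^i)
    =\frac{1}{\eta_t}\sum_{j=1}^k w_j\parens*{\mathbf{x}_t^i-\mathbf{x}_t^{i_j}}+\sum_{j=1}^k w_j\parens*{\nabla Q(\mathbf{x}_t^{i_j})-\nabla Q(\mathbf{x}_t^i)}.
\]
From here I would apply the relaxed triangle inequality \eqref{relaxed-triangle-ineq2} to split the two sums (contributing the factors $2/\eta_t^2$ and $2$), weighted Jensen (Proposition~\ref{prop:sep-weights}) to move the norm inside each convex combination, and $L$-smoothness \eqref{l-lipsc} to replace $\norm*{\nabla Q(\mathbf{x}_t^{i_j})-\nabla Q(\mathbf{x}_t^i)}_2^2$ by $L^2\norm*{\mathbf{x}_t^{i_j}-\mathbf{x}_t^i}_2^2$, obtaining $\norm*{\E_t[\mathbf{G}_t^i]-\nabla Q(\mathbf{x}_t^i)}_2^2\le\parens*{\tfrac{2}{\eta_t^2}+2L^2}\sum_j w_j\norm*{\mathbf{x}_t^i-\mathbf{x}_t^{i_j}}_2^2$. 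Taking the outer expectation $\E$ and using $\sum_j w_j=1$ together with $i,i_j\in V_t$, so that each $\E[\norm*{\mathbf{x}_t^i-\mathbf{x}_t^{i_j}}_2^2]\le\max_{k,l\in V_t}\E[\norm*{\mathbf{x}_t^k-\mathbf{x}_t^l}_2^2]$, yields the stated bound.

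The individual steps are routine inequalities; the one delicate point — the same one already present in Lemma~\ref{effect-grad-var} — is the bookkeeping around the randomness of the {\ttfamily MAA} convex weights, namely treating the execution/schedule as fixed so that the weights $w_j$ are decoupled from the iteration-$t$ gradients (which is exactly what makes the noise term vanish when $\E_t$ is applied), and ensuring the inputs $i_j$ that {\ttfamily MAA} combines lie in $V_t$ so that the final maximum is over the correct index set. I would reuse the convention from that earlier proof verbatim.
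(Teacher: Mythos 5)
Your proposal is correct and follows essentially the same route as the paper's proof: expand $\mathbf{x}_{t+1}^i$ via the convexity of {\ttfamily MAA}, use unbiasedness under $\E_t$ to reduce to the parameter-drift term plus the gradient-difference term, split with \eqref{relaxed-triangle-ineq2}, push the norm inside the convex combinations with Proposition~\ref{prop:sep-weights}, apply \eqref{l-lipsc}, and bound by the maximum over $V_t$. Even your bookkeeping convention for the {\ttfamily MAA} weights is the one the paper uses implicitly in Lemma~\ref{effect-grad-var}, so nothing further is needed.
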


\begin{proof}
By convexity of MDAA, 
$\mathbf{x}^i_{t+1} = \sum_{j=1}^k w_j \mathbf{y}_t^{i_j}$ for 
some $\braces*{i_1,\dots,i_k}\subseteq V_t$ and weights $\sum_{j=1}^k w_j = 1$.
\begin{align*}
        &\norm*{\E_t \bracks*{\mathbf{G}^i_t} - \nabla Q(\mathbf{x}_t^i)}_2^2 \\
        &= \norm*{\frac{1}{\eta_t}\parens*{\mathbf{x}^i_t - \sum_{j=1}^k w_j \mathbf{x}_t^{i_j}} + \sum_{j=1}^k w_j \E_t \bracks*{\mathbf{g}_t^{i_j}} - \nabla Q(\mathbf{x}_t^i)}_2^2 \\
        &\leq \frac{2}{\eta_t^2} \norm*{\mathbf{x}^i_t - \sum_{j=1}^k w_j \mathbf{x}_t^{i_j}}_2^2 + 2\norm*{\sum_{j=1}^k w_j \nabla Q(\mathbf{x}_t^{i_j}) - \nabla Q(\mathbf{x}_t^i)}_2^2 && \explain{by \eqref{unbiased-est} and \eqref{relaxed-triangle-ineq2}}\\
        &\leq \frac{2}{\eta_t^2} \sum_{j=1}^k w_j \norm*{\mathbf{x}^i_t - \mathbf{x}_t^{i_j}}_2^2 + 2 \sum_{j=1}^k w_j \norm*{\nabla Q(\mathbf{x}_t^{i_j}) - \nabla Q(\mathbf{x}_t^i)}_2^2 &&  \explain{by Proposition~\ref{prop:sep-weights}} \\
        &\leq \frac{2}{\eta_t^2} \sum_{j=1}^k w_j \norm*{\mathbf{x}^i_t - \mathbf{x}_t^{i_j}}_2^2 + 2 L^2 \sum_{j=1}^k w_j \norm*{\mathbf{x}_t^{i_j} - \mathbf{x}_t^i}_2^2 &&  \explain{by \eqref{l-lipsc}}
\end{align*}
Therefore,
\begin{align*}
    \E \bracks*{\norm*{\E_t \bracks*{\mathbf{G}^i_t} - \nabla Q(\mathbf{x}_t^i)}_2^2} 
    &\leq \parens*{\frac{2}{\eta_t^2} + 2 L^2} \sum_{j=1}^k w_j \E \bracks*{\norm*{\mathbf{x}^i_t - \mathbf{x}_t^{i_j}}_2^2} \\
    &\leq \parens*{\frac{2}{\eta_t^2} + 2L^2}\max_{i,j\in V_t} \E \bracks*{\norm*{\mathbf{x}_t^i - \mathbf{x}^j_t}_2^2} && \explain{$\sum_{j=1}^k w_j = 1$}
\end{align*}
\end{proof}

The next lemma uses the following upper bound on $Q$, which is implied from the smoothness of the function~\eqref{l-lipsc}:
\begin{equation}
\label{descent-lemma}
        \forall \mathbf{x},\mathbf{y}\in \R^d,\, Q(\mathbf{y}) \leq Q(\mathbf{x}) + \tup{\nabla Q(\mathbf{x}), \mathbf{y} - \mathbf{x}} + \frac{L}{2} \norm*{\mathbf{y} - \mathbf{x}}_2^2.
\end{equation}

The next lemma is similar to the one used to prove the convergence 
of non-convex vanilla SGD~\cite{bottou2018optimization}, 
when the effective gradient is replaced with a stochastic gradient 
computed with respect to $\mathbf{x}_t^i$.
For vanilla SGD, where gradients are unbiased, the term 
$\norm*{\E_t\bracks*{\mathbf{G}_t^i} - \nabla Q(\mathbf{x}_t^i)}_2^2$ becomes 0. 
In our case, by Lemma~\ref{effect-grad-diff}, this term depends on 
the distance between the learning parameters, which we bound using MDAA.

\begin{lemma}
\label{non-convex-step}
    Let $Q$ be an $L$-smooth function. For every iteration $t\geq1$ where $\eta_t\leq \frac{1}{4L}$ and process $i\in V_t$,
    \begin{equation*}
        \frac{\eta_t}{4} \norm*{\nabla Q(\mathbf{x}_t^i)}_2^2 \leq Q(\mathbf{x}_{t}^i) - \E_t \bracks*{Q(\mathbf{x}_{t+1}^i)} + \eta_t \norm*{\E_t\bracks*{\mathbf{G}_t^i} - \nabla Q(\mathbf{x}_t^i)}_2^2 + \frac{\sigma^2\eta_t^2 L}{\thresh}
    \end{equation*}
\end{lemma}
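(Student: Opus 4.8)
The plan is to apply the smoothness upper bound \eqref{descent-lemma} to the pair $\mathbf{x}_t^i,\mathbf{x}_{t+1}^i$ and then take the iteration-$t$ expectation $\E_t$, using that by definition \eqref{eq:effective-grad} we have $\mathbf{x}_{t+1}^i = \mathbf{x}_t^i - \eta_t \mathbf{G}_t^i$. Concretely, from \eqref{descent-lemma},
\[
    Q(\mathbf{x}_{t+1}^i) \leq Q(\mathbf{x}_t^i) - \eta_t \tup*{\nabla Q(\mathbf{x}_t^i), \mathbf{G}_t^i} + \frac{L\eta_t^2}{2}\norm*{\mathbf{G}_t^i}_2^2 ,
\]
and I would take $\E_t$ of both sides. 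Since $\mathbf{x}_t^i$ (hence $Q(\mathbf{x}_t^i)$ and $\nabla Q(\mathbf{x}_t^i)$) is fixed under $\E_t$, the cross term becomes $-\eta_t\tup*{\nabla Q(\mathbf{x}_t^i),\E_t[\mathbf{G}_t^i]}$, and the quadratic term splits by the bias–variance decomposition (Proposition~\ref{prop:sep-mean-var}) as $\E_t[\norm*{\mathbf{G}_t^i}_2^2] = \norm*{\E_t[\mathbf{G}_t^i]}_2^2 + \Var_t[\mathbf{G}_t^i]$, where $\Var_t[\mathbf{G}_t^i] \leq \sigma^2/N$ by Lemma~\ref{effect-grad-var}.

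Next I would write $\E_t[\mathbf{G}_t^i] = \nabla Q(\mathbf{x}_t^i) + \bigl(\E_t[\mathbf{G}_t^i] - \nabla Q(\mathbf{x}_t^i)\bigr)$, so the cross term contributes $-\eta_t\norm*{\nabla Q(\mathbf{x}_t^i)}_2^2 - \eta_t\tup*{\nabla Q(\mathbf{x}_t^i),\E_t[\mathbf{G}_t^i] - \nabla Q(\mathbf{x}_t^i)}$; the second piece is bounded by $\tfrac{\eta_t}{2}\norm*{\nabla Q(\mathbf{x}_t^i)}_2^2 + \tfrac{\eta_t}{2}\norm*{\E_t[\mathbf{G}_t^i] - \nabla Q(\mathbf{x}_t^i)}_2^2$ (from $-\tup*{\mathbf{a},\mathbf{b}}\leq\tfrac12\norm*{\mathbf{a}}_2^2+\tfrac12\norm*{\mathbf{b}}_2^2$), and $\norm*{\E_t[\mathbf{G}_t^i]}_2^2 \leq 2\norm*{\nabla Q(\mathbf{x}_t^i)}_2^2 + 2\norm*{\E_t[\mathbf{G}_t^i] - \nabla Q(\mathbf{x}_t^i)}_2^2$ by \eqref{relaxed-triangle-ineq2}. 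Collecting the $\norm*{\nabla Q(\mathbf{x}_t^i)}_2^2$ terms leaves the coefficient $-\eta_t(\tfrac12 - L\eta_t)$, which is $\leq -\eta_t/4$ exactly because $\eta_t\leq\tfrac1{4L}$; the $\norm*{\E_t[\mathbf{G}_t^i] - \nabla Q(\mathbf{x}_t^i)}_2^2$ terms collect to coefficient $\eta_t(\tfrac12 + L\eta_t)\leq\tfrac34\eta_t\leq\eta_t$; and the variance term contributes at most $\tfrac{L\eta_t^2}{2}\cdot\tfrac{\sigma^2}{N}\leq\tfrac{L\eta_t^2\sigma^2}{N}$. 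Rearranging and moving $\E_t[Q(\mathbf{x}_{t+1}^i)]$ to the right-hand side yields the claim.

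There is no genuine obstacle here — this is the standard non-convex SGD one-step descent inequality, and the only care needed is the bookkeeping of constants so that the single threshold $\eta_t\leq 1/(4L)$ absorbs all the $L\eta_t$ corrections into the clean coefficients $1/4$ (gradient term) and $1$ (effective-gradient error term). The one point worth a sentence is that $\mathbf{x}_{t+1}^i$ is still a vector in $\R^d$ obtained from $\mathbf{x}_t^i$ by subtracting $\eta_t\mathbf{G}_t^i$ (the MAA weights are random but that is harmless), so \eqref{descent-lemma} may be applied pathwise before taking $\E_t$, with the randomness of those weights already accounted for inside Lemma~\ref{effect-grad-var}.
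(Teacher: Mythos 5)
Your proof is correct and follows essentially the same route as the paper: descent lemma applied with $\mathbf{x}_{t+1}^i = \mathbf{x}_t^i - \eta_t\mathbf{G}_t^i$, Young's inequality on the cross term, bias--variance splitting of $\E_t\bracks*{\norm{\mathbf{G}_t^i}_2^2}$ together with Lemma~\ref{effect-grad-var}, and the same final coefficients $-\eta_t(\tfrac12 - L\eta_t)$ and $\eta_t(\tfrac12 + L\eta_t)$ absorbed via $\eta_t\leq 1/(4L)$. The only difference is cosmetic (you apply Proposition~\ref{prop:sep-mean-var} before the relaxed triangle inequality, the paper after), so there is nothing to add.
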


\begin{proof}
By \eqref{eq:effective-grad}, $\mathbf{x}_{t+1}^i = \mathbf{x}_t^i - \eta_t \mathbf{G}_t^i$.
By using \eqref{descent-lemma}
\begin{equation}
\label{eq:51}
    \E_t\bracks*{Q(\mathbf{x}_{t+1}^i)}
    \leq Q(\mathbf{x}_{t}^i) -\eta_t \E_t\bracks*{\tup{\mathbf{G}_t^i,\nabla Q(\mathbf{x}_t^i)}} + \frac{\eta_t^2 L}{2} \E_t\bracks*{\norm*{\mathbf{G}_t^i}_2^2}
\end{equation}
For the second term
\begin{equation}
\label{eq:51-1}
\begin{aligned}
     -\E_t\bracks*{\tup{\mathbf{G}_t^i,\nabla Q(\mathbf{x}_t^i)}}
     &= -\tup{\E_t\bracks*{\mathbf{G}_t^i},\nabla Q(\mathbf{x}_t^i)} \\
     &= -\norm*{\nabla Q(\mathbf{x}_t^i)}_2^2 + \tup{\nabla Q(\mathbf{x}_t^i) - \E_t\bracks*{\mathbf{G}_t^i},\nabla Q(\mathbf{x}_t^i)} && \explain{by \eqref{inned-prod-add}}\\
     &\leq - \norm*{\nabla Q(\mathbf{x}_t^i)}_2^2 + \frac{1}{2} \norm*{\nabla Q(\mathbf{x}_t^i) - \E_t\bracks*{\mathbf{G}_t^i}}_2^2 + \frac{1}{2} \norm*{\nabla Q(\mathbf{x}_t^i)}_2^2 && \explain{by \eqref{eq:young}}
\end{aligned}
\end{equation}
For the third term
\begin{equation}
    \label{eq:51-2}
\begin{aligned}
    &\frac{1}{2} \E_t\bracks*{\norm*{\mathbf{G}_t^i}_2^2} \\
    &= \frac{1}{2} \E_t\bracks*{\norm*{\mathbf{G}_t^i - \nabla Q(\mathbf{x}_t^i) + \nabla Q(\mathbf{x}_t^i)}_2^2} \\
    &\leq \E_t\bracks*{\norm*{\mathbf{G}_t^i - \nabla Q(\mathbf{x}_t^i)}_2^2} + \norm*{\nabla Q(\mathbf{x}_t^i)}_2^2 && \explain{by \eqref{relaxed-triangle-ineq2}} \\
    &\leq \norm*{\E_t\bracks*{\mathbf{G}_t^i} - \nabla Q(\mathbf{x}_t^i)}_2^2 + \norm*{\nabla Q(\mathbf{x}_t^i)}_2^2 + \frac{\sigma^2}{\thresh} && \explain{by Proposition~\ref{prop:sep-mean-var} and Lemma~\ref{effect-grad-var}}
\end{aligned}
\end{equation}
Combining \eqref{eq:51}, \eqref{eq:51-1} and \eqref{eq:51-2}, and since $\eta_t \leq \frac{1}{4L}$
\begin{multline*}
    \E_t\bracks*{Q(\mathbf{x}_{t+1}^i)} \\
    \leq Q(\mathbf{x}_{t}^i) + \parens*{\eta_t^2 L - \frac{\eta_t}{2}} \norm*{\nabla Q(\mathbf{x}_t^i)}_2^2 + \parens*{\frac{\eta_t}{2} + \eta_t^2 L}  \norm*{\E_t\bracks*{\mathbf{G}_t^i} - \nabla Q(\mathbf{x}_t^i)}_2^2 + \frac{\sigma^2\eta_t^2 L}{\thresh} \\
    \leq Q(\mathbf{x}_{t}^i) - \frac{\eta_t}{4} \norm*{\nabla Q(\mathbf{x}_t^i)}_2^2 + \eta_t \norm*{\E_t \bracks*{\mathbf{G}_t^i} - \nabla Q(\mathbf{x}_t^i)}_2^2 + \frac{\sigma^2\eta_t^2 L}{\thresh}
\end{multline*}
The lemma follows by rearrangement. 
\end{proof}

The next lemma shows that external convergence depends on the maximal distance 
between the learning parameters at the different processes, 
and uses a fixed learning rate of $\eta = \sqrt{\thresh/T}$.
Since $\tau$ is drawn uniformly at random, 
the expectation of $\norm*{\nabla Q(\mathbf{x}_\tau^i)}_2^2$ is equal to the average expectation over all iterations.
Note that the first two terms are the classical error rates in the non-convex case~\cite{bottou2018optimization}.

\begin{lemma}
\label{thm:non-convex}
    Let $Q$ be an $L$-smooth cost function. Then for $T\geq 16L^2\thresh{}$, 
    constant learning rate $\eta = \frac{\sqrt{\thresh{}}}{\sqrt{T}}$ 
    and any process $i\in V_T$,
    \begin{equation*}
        \E \bracks*{\norm*{\nabla Q(\mathbf{x}_\tau^i)}_2^2} 
        \leq \frac{4 \parens*{Q(\mathbf{x}_1) - Q^*}}{\sqrt{\thresh T}} + \frac{4\sigma^2 L}{\sqrt{\thresh T}} + \parens*{\frac{8T}{\thresh}  + 8L^2} \max_{t\in [T]} \max_{i,j\in V_t} \E \bracks*{\norm*{\mathbf{x}_t^i - \mathbf{x}^j_t}_2^2}
    \end{equation*}
\end{lemma}

\begin{proof}
    Since $\eta = \frac{\sqrt{\thresh}}{\sqrt{T}}\leq \frac{1}{4L}$, following Lemma~\ref{non-convex-step} and \ref{effect-grad-diff}
    \begin{equation*}
        \frac{\eta}{4} \E\bracks*{\norm*{\nabla Q(\mathbf{x}_t^i)}_2^2}
        \leq \E\bracks*{Q(\mathbf{x}_{t}^i)} - \E \bracks*{Q(\mathbf{x}_{t+1}^i)} + \parens*{\frac{2}{\eta} + 2L^2\eta}\max_{i,j\in V_t} \E \bracks*{\norm*{\mathbf{x}_t^i - \mathbf{x}^j_t}_2^2} + \frac{\sigma^2\eta^2 L}{\thresh} 
    \end{equation*}
    Summing the previous inequality over $t$ yields
    \begin{multline*}
        \sum_{t=1}^T \frac{\eta}{4} \E\bracks*{\norm*{\nabla Q(\mathbf{x}_t^i)}_2^2} \\
        \leq \sum_{i=1}^T \parens*{\E\bracks*{Q(\mathbf{x}_{t}^i)} - \E \bracks*{Q(\mathbf{x}_{t+1}^i)}} + \sum_{t=1}^T \frac{\sigma^2\eta^2 L}{\thresh} + \sum_{t=1}^T \parens*{\frac{2}{\eta} + 2L^2\eta}\max_{i,j\in V_t} \E \bracks*{\norm*{\mathbf{x}_t^i - \mathbf{x}^j_t}_2^2} \\
        = Q(\mathbf{x}_1) - \E \bracks*{Q(\mathbf{x}_{T+1}^i)} + \frac{\sigma^2\eta^2 L}{\thresh} T + \parens*{\frac{2}{\eta} + 2L^2\eta} \sum_{t=1}^T \max_{i,j\in V_t} \E \bracks*{\norm*{\mathbf{x}_t^i - \mathbf{x}^j_t}_2^2}
    \end{multline*}
    Therefore, using that $\tau$ is drawn uniformly at random,
    \begin{align*}
        {}&\E \bracks*{\norm*{\nabla Q(\mathbf{x}_\tau^i)}_2^2} 
        = \frac{1}{T} \sum_{t=1}^T \E\bracks*{\norm*{\nabla Q(\mathbf{x}_t^i)}_2^2} \\
        {}&\leq \frac{4\parens*{Q(\mathbf{x}_1) - Q^*}}{T\eta} + \frac{4\sigma^2\eta L}{\thresh} + \parens*{\frac{8}{\eta^2} + 8L^2} \max_{t\in [T]} \max_{i,j\in V_t} \E \bracks*{\norm*{\mathbf{x}_t^i - \mathbf{x}^j_t}_2^2} \\
        {}&= \frac{4 \parens*{Q(\mathbf{x}_1) - Q^*}}{\sqrt{\thresh T}} + \frac{4\sigma^2 L}{\sqrt{\thresh T}} + \parens*{\frac{8T}{\thresh} + 8L^2} \max_{t\in [T]} \max_{i,j\in V_t} \E \bracks*{\norm*{\mathbf{x}_t^i - \mathbf{x}^j_t}_2^2} && \explain{$\eta = \frac{\sqrt{\thresh}}{\sqrt{T}}$}
\end{align*}
\end{proof}

Lemma~\ref{thm:non-convex} shows that external convergence depends 
on internal convergence. 
(Recall from Section~\ref{section:strongly convex}, that in the special 
case where the function is strongly convex, 
both external and internal convergence are achieved naturally.) 
The proof of this lemma only uses the \emph{convexity} property of MDAA 
and gives motivation for adding the \emph{contraction} property
to ensure this term will be sufficiently small.
\paragraph{\emph{\bf Internal Convergence.}}
The next lemma bounds the diameter of the learning parameters of iteration $t+1$ 
relative to the diameter of the previous iteration $t$. 
It is proved by first bounding the diameter after each process performs a 
local SGD step,
and then using the contraction property of MDAA.  

\begin{lemma}
\label{rec-model-diam}
    For every iteration $t\geq 1$,
    \begin{equation*}
        \max_{i,j\in V_t} \E\bracks*{\norm*{\mathbf{x}_{t+1}^i - \mathbf{x}_{t+1}^j}_2^2} 
        \leq q \parens*{2 + 2L^2\eta_t^2} \max_{i,j\in V_{t-1}}\E\bracks*{\norm*{\mathbf{x}_t^i - \mathbf{x}_t^j}_2^2} + q \frac{4\sigma^2\eta_t^2}{\thresh{}}
    \end{equation*}
\end{lemma}

\begin{proof}
Consider two processes $i,j\in V_t$. Let $S_1$ be the set of processes that was used to compute $\mathbf{g}^i_t$ in Line~\ref{lin:avg-grad}, and let $S_2$ be the set of processes that was used to compute $\mathbf{g}^j_t$ in Line~\ref{lin:avg-grad}.
\begin{align*}
    &\E_t \bracks*{\norm*{\mathbf{y}_t^i - \mathbf{y}_t^j}_2^2} \\
    &= \E_t \bracks*{\norm*{ \mathbf{x}_t^i - \eta_t \mathbf{g}_t^i - \parens*{\mathbf{x}_t^j - \eta_t \mathbf{g}_t^j}}_2^2} \\
    &= \norm*{\mathbf{x}_t^i - \E_t \bracks*{\eta_t \mathbf{g}_t^i} - \mathbf{x}_t^j + \E_t \bracks*{\eta_t \mathbf{g}_t^j}}_2^2  + \Var \bracks*{\eta_t \mathbf{g}_t^i - \eta_t \mathbf{g}_t^j} && \explain{by Proposition~\ref{prop:sep-mean-var}}\\
    &\leq \norm*{\mathbf{x}_t^i - \eta_t \E_t \bracks*{\mathbf{g}_t^i} - \mathbf{x}_t^j + \eta_t \E_t \bracks*{\mathbf{g}_t^j}}_2^2  + 2 \eta_t^2 \Var \bracks*{\mathbf{g}_t^i} + 2 \eta_t^2 \Var \bracks*{\mathbf{g}_t^j} && \explain{by Proposition~\ref{prop:var-diff}}\\
    &\leq \norm*{\mathbf{x}_t^i - \frac{\eta_t}{|S_1|} \sum_{k\in S_1} \nabla Q(\mathbf{x}_t^k) - \mathbf{x}_t^j + \frac{\eta_t}{|S_2|} \sum_{l\in S_2} \nabla Q(\mathbf{x}_t^l)}_2^2 + \frac{4\sigma^2\eta_t^2}{\thresh} && \explain{by Lemma~\ref{var-total-non-convex} and \eqref{unbiased-est}}
\end{align*}
Note that
\begin{align*}
    &\norm*{\mathbf{x}_t^i - \frac{\eta_t}{|S_1|} \sum_{k\in S_1} \nabla Q(\mathbf{x}_t^k) - \mathbf{x}_t^j + \frac{\eta_t}{|S_2|} \sum_{l\in S_2} \nabla Q(\mathbf{x}_t^l)}_2^2 \\
    &\leq 2\norm*{\mathbf{x}_t^i - \mathbf{x}_t^j}_2^2 + 2\eta_t^2\norm*{\frac{1}{|S_2|} \sum_{l\in S_2} \nabla Q(\mathbf{x}_t^l) - \frac{1}{|S_1|} \sum_{k\in S_1} \nabla Q(\mathbf{x}_t^k)}_2^2 && \explain{by \eqref{relaxed-triangle-ineq2}} \\
    &\leq 2\norm*{\mathbf{x}_t^i - \mathbf{x}_t^j}_2^2 + \frac{2\eta_t^2}{|S_1||S_2|} \sum_{k\in S_1} \sum_{l\in S_2} \norm*{\nabla Q(\mathbf{x}_t^l) - \nabla Q(\mathbf{x}_t^k)}_2^2 && \explain{by Proposition~\ref{prop:l2-avg}}\\
    &\leq 2\norm*{\mathbf{x}_t^i - \mathbf{x}_t^j}_2^2 + \frac{2\eta_t^2 L^2}{|S_1||S_2|} \sum_{k\in S_1} \sum_{l\in S_2} \norm*{\mathbf{x}_t^l - \mathbf{x}_t^k}_2^2 && \explain{by \eqref{l-lipsc}}
\end{align*}
Therefore,
\begin{align*}
    \E \bracks*{\norm*{\mathbf{y}_t^i - \mathbf{y}_t^j}_2^2} 
    &= \E\bracks*{\E_t \bracks*{\norm*{\mathbf{y}_t^i - \mathbf{y}_t^j}_2^2} }
    \leq \parens*{2 + 2\eta_t^2 L^2} \max_{k,l\in V_{t-1}}\E\bracks*{\norm*{\mathbf{x}_t^k - \mathbf{x}_t^l}_2^2} + \frac{4\sigma^2\eta_t^2}{\thresh}
\end{align*}
Since this is true for any pair of processes $i,j\in V_t$ we get that
\begin{equation*}
    \label{eq:3}
     \max_{i,j\in V_t} \E \bracks*{\norm*{\mathbf{y}_t^i - \mathbf{y}_t^j}_2^2} 
     \leq \parens*{2 + 2\eta_t^2 L^2} \max_{i,j\in V_{t-1}}\E\bracks*{\norm*{\mathbf{x}_t^i - \mathbf{x}_t^j}_2^2} + \frac{4\sigma^2\eta_t^2}{\thresh}
\end{equation*}
Finally, using $q$-contraction property of MDAA in Line~\ref{lin:approx-agree} implies the lemma.
\end{proof}

When $q \parens*{2 + 2L^2\eta_t^2}< 1$ we get contraction relative to 
the previous iteration, with an additive term. 
Assuming that $\eta_t \leq \frac{1}{L}$, yields that $2 + 2L^2\eta_t^2 \leq 4$. 
Hence, for any $q < \frac{1}{4}$ this term is smaller than 1. 
Since the additive term also depends on $q$, we can use it to control its magnitude.
The next lemma bounds the distance between the learning parameters at each iteration, 
using a constant learning rate $\eta$ and contraction parameter $q\approx \eta$. (The proof of the lemma is deferred to Appendix~\ref{app:add-proofs}.)

\begin{restatable}{lemma}{smalldiam}
\label{small-diam}
    Consider Algorithm~\ref{alg:dis-sgd} with constant learning rate 
    $\eta \leq \min \braces*{\frac{1}{2},\frac{1}{L}}$ and
    $q = \frac{\eta}{4}$. Then for every iteration $t\geq 1$,
    \begin{equation*}
        \max_{i,j\in V_t} \E\bracks*{\norm*{\mathbf{x}_{t+1}^i - \mathbf{x}_{t+1}^i}_2^2} \leq  \frac{2\sigma^2 \eta^3}{\thresh}
    \end{equation*}
\end{restatable}
Lemma~\ref{small-diam} implies that setting 
$\eta = \mathcal{O} \parens*{\sqrt[3]{\thresh \epsmodel / \sigma^{2}}}$ 
yields internal convergence. 
Using the same learning rate as in Lemma~\ref{thm:non-convex}, 
$\eta = \sqrt{\thresh{}/T}$, 
we get that $\max_{i,j\in V_{\tau-1}} \E\bracks*{\norm*{\mathbf{x}_{\tau}^i - \mathbf{x}_{\tau}^j}_2^2}
    = \mathcal{O} \parens*{\thresh^{1/2} T^{-{3}/{2}}}$.
This yields internal convergence in
${\mathcal{O}} \parens*{\thresh^{1/3}\epsmodel^{-2/3}}$ iterations.

\paragraph{\emph{\bf External Convergence.}}
Finally, by bounding the distance of the learning parameters at each iteration, 
we prove that Algorithm~\ref{alg:dis-sgd},
using an MDAA algorithm with $\mathcal{O}\parens*{\log T}$ communication rounds, 
has convergence rate that matches the sequential SGD algorithm, 
up to a logarithmic factor in the number of rounds. 
In a nutshell, the proof of the theorem uses the bound on the distance 
between the learning parameters from Lemma~\ref{small-diam} 
in Lemma~\ref{thm:non-convex}. 

\begin{lemma}
\label{lem:non-convex-small-error}
Let $Q$ be an $L$-smooth cost function. 
Consider Algorithm~\ref{alg:dis-sgd} with $T\geq \max\braces*{16L^2\thresh,4\thresh}$, constant learning rate $\eta = \frac{\sqrt{\thresh}}{\sqrt{T}}$ and parameters set as in Lemma~\ref{small-diam},
then for every process $i\in V_T$
    \begin{equation*}
        \E\bracks*{\norm*{\nabla Q(\mathbf{x}_\tau^i)}_2^2} 
        \leq \frac{4 \parens*{Q(\mathbf{x}_1) - Q^*}}{\sqrt{\thresh T}} + \frac{20\sigma^2 \max\braces*{L,L^2}}{\sqrt{\thresh T}} + \frac{16\sigma^2}{\sqrt{\thresh T}}
    \end{equation*}
\end{lemma}

\begin{proof}
    Note that $\eta = \frac{\sqrt{\thresh}}{\sqrt{T}}\leq \min\braces*{\frac{1}{4L},\frac{1}{2}}$. Following Lemma~\ref{thm:non-convex}
    \begin{align*}
        &\E \bracks*{\norm*{\nabla Q(\mathbf{x}_\tau^i)}_2^2} \\
        &\leq \frac{4 \parens*{Q(\mathbf{x}_1) - Q^*}}{\sqrt{\thresh T}} + \frac{4\sigma^2 L}{\sqrt{\thresh T}} + \parens*{\frac{8T}{\thresh}  + 8L^2} \max_{t\in [T]} \max_{i,j\in V_t} \E \bracks*{\norm*{\mathbf{x}_t^i - \mathbf{x}^j_t}_2^2} \\
        &\leq \frac{4 \parens*{Q(\mathbf{x}_1) - Q^*}}{\sqrt{\thresh T}} + \frac{4\sigma^2 L}{\sqrt{\thresh T}} + \parens*{\frac{16T}{\thresh} + 16L^2} \frac{\sigma^2 \eta^3}{\thresh} && \explain{by Lemma~\ref{small-diam}}\\
        &= \frac{4 \parens*{Q(\mathbf{x}_1) - Q^*}}{\sqrt{\thresh T}} + \frac{4\sigma^2 L}{\sqrt{\thresh T}} + \parens*{\frac{16 T}{\thresh} + 16L^2} \frac{\sqrt{\thresh}\sigma^2}{T^{3/2}} && \explain{$\eta = \frac{\sqrt{\thresh}}{\sqrt{T}}$}\\
         &= \frac{4 \parens*{Q(\mathbf{x}_1) - Q^*}}{\sqrt{\thresh T}} + \frac{4\sigma^2 L}{\sqrt{\thresh{}T}} + \frac{16\sigma^2}{\sqrt{\thresh{}T}} + \frac{16 \sqrt{\thresh} L^2\sigma^2}{T^{3/2}} \\
         &\leq \frac{4 \parens*{Q(\mathbf{x}_1) - Q^*}}{\sqrt{\thresh T}} + \frac{20\sigma^2 \max\braces*{L,L^2}}{\sqrt{\thresh{}T}} + \frac{16\sigma^2}{\sqrt{\thresh T}} && \explain{$T\geq \thresh$}
    \end{align*}
\end{proof}

As each iteration consists of several communication rounds, we finally get:
\begin{theorem}
\label{thm:non-convex-small-error}
Let $Q$ be an $L$-smooth cost function. 
Consider Algorithm~\ref{alg:dis-sgd} with $T\geq \max\braces*{16L^2\thresh,4\thresh}$, constant learning rate $\eta = \frac{\sqrt{\thresh}}{\sqrt{T}}$ and $q = \frac{\eta}{4}$. Then, after $R = \mathcal{O} \parens*{T  \log T}$ communication rounds for some constant $C$ and every process $i\in V_T$
    \begin{equation*}
        \E\bracks*{\norm*{\nabla Q(\mathbf{x}_\tau^i)}_2^2} 
        \leq \frac{C \parens*{Q(\mathbf{x}_1) - Q^*} \log R}{\sqrt{\thresh R}} + \frac{5C\sigma^2 \max\braces*{L,L^2} \log R}{\sqrt{\thresh R}} + \frac{4C\sigma^2 \log R}{\sqrt{\thresh R}}
    \end{equation*}
\end{theorem}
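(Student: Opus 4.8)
The plan is to derive Theorem~\ref{thm:non-convex-small-error} from Theorem~\ref{thm:non-convex} in three steps: (i) use the recursion of Lemma~\ref{rec-model-diam} to show that, with the choice $q_t = \eta/4$, the inter-process diameter stays $O(\sigma^2\eta^3/N)$ throughout the execution; (ii) substitute this into Theorem~\ref{thm:non-convex} with $\eta = \sqrt N/\sqrt T$ so that the diameter contribution collapses into the leading order; and (iii) translate the resulting $1/\sqrt{NT}$ bound into a $\widetilde{\mathcal O}(1/\sqrt{NR})$ bound using $R = T\lceil\log_{23/24}(\sqrt N/4\sqrt T)\rceil$.

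For step (i): write $D_t \triangleq \max_{i,j\in V_{t-1}}\E\bracks*{\norm*{\mathbf x_t^i - \mathbf x_t^j}_2^2}$, and note $D_1 = 0$ because every process starts at $\mathbf x_1$. The hypotheses $T\ge 16L^2N$ and $T\ge 4N$ give $\eta\le \tfrac1{4L}$ (the condition already needed for Theorem~\ref{thm:non-convex} itself) and $\eta\le\tfrac12$, so in Lemma~\ref{rec-model-diam} the multiplicative factor $\rho \triangleq q_t(2+2L^2\eta^2) = \tfrac{\eta}{4}(2+2L^2\eta^2)$ obeys $\rho \le \tfrac{17\eta}{32} \le \tfrac{17}{64} < \tfrac12$, while the additive term is $q_t\cdot\tfrac{4\sigma^2\eta^2}{N} = \tfrac{\sigma^2\eta^3}{N}$. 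Unrolling $D_{t+1}\le \rho D_t + \tfrac{\sigma^2\eta^3}{N}$ from $D_1=0$ and summing the geometric series gives $D_t \le \tfrac{\sigma^2\eta^3}{N(1-\rho)}\le \tfrac{2\sigma^2\eta^3}{N}$; since $V_t\subseteq V_{t-1}$, the same bound holds for $\max_{t\in[T]}\max_{i,j\in V_t}\E\bracks*{\norm*{\mathbf x_t^i-\mathbf x_t^j}_2^2}$.

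For step (ii): plug $\max_{t,i,j}\E\bracks*{\norm*{\mathbf x_t^i-\mathbf x_t^j}_2^2}\le 2\sigma^2\eta^3/N$ and $\eta^3 = (N/T)^{3/2}$ into Theorem~\ref{thm:non-convex}. The diameter contribution $\parens*{\tfrac{8T}{N}+8L^2}\cdot\tfrac{2\sigma^2\eta^3}{N}$ simplifies to $\tfrac{16\sigma^2}{\sqrt{NT}}\parens*{1 + \tfrac{L^2N}{T}}$, and $T\ge 16L^2N$ makes the parenthetical at most $\tfrac{17}{16}$, so this contribution is $O(\sigma^2/\sqrt{NT})$ — the same order as the first two terms of Theorem~\ref{thm:non-convex}. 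Hence $\min_{t\in[T]}\E\bracks*{\norm*{\nabla Q(\mathbf x_t^i)}_2^2} \le \tfrac{c_1(Q(\mathbf x_1)-Q^*) + c_2\sigma^2 L + c_3\sigma^2}{\sqrt{NT}}$ for absolute constants $c_1,c_2,c_3$.

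For step (iii): each of the $T$ iterations calls $\texttt{MAA}_t(\cdot,q_t)$ with $q_t = \sqrt N/(4\sqrt T)<1$, which by the contraction guarantee for Algorithm~\ref{alg:multi-approx-agree} takes $\kappa\triangleq\lceil\log_{23/24}q_t\rceil\ge1$ communication rounds, so $R = T\kappa$ and $T = R/\kappa$. Bounding $\kappa \le \log_{23/24}q_t + 1 = \tfrac{\ln(1/q_t)}{\ln(24/23)}+1 \le \tfrac{\ln(4\sqrt T)}{\ln(24/23)}+1 = O(\log_2 T)$ (the additive constant absorbed since $T\ge 4$), and using $R\ge T$, we get $\sqrt\kappa \le \kappa = O(\log_2 R)$, hence $\tfrac1{\sqrt{NT}} = \tfrac{\sqrt\kappa}{\sqrt{NR}} = O\parens*{\tfrac{\log_2 R}{\sqrt{NR}}}$. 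Substituting into the bound from step (ii) and taking $C$ large enough yields the stated inequality. I expect the only real work to be verifying that the single choice $q_t=\eta/4$ does double duty — small enough for the geometric series to converge and to absorb the $8T/N$ blow-up of Theorem~\ref{thm:non-convex}, yet large enough that $\lceil\log_{23/24}q_t\rceil$ remains $O(\log T)$ — together with the $T\leftrightarrow R$ bookkeeping; the remaining manipulations are routine.
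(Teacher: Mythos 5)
Your proposal is correct and follows essentially the same route as the paper: the paper's Lemma~\ref{small-diam} is exactly your step (i) (unrolling the recursion of Lemma~\ref{rec-model-diam} with $q_t=\eta/4$ to get the $2\sigma^2\eta^3/N$ diameter bound), its Lemma~\ref{lem:non-convex-small-error} is your step (ii), and the final $T\leftrightarrow R$ bookkeeping with $\lceil\log_{23/24}(\sqrt N/4\sqrt T)\rceil = \mathcal{O}(\log_2 T)\leq \mathcal{O}(\log_2 R)$ is your step (iii). Your handling of the constants (e.g., $\rho\leq 17/64$ and the $\sqrt{\kappa}\leq\kappa$ absorption) is, if anything, slightly more careful than the paper's.
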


\begin{proof}
By Lemma~\ref{lem:non-convex-small-error}, 
    \begin{equation*}
        \E\bracks*{\norm*{\nabla Q(\mathbf{x}_\tau^i)}_2^2} 
        \leq \frac{4 \parens*{Q(\mathbf{x}_1) - Q^*}}{\sqrt{\thresh T}} + \frac{20\sigma^2 \max\braces*{L,L^2}}{\sqrt{\thresh T}} + \frac{16\sigma^2}{\sqrt{\thresh T}}.
    \end{equation*}
   By Theorem~\ref{thm:MDAA}, the MDAA algorithm requires $C\left\lceil \log \sqrt{T/\thresh} \right\rceil$ communication rounds at each iteration, for some constant $C$. After $T$ iterations, this yields $R = C T \left\lceil \log \parens*{\sqrt{T/\thresh}} \right\rceil \leq  C T \log_2 T$. Hence, using that $T\leq R$ and therefore, $\log T \leq \log R$
    \begin{align*}
        \E\bracks*{\norm*{\nabla Q(\mathbf{x}_\tau^i)}_2^2} 
        &\leq \frac{4C \parens*{Q(\mathbf{x}_1) - Q^*} \log R}{\sqrt{\thresh R}} + \frac{20C\sigma^2 \max\braces*{L,L^2} \log R}{\sqrt{\thresh R}} + \frac{16C\sigma^2 \log R}{\sqrt{\thresh R}}
    \end{align*}
\end{proof}


Theorem~\ref{thm:non-convex-small-error} shows that 
Algorithm~\ref{alg:dis-sgd} converges externally 
in $\widetilde{\mathcal{O}} \parens*{\epsconverge^{-2}/\thresh{}}$ communication rounds.
Recall that by Lemma~\ref{small-diam} the algorithm converges internally in 
$\widetilde{\mathcal{O}} \parens*{\thresh{}^{1/3}\delta^{-2/3}}$ communication rounds.
In both cases, dependencies on $\parens*{Q(\mathbf{x}_1) - Q^*}$, 
$L$ and $\sigma$ are neglected. 
In the non-convex case, internal convergence is faster than 
external convergence.

\section{Cluster-Based MDAA}
\label{sec:approximate agreement}

\begin{algorithm}[tb]
    \caption{Cluster-based multidimensional approximate agreement:
    code for process $i$ in cluster $c$}
    \label{alg:multi-approx-agree}
\begin{algorithmic}[1]
    \Statex {\sf MDAA}($\mathbf{x}, q$):
    \State $\mathbf{x}_1^i \gets \mathbf{x}$
    \For{$r=1\dots R = \lceil \log_{{23}/{24}} q \rceil$}
    \label{lin:loop}
        \State $\mathbf{y}_r^i \gets$ {\sf SMMDAA}$_r$($\mathbf{x}_r^i, 1/6$)
        \label{lin:sm-approx-agree}
        \Comment{Shared-memory algorithm inside cluster $c$}
        \State broadcast $\tup{r, \mathbf{y}_r^i}$ to all processes
        \State wait to receive messages of the form $\tup{r,-}$ \emph{representing} $n - \fp$ processes
        \State $Rcv \gets$ set of received values
        \label{lin:set}
        \State $\mathbf{x}_{r+1}^i \gets $ {\sf MidExtremes}($Rcv$)
        \label{lin:agg-rule}
    \EndFor
    \State output $\mathbf{x}_{R+1}^i$
\end{algorithmic}
\end{algorithm}
Algorithm~\ref{alg:multi-approx-agree} solves MDAA in the cluster-based model.
The algorithm leverages inter-cluster communication to increase the number of failures 
that can be tolerated, and only requires read and write operations.
The algorithm works in rounds, each starting with \emph{shared-memory} 
MDAA (SMMDAA) within each cluster (Line~\ref{lin:sm-approx-agree}).
This algorithm satisfies the two properties defined in Section~\ref{section:non convex} (convexity and contraction),
among the inputs of each cluster separately. Note that each round $1\leq r\leq R$ (and cluster) has a separate instance of the shared-memory algorithm.
The processes of each cluster communicate in this algorithm using only 
the common shared memory.
(See Algorithm~\ref{alg:sm-multi-approx-agree} 
in Section~\ref{sec:smmdaa} for full details and proof.)
This algorithm allows processes to wait to 
\emph{receive messages representing a majority of the processes at each round, 
rather than waiting for a majority of the processes}, 
which is the usual practice in crash-tolerant message-passing algorithms.
This guarantees that every pair of processes receive a value from a common cluster. 
Any two values sent from the same cluster have smaller diameter, 
compared to the diameter of all the process values, 
since they are the output of the inter-cluster SMMDAA algorithm. 
After collecting sufficiently many messages, 
the process computes the next round value using the \emph{MidExtremes}
rule~\cite{fggerFastApproximate}, 
which returns the average of the two values that realize 
the maximum Euclidean distance among all received vectors. 
Formally, for $X \subseteq \R^d$ 
\begin{equation*}
    {\sf MidExtremes}(X) = \parens*{\mathbf{a} + \mathbf{b}}/2\text{, where } (\mathbf{a},\mathbf{b}) = \argmax_{(\mathbf{a},\mathbf{b})\in X^2} \norm*{\mathbf{a} - \mathbf{b}}_2. 
\end{equation*}

We assume that $\fp\leq \fopt$, which by Lemma~\ref{lemm:fopt} implies that 
two sets of processes, each representing $n-f$ processes,
must include a process from the same cluster
(not necessarily the same process). 
This is 
optimal, since even \emph{one-dimensional} approximate agreement cannot be 
solved when $\fp > \fopt$~\cite{AttiyaKS20}.

Let $V_r$ be the processes that compute a value for round $r+1$ 
(Line~\ref{lin:agg-rule}), 
and let $V_0$ be all the processes that execute the first line. Let $\mathbf{x}_r$ be the multi-set of round $r$ values computed by processes in $V_{r-1}$ and $\mathbf{y}_r$ be the multi-set of the outputs from the SMMDAA algorithm in Line~\ref{lin:sm-approx-agree} of round $r$.
Note that at each round the value is a convex combination of the previous rounds' values, hence, this algorithm satisfies the convexity property.
Additionally, the convexity property of the shared-memory algorithm implies that
for any round $r$
\begin{equation}
\label{eq:4}
    \diameter{\mathbf{y}_r} \leq \diameter{\mathbf{x}_r}.
\end{equation}
As a reminder, the {diameter} of a set $A\subseteq \R^d$ is defined as
$\diameter{A} \triangleq \max_{\mathbf{x},\mathbf{y}\in A} \norm*{\mathbf{x} - \mathbf{y}}_2^2$.

In each round of our MDAA algorithm, 
the diameter of the values is reduced by a fixed \emph{contraction rate} 
relative to the diameter of the values in the previous round. 
The \emph{round-$r$ contraction rate} is an upper bound 
$\rho_r \geq {\diameter{\mathbf{x}_{r+1}}}/{\diameter{\mathbf{x}_r}}$.
We consider algorithms with a uniform upper bound $\rho<1$ 
on the contraction rate of all rounds. 
Contraction rate $\rho$ ensures $q$-contraction within 
$\lceil \log_{\rho} q \rceil \approx \log_2 (1/q)$ rounds (assuming $q, \rho < 1$).

We explain our cluster-based MDAA algorithm in the one-dimensional case; 
the algorithm and full proof cover the multidimensional case.
Intuitively, rather than requiring that collected sets intersect,
it suffices to assume that they contain ``close enough" values.
When $d=1$, MidExtremes returns the \emph{MidPoint},
i.e., for a set $X\subseteq \R$, ${\sf midpoint}(X) = \parens*{\min(X) + \max(X)}/2$. 
We also have that $\diameter{X} = \max(X) - \min(X)$ for $X\subseteq \R$.
Let $A,B \subseteq U \subseteq \R$, $A$ and $B$ stand for values collected in the same round by two different processes (Line~\ref{lin:set}), and $U$ is all the possible round values. For any pair of values $a\in A$ and $b\in B$, 
we have ${\sf midpoint}(A) \leq \parens*{a + \max(U)}/2$ 
and ${\sf midpoint}(B) \geq \parens*{\min(U) + b}/2$, which implies:
\begin{equation}
\label{eq:MP}
    {\sf midpoint}(A) - {\sf midpoint}(B) 
    \leq \frac{1}{2} \underbrace{\parens*{\max(U) - \min(U)}}_{\diameter{U}} + \frac{1}{2} \parens*{a - b}
\end{equation}

When $A$ and $B$ intersect, the second term in \eqref{eq:MP} zeros out, which gives contraction rate of $1/2$ when any two sets of collected 
values intersect. (In the general $d$-dimensional case, MidExtremes has contraction rate of $7/8$ under this assumption~\cite{fggerFastApproximate}.) The assumption that any two collected sets of size $n-f$ must intersect is not guaranteed when only assuming $\fp \leq \fopt$. Instead, we can use an SMMDAA algorithm with contraction parameter of 1/2. Assuming there are values $a\in A$ and $b\in B$ such that 
$a - b \leq \diameter{U}/2$, by~\eqref{eq:MP}, 
$\abs*{{\sf midpoint}(A) - {\sf midpoint}(B)} \leq 3/4\,\diameter{U}$,
leading to a slightly worse contraction rate of $3/4$.

Lemma~\ref{additive-contraction} generalizes
\cite[Lemmas 4 and 5]{fggerFastApproximate}
and is proved using the next lemma.
\begin{lemma}[{\cite[Lemma 3]{fggerFastApproximate}}]
\label{fugger-lemma3}
    Let $\mathbf{a},\mathbf{b},\mathbf{c}\in \R^d$, then setting $\mathbf{m} = \parens*{\mathbf{a} + \mathbf{b}}/2$
    \begin{equation*}
        \norm*{\mathbf{m} - \mathbf{c}}_2^2 
        \leq \frac{1}{2} \norm*{\mathbf{a} - \mathbf{c}}_2^2 + \frac{1}{2} \norm*{\mathbf{b} - \mathbf{c}}_2^2 -\frac{1}{4} \norm*{\mathbf{a} - \mathbf{b}}_2^2
    \end{equation*}
\end{lemma}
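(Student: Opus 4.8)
The plan is to prove the \emph{stronger} statement that the displayed inequality is in fact an \emph{equality} — it is the median (parallelogram) law with base points $\mathbf{a},\mathbf{b}$ and apex $\mathbf{c}$ — from which the inequality is immediate. Write $\mathbf{u} = \mathbf{a} - \mathbf{c}$ and $\mathbf{v} = \mathbf{b} - \mathbf{c}$, so that $\mathbf{m} - \mathbf{c} = \frac{1}{2}(\mathbf{u} + \mathbf{v})$ and $\mathbf{a} - \mathbf{b} = \mathbf{u} - \mathbf{v}$.

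First I would expand $\norm*{\mathbf{m} - \mathbf{c}}_2^2$ using $\norm*{\mathbf{x}}_2^2 = \tup{\mathbf{x},\mathbf{x}}$ together with bilinearity of the inner product \eqref{inned-prod-add}:
\[
\norm*{\mathbf{m} - \mathbf{c}}_2^2 = \frac{1}{4}\norm*{\mathbf{u} + \mathbf{v}}_2^2 = \frac{1}{4}\parens*{\norm*{\mathbf{u}}_2^2 + 2\tup{\mathbf{u},\mathbf{v}} + \norm*{\mathbf{v}}_2^2}.
\]
Then I would expand the subtracted term on the right-hand side of the claim in the same way:
\[
\frac{1}{4}\norm*{\mathbf{a} - \mathbf{b}}_2^2 = \frac{1}{4}\norm*{\mathbf{u} - \mathbf{v}}_2^2 = \frac{1}{4}\parens*{\norm*{\mathbf{u}}_2^2 - 2\tup{\mathbf{u},\mathbf{v}} + \norm*{\mathbf{v}}_2^2}.
\]
Adding these two identities, the cross terms $\pm\frac{1}{2}\tup{\mathbf{u},\mathbf{v}}$ cancel, yielding
\[
\norm*{\mathbf{m} - \mathbf{c}}_2^2 + \frac{1}{4}\norm*{\mathbf{a} - \mathbf{b}}_2^2 = \frac{1}{2}\norm*{\mathbf{u}}_2^2 + \frac{1}{2}\norm*{\mathbf{v}}_2^2 = \frac{1}{2}\norm*{\mathbf{a} - \mathbf{c}}_2^2 + \frac{1}{2}\norm*{\mathbf{b} - \mathbf{c}}_2^2,
\]
which rearranges to the claimed bound (with equality).

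There is no real obstacle: this is a one-line computation once one tracks the cross terms. It is worth recording, though, that the bound is an exact identity, so the constant $\frac{1}{4}$ in front of $\norm*{\mathbf{a} - \mathbf{b}}_2^2$ is best possible and is carried losslessly into the contraction estimates for MidExtremes and ApproachExtreme (Lemmas~\ref{fugger-lemma4} and~\ref{fugger-lemma5}) that invoke this lemma.
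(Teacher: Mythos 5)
Your proof is correct: the change of variables $\mathbf{u}=\mathbf{a}-\mathbf{c}$, $\mathbf{v}=\mathbf{b}-\mathbf{c}$ and the cancellation of cross terms gives the parallelogram/median identity, of which the stated bound is the ``$\leq$'' direction, and this is exactly the standard argument behind the cited Lemma 3 of F\"ugger et al.\ (the paper itself imports the lemma by citation rather than reproving it). Your observation that the bound is an exact identity, so the constant $\tfrac{1}{4}$ is tight, is accurate and consistent with how the lemma is used downstream.
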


\begin{lemma}
\label{additive-contraction}
    Let $\mathbf{a}, \mathbf{a}', \mathbf{b}, \mathbf{b}'\in \R^d$ such that
    \begin{equation}
        \label{eq:7}
        \diameter{\braces*{\mathbf{a}, \mathbf{a}', \mathbf{b}, \mathbf{b}'}} \leq c\norm*{\mathbf{a} - \mathbf{b}}_2^2 + c\norm*{\mathbf{a}' - \mathbf{b}'}_2^2 + d.
    \end{equation}
    for $c,d>0$. Then, setting $\mathbf{m} = \parens*{\mathbf{a} - \mathbf{b}}/2$ and $\mathbf{m}' = \parens*{\mathbf{a}' - \mathbf{b}'}/2$
    \begin{equation*}
        \norm*{\mathbf{m} - \mathbf{m'}}_2^2 
        \leq \frac{4c-1}{4c}\diameter{\braces*{\mathbf{a},\mathbf{a}',\mathbf{b},\mathbf{b}'}} + \frac{d}{4c}
    \end{equation*}
\end{lemma}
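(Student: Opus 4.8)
The plan is to follow the structure of the proofs of Lemmas~\ref{fugger-lemma4} and~\ref{fugger-lemma5} in~\cite{fggerFastApproximate}, which repeatedly invoke the midpoint identity of Lemma~\ref{fugger-lemma3}, and only feed in the relaxed hypothesis~\eqref{eq:7} at the very end. Write $D = \diameter{\braces*{\mathbf{a},\mathbf{a}',\mathbf{b},\mathbf{b}'}}$, and let $\mathbf{m} = \parens*{\mathbf{a}+\mathbf{b}}/2$ and $\mathbf{m}' = \parens*{\mathbf{a}'+\mathbf{b}'}/2$ be the two midpoints (as in Lemma~\ref{fugger-lemma3}). By definition of the diameter, every pairwise distance among $\mathbf{a},\mathbf{a}',\mathbf{b},\mathbf{b}'$ is at most $\sqrt{D}$.

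First I would apply Lemma~\ref{fugger-lemma3} with $\mathbf{c} = \mathbf{m}'$ to get
\[
    \norm*{\mathbf{m} - \mathbf{m}'}_2^2 \leq \tfrac12\norm*{\mathbf{a} - \mathbf{m}'}_2^2 + \tfrac12\norm*{\mathbf{b} - \mathbf{m}'}_2^2 - \tfrac14 \norm*{\mathbf{a} - \mathbf{b}}_2^2 .
\]
Then I would apply Lemma~\ref{fugger-lemma3} twice more, once to $\norm*{\mathbf{a}-\mathbf{m}'}_2^2$ and once to $\norm*{\mathbf{b}-\mathbf{m}'}_2^2$, now viewing $\mathbf{m}' = \parens*{\mathbf{a}'+\mathbf{b}'}/2$ as the midpoint; this produces the four mixed distances $\norm*{\mathbf{a}-\mathbf{a}'}_2,\norm*{\mathbf{a}-\mathbf{b}'}_2,\norm*{\mathbf{b}-\mathbf{a}'}_2,\norm*{\mathbf{b}-\mathbf{b}'}_2$, each a pairwise distance within $\braces*{\mathbf{a},\mathbf{a}',\mathbf{b},\mathbf{b}'}$ and hence $\leq \sqrt{D}$, together with two copies of $-\tfrac14\norm*{\mathbf{a}'-\mathbf{b}'}_2^2$. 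Collecting terms, $\tfrac12\norm*{\mathbf{a}-\mathbf{m}'}_2^2 + \tfrac12\norm*{\mathbf{b}-\mathbf{m}'}_2^2 \leq D - \tfrac14\norm*{\mathbf{a}'-\mathbf{b}'}_2^2$, so
\[
    \norm*{\mathbf{m} - \mathbf{m}'}_2^2 \leq D - \tfrac14\parens*{\norm*{\mathbf{a}-\mathbf{b}}_2^2 + \norm*{\mathbf{a}'-\mathbf{b}'}_2^2} .
\]

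Finally, I would rewrite hypothesis~\eqref{eq:7} as the lower bound $\norm*{\mathbf{a}-\mathbf{b}}_2^2 + \norm*{\mathbf{a}'-\mathbf{b}'}_2^2 \geq \parens*{D-d}/c$ and substitute it into the previous display, obtaining $\norm*{\mathbf{m}-\mathbf{m}'}_2^2 \leq D - \tfrac14\cdot\tfrac{D-d}{c} = \tfrac{4c-1}{4c}D + \tfrac{d}{4c}$, which is the claim. There is no substantial obstacle here; the two points requiring care are (i) that the hypothesis must be used as a \emph{lower} bound on $\norm*{\mathbf{a}-\mathbf{b}}_2^2 + \norm*{\mathbf{a}'-\mathbf{b}'}_2^2$ — which is legitimate for any sign of $D-d$ since the left side is nonnegative — and (ii) verifying that all four cross terms generated by the inner applications of Lemma~\ref{fugger-lemma3} are indeed among the six pairwise distances of the set, so that bounding each by $\sqrt{D}$ is valid.
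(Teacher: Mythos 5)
Your proposal is correct and follows essentially the same route as the paper: three applications of Lemma~\ref{fugger-lemma3}, bounding the four cross distances by the diameter to get $\norm*{\mathbf{m}-\mathbf{m}'}_2^2 \leq \diameter{\braces*{\mathbf{a},\mathbf{a}',\mathbf{b},\mathbf{b}'}} - \tfrac14\parens*{\norm*{\mathbf{a}-\mathbf{b}}_2^2+\norm*{\mathbf{a}'-\mathbf{b}'}_2^2}$, and then substituting the hypothesis~\eqref{eq:7} as a lower bound on that sum. The only difference is the order in which the midpoint lemma is applied, which does not change the argument.
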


\begin{proof}
    Denote $\delta =  \diameter{\braces*{\mathbf{a},\mathbf{a}',\mathbf{b},\mathbf{b}'}}$. By Lemma~\ref{fugger-lemma3}
\begin{equation}
\begin{aligned}
\label{eq:2}
    \norm*{\mathbf{a} - \mathbf{m'}}_2^2 
    &= \frac{1}{2} \norm*{\mathbf{a}' - \mathbf{a}}_2^2 + \frac{1}{2} \norm*{\mathbf{b}' - \mathbf{a}}_2^2 - \frac{1}{4} \norm*{\mathbf{a}' - \mathbf{b}'}_2^2 
    \leq \delta - \frac{1}{4} \norm*{\mathbf{a}' - \mathbf{b}'}_2^2 \\
    \norm*{\mathbf{b} - \mathbf{m'}}_2^2 
    &= \frac{1}{2} \norm*{\mathbf{a}' - \mathbf{b}}_2^2 + \frac{1}{2} \norm*{\mathbf{b}' - \mathbf{b}}_2^2 - \frac{1}{4} \norm*{\mathbf{a}' - \mathbf{b}'}_2^2 
    \leq \delta - \frac{1}{4} \norm*{\mathbf{a}' - \mathbf{b}'}_2^2 
\end{aligned}
\end{equation}
Hence,
\begin{align*}
    \norm*{\mathbf{m} - \mathbf{m'}}_2^2 
    &= \frac{1}{2} \norm*{\mathbf{a} - \mathbf{m'}}_2^2 + \frac{1}{2} \norm*{\mathbf{b} - \mathbf{m'}}_2^2 -\frac{1}{4} \norm*{\mathbf{a} - \mathbf{b}}_2^2 && \explain{by Lemma~\ref{fugger-lemma3}}\\
    &\leq \delta - \frac{1}{4} \parens*{\norm*{\mathbf{a} - \mathbf{b}}_2^2 + \norm*{\mathbf{a}' - \mathbf{b}'}_2^2} && \explain{by \eqref{eq:2}}\\
    &\leq \delta - \frac{1}{4c} \delta +\frac{d}{4c} && \explain{by \eqref{eq:7}}\\
    &= \frac{4c-1}{4c}\delta + \frac{d}{4c}
\end{align*}
\end{proof}

The next lemma relates 
the SMMDAA contraction parameter $q_{sm}$ 
and the global contraction rate $\rho$ of the outer algorithm in the general $d$-dimensional case, using Lemma~\ref{additive-contraction},
which is $\rho \leq {11}/{12} + {q_{sm}}/{4}$. 
Using $q_{sm} = 1/6$, we get an MDAA algorithm with contraction rate $23/24$.

\begin{lemma}
\label{approx-agree-round-contraction}
For any round $r\geq 1$, 
    $\diameter{\mathbf{x}_{r+1}} \leq \frac{23}{24} \diameter{\mathbf{x}_{r}}$.
\end{lemma}

\begin{proof}
Consider two processes $i,j\in V_r$. Let $\mathbf{a}$, $\mathbf{b}$ be the values such that $\mathbf{x}_{r+1}^i = \parens*{\mathbf{a} + \mathbf{b}}/2$ and $\mathbf{a}'$, $\mathbf{b}'$ be the values such that $\mathbf{x}_{r+1}^j = \parens*{\mathbf{a}' + \mathbf{b}'}/2$. Since processes $i$ and $j$ received messages representing $n-f$ processes and $f\leq \fopt$, following Lemma~\ref{lemm:fopt}, both processes must have received messages from the same cluster $c$. Thus, process $i$ received value $\mathbf{y}_r^k$ and process $j$ received value $\mathbf{y}_r^l$, where $k$ and $l$ are part the same cluster. 
Since both $\mathbf{y}_r^k$ and $\mathbf{y}_r^l$ are the outputs of the same SMMDAA algorithm satisfying $1/6$-contraction, $\norm*{\mathbf{y}_r^k - \mathbf{y}_r^l}_2^2 = d \leq \frac{1}{6}\,\diameter{\mathbf{x}_r}$.
As MidExtremes returns the values realizing the maximal Euclidean distance we have that $\norm*{\mathbf{a} - \mathbf{y}_r^k}_2^2 \leq \norm*{\mathbf{a} - \mathbf{b}}_2^2$ and $\norm*{\mathbf{a}' - \mathbf{y}_r^l}_2^2 \leq \norm*{\mathbf{a}' - \mathbf{b}'}_2^2$.
Hence, By using \eqref{relaxed-triangle-ineq2}, 
\begin{align*}
    \norm*{\mathbf{a} - \mathbf{a}'}_2^2 &\leq 3\norm*{\mathbf{a} - \mathbf{y}_r^k}_2^2 + 3\norm*{\mathbf{y}_r^k - \mathbf{y}_r^l}_2^2 + 3\norm*{\mathbf{y}_r^l - \mathbf{a'}}_2^2 \\
    &\leq 3\norm*{\mathbf{a} - \mathbf{b}}_2^2 + 3\norm*{\mathbf{a}' - \mathbf{b}'}_2^2 + 3d.
\end{align*}

Similarly, this holds for any pair in $\braces{\mathbf{a},\mathbf{a}',\mathbf{b},\mathbf{b}'}$, therefore,
 $\diameter{\braces*{\mathbf{a},\mathbf{a}',\mathbf{b},\mathbf{b}'}}
\leq 3\norm*{\mathbf{a} - \mathbf{b}}_2^2 + 3\norm*{\mathbf{a}' - \mathbf{b}'}_2^2 + 3d$. By Lemma~\ref{additive-contraction}
\begin{align*}
    \norm*{\mathbf{x}_{r+1}^i - \mathbf{x}_{r+1}^j}_2^2 
    &\leq \frac{11}{12} \diameter{\braces*{\mathbf{a},\mathbf{a}',\mathbf{b},\mathbf{b}'}} + \frac{d}{4} \\
    &\leq \frac{11}{12} \diameter{\mathbf{y}_r} + \frac{d}{4} \\
    &\leq \frac{11}{12} \diameter{\mathbf{x}_r} + \frac{1}{24} \diameter{\mathbf{x}_r} 
    = \frac{23}{24} \diameter{\mathbf{x}_r}  && \explain{by \eqref{eq:4}} 
\end{align*}
The lemma follows since the bound holds for 
every pair of vectors in $\mathbf{x}_{r+1}$.
\end{proof}

Lemma~\ref{approx-agree-round-contraction} shows that algorithm~\ref{alg:multi-approx-agree} achieves contraction rate of 23/24, hence we get our main result:
\begin{theorem}
\label{thm:MDAA}
Algorithm~\ref{alg:multi-approx-agree} satisfies $q$-contraction
within $\lceil \log_{{23}/{24}} q \rceil$ 
communication rounds.
\end{theorem}

\begin{proof}
    Following Lemma~\ref{approx-agree-round-contraction}, for any round $r\geq 1$, 
    $\diameter{\mathbf{x}_{r+1}} \leq \frac{23}{24} \diameter{\mathbf{x}_{r}}$ Inductivly, after $R=\lceil \log_{{23}/{24}} q \rceil$ rounds
    \begin{equation*}
        \diameter{\mathbf{x}_{R+1}} 
        \leq \parens*{\frac{23}{24}}^R \diameter{\mathbf{x}_1} 
        = \parens*{\frac{23}{24}}^{\lceil \log_{{23}/{24}} q \rceil} \diameter{\mathbf{x}_1}
        \leq q\, \diameter{\mathbf{x}_1}.
    \end{equation*}
\end{proof}

\subsection{Shared-Memory MDAA}
\label{sec:smmdaa}

\begin{algorithm}[tb]
    \caption{Shared-memory multidimensional approximate agreement: code for process $i$ in cluster $P_c$}
    \label{alg:sm-multi-approx-agree}
	\begin{algorithmic}[1]
	    \Statex \textbf{Shared variables:} $\forall i \in P_c$, $A[i]$, initially $\tup*{\bot,\bot}$
	    \Statex {\sf SMMDAA}($\mathbf{x}, q$):
	    \State $r \gets 1$
	    \State $A[i] \gets \tup*{1,\mathbf{x}}$
        \label{lin:2}
		\While{$r \leq R = \lceil \log_{7/8} q \rceil$}
		\State $\forall j\in P_c$, $\tup*{r_j,\mathbf{x}_j} \gets A[j]$
		\Comment{Collect $A$}
		\State $r_{max} \gets \max_{j\in P_c} r_j$
        \Comment{Ignore $\bot$-values}
		\If{$r = r_{max}$}
		\State $X \gets \braces*{\mathbf{x}_j\,|\,r_j = r_{max}}$
		\label{lin:after-collect}
		    \State $A[i] \gets \tup*{r+1,\text{{\sf MidExtremes}}(X)}$
                \label{lin:write-val}
		    \State $r \gets r+1$
		\Else{} 
		    \State $r \gets r_{max}$
		    \label{lin:skip}
                \Comment{Skip to round $r_{max}$}
		\EndIf
		\label{lin:sm-agg-rule}
        \EndWhile
        \State \algorithmicreturn{} random $\mathbf{x}_j$ s.t. $\tup*{R+1,\mathbf{x}_j}= A[j]$
	\end{algorithmic}
\end{algorithm}

Algorithm~\ref{alg:sm-multi-approx-agree} is executed inside cluster $P_c$, i.e., among only the processes in the cluster which communicate using the shared memory. 
It adapts the algorithm of~\cite{fggerFastApproximate},
originally for the message-passing model, to the shared-memory model.
The algorithm is similar to the simple shared-memory snapshot 
algorithm~\cite{Moran95} (see~\cite{HagitBook}), 
but it uses only atomic read and writes to the shared single-writer 
multi-reader registers. 
To reuse the same array for all rounds, the algorithm employs \emph{skipping}, 
where a ``slow'' process joins the most recent round.

The algorithm proceeds in (asynchronous) rounds: 
in each round, the process reads all the values written by other processes for this round,
and aggregates to compute its value for the next round. 
To ensure a common value for two sets of values collected 
at the same round, a process does not write to $A$ if it reads
a later round number from $A$ then its current round (i.e., its current $r$ value). 
If this happens, the process skips to the maximal round number it has read.


The next lemma is used to bound the convergence rate of MidExtremes
when every pair of processes collects a common value in each round.
Note that $\sqrt{\diameter{A}} = \max_{\mathbf{x},\mathbf{y}\in A} \norm*{\mathbf{x} - \mathbf{y}}_2$.

\begin{lemma}[{\cite[Lemma 4]{fggerFastApproximate}}]
\label{fugger-lemma4}
    Let $\mathbf{a}, \mathbf{a}', \mathbf{b}, \mathbf{b}'\in \R^d$ such that $\sqrt{\diameter{\braces*{\mathbf{a}, \mathbf{a}', \mathbf{b}, \mathbf{b}'}}} \leq \norm*{\mathbf{a} - \mathbf{b}}_2 + \norm*{\mathbf{a}' - \mathbf{b}'}_2$. Then, setting $\mathbf{m} = \parens*{\mathbf{a} - \mathbf{b}}/2$ and $\mathbf{m}' = \parens*{\mathbf{a}' - \mathbf{b}'}/2$, 
    \begin{equation*}
        \norm*{\mathbf{m} - \mathbf{m'}}_2^2 \leq \frac{7}{8} \diameter{\braces*{\mathbf{a}, \mathbf{a}', \mathbf{b}, \mathbf{b}'}}
    \end{equation*}
\end{lemma}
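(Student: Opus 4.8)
The plan is a short computation driven entirely by the parallelogram law $\norm*{\mathbf{x}+\mathbf{y}}_2^2 + \norm*{\mathbf{x}-\mathbf{y}}_2^2 = 2\norm*{\mathbf{x}}_2^2 + 2\norm*{\mathbf{y}}_2^2$, whose only idea is to split the difference of the two \texttt{MidExt} outputs over the four input points in the two natural ways and then add the resulting estimates. Abbreviate $D \triangleq \sqrt{\diameter{\braces*{\mathbf{a},\mathbf{a}',\mathbf{b},\mathbf{b}'}}}$, so that every pairwise Euclidean distance among $\mathbf{a},\mathbf{a}',\mathbf{b},\mathbf{b}'$ is at most $D$, and recall that the hypothesis reads $D \le \norm*{\mathbf{a}-\mathbf{b}}_2 + \norm*{\mathbf{a}'-\mathbf{b}'}_2$. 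Writing $\mathbf{m},\mathbf{m}'$ for the \texttt{MidExt} outputs — the averages of $\braces*{\mathbf{a},\mathbf{b}}$ and of $\braces*{\mathbf{a}',\mathbf{b}'}$ — we have $2(\mathbf{m}-\mathbf{m}') = (\mathbf{a}+\mathbf{b}) - (\mathbf{a}'+\mathbf{b}')$.

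First I would apply the parallelogram law to the grouping $2(\mathbf{m}-\mathbf{m}') = (\mathbf{a}-\mathbf{a}') + (\mathbf{b}-\mathbf{b}')$, taking $\mathbf{x}=\mathbf{a}-\mathbf{a}'$ and $\mathbf{y}=\mathbf{b}-\mathbf{b}'$ (so that $\mathbf{x}-\mathbf{y} = (\mathbf{a}-\mathbf{b}) - (\mathbf{a}'-\mathbf{b}')$) and bounding $\norm*{\mathbf{a}-\mathbf{a}'}_2, \norm*{\mathbf{b}-\mathbf{b}'}_2 \le D$:
\begin{equation*}
4\norm*{\mathbf{m}-\mathbf{m}'}_2^2 \le 4D^2 - \norm*{(\mathbf{a}-\mathbf{b}) - (\mathbf{a}'-\mathbf{b}')}_2^2 .
\end{equation*}
Symmetrically, the grouping $2(\mathbf{m}-\mathbf{m}') = (\mathbf{a}-\mathbf{b}') + (\mathbf{b}-\mathbf{a}')$, with $\mathbf{x}=\mathbf{a}-\mathbf{b}'$ and $\mathbf{y}=\mathbf{b}-\mathbf{a}'$ (so $\mathbf{x}-\mathbf{y}=(\mathbf{a}-\mathbf{b})+(\mathbf{a}'-\mathbf{b}')$), gives $4\norm*{\mathbf{m}-\mathbf{m}'}_2^2 \le 4D^2 - \norm*{(\mathbf{a}-\mathbf{b})+(\mathbf{a}'-\mathbf{b}')}_2^2$.

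Next I would add these two inequalities and invoke the parallelogram law once more, on the pair $\mathbf{a}-\mathbf{b},\ \mathbf{a}'-\mathbf{b}'$: the two subtracted terms combine to $2\norm*{\mathbf{a}-\mathbf{b}}_2^2 + 2\norm*{\mathbf{a}'-\mathbf{b}'}_2^2$, leaving
\begin{equation*}
\norm*{\mathbf{m}-\mathbf{m}'}_2^2 \le D^2 - \tfrac14\parens*{\norm*{\mathbf{a}-\mathbf{b}}_2^2 + \norm*{\mathbf{a}'-\mathbf{b}'}_2^2}.
\end{equation*}
To close, the hypothesis together with the elementary inequality $s^2+t^2 \ge \tfrac12(s+t)^2$ gives $\norm*{\mathbf{a}-\mathbf{b}}_2^2 + \norm*{\mathbf{a}'-\mathbf{b}'}_2^2 \ge \tfrac12\parens*{\norm*{\mathbf{a}-\mathbf{b}}_2 + \norm*{\mathbf{a}'-\mathbf{b}'}_2}^2 \ge \tfrac12 D^2$, hence $\norm*{\mathbf{m}-\mathbf{m}'}_2^2 \le D^2 - \tfrac18 D^2 = \tfrac78 \diameter{\braces*{\mathbf{a},\mathbf{a}',\mathbf{b},\mathbf{b}'}}$.

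The only step that is not pure bookkeeping is the decision to use both groupings. Either one alone, fed through the triangle inequality, yields just the trivial constant $1$ (indeed $\norm*{\mathbf{m}-\mathbf{m}'}_2 \le D$), and it is precisely the sum of the two estimates that converts the otherwise-wasted terms into the single quantity $\norm*{\mathbf{a}-\mathbf{b}}_2^2 + \norm*{\mathbf{a}'-\mathbf{b}'}_2^2$ that the hypothesis is tailored to bound from below. Everything else — the three applications of the parallelogram law and the final scalar inequality — is routine, and the hypothesis is used only at the last step, and only through its squared form.
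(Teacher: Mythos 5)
Your proof is correct and is essentially the paper's argument: the paper cites this lemma from \cite{fggerFastApproximate} and proves the generalization Lemma~\ref{additive-contraction} via the median identity of Lemma~\ref{fugger-lemma3}, which is just the parallelogram law in the form you use, and both routes pass through the identical key bound $\norm*{\mathbf{m}-\mathbf{m}'}_2^2 \le \diameter{\braces*{\mathbf{a},\mathbf{a}',\mathbf{b},\mathbf{b}'}} - \tfrac14\parens*{\norm*{\mathbf{a}-\mathbf{b}}_2^2+\norm*{\mathbf{a}'-\mathbf{b}'}_2^2}$ before invoking the hypothesis through $(s+t)^2 \le 2(s^2+t^2)$. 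You also correctly interpreted the midpoints as $\parens*{\mathbf{a}+\mathbf{b}}/2$ and $\parens*{\mathbf{a}'+\mathbf{b}'}/2$, consistent with the definition of \texttt{MidExt} (the ``$\parens*{\mathbf{a}-\mathbf{b}}/2$'' in the statement is a typo).
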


Let $V_r$, $r\geq 1$, be the set of processes that write a value with round number $r+1$ to $A$ (i.e., $\tup*{r+1,-}$) in Line~\ref{lin:write-val}, by the code this must happen in round $r$. Let $V_0$ be the set of processes that write a value with round number $1$ to $A$ in Line~\ref{lin:2}.
Let $\mathbf{x}_{r}$ be the multi-set of all values ever written with round number $r$.
The next two properties can be easily proven using an inductive claim:
\begin{enumerate}
    \item For every $i\in P_c$, the value of $A[i]$ is strictly monotonically increasing, with respect to the written round number, during the execution of the algorithm.
    \item For every $0\leq r\leq R$, $V_r\neq \emptyset$.
\end{enumerate}

\begin{lemma}
\label{smmaa-contraction}
Algorithm~\ref{alg:sm-multi-approx-agree} satisfies $q$-contraction
within $\lceil \log_{{7}/{8}} q \rceil$ rounds.
\end{lemma}

\begin{proof}
Let round $1\leq r\leq R$ and let $i\in V_{r-1}$ be the first process to write to $A$ with round number $r$. Let $\mathbf{c}$ be the value written to $A[i]$ with round number $r$. Consider any process $j\in V_{r}$. Since $j$ writes a value with round number $r+1$, it must read $A$ in round $r$ and it holds that $r_{max} = r$ in that round.
We show that $\mathbf{c}$ must be part of the set  $X$ held by $j$ after executing Line~\ref{lin:after-collect} in round $r$.

There are two cases. In the first case $j\in V_{r-1}$, so $j$ writes to $A[j]$ with round number $r$. From our assumption, this write happens after the write to $A[i]$ with round number $r$. Since processes first write to $A$ with round number $r$ in round $r-1$ and then read $A$ in round number $r$, this implies that $j$ reads $A[i]$ in round $r$ after $i$ writes $\tup*{r,\mathbf{c}}$ to it.
Since the round numbers written to $A[i]$ only increase, $j$ must have read this value, otherwise, it must read a larger round number. By the code, the rounds processes take part in only increase during the execution, therefore, this implies that $j$ never writes with round number $r+1$.

In the second case, $j\notin V_{r-1}$. This implies that $j$ read round number $r$ from $A$ in a round $r'<r$, and then skipped to round $r$ through Line~\ref{lin:skip}. Thus, the read of $A$ by $j$ in round $r$ starts after $i$ writes $\tup*{r,\mathbf{c}}$ to $A[i]$. The rest of the proof is similar to the previous case.

Let two processes $i,j\in V_r$. Let $X_i$, $X_j$ be the value of $X$ held by $i$, $j$ after executing Line~\ref{lin:after-collect} in round $r$, respectively. By what we just proved, there is some value $\mathbf{c}$ such that $\mathbf{c}\in X_i\cap X_j$.
Let $\mathbf{a}$, $\mathbf{b}$ be the values such that $\mathbf{x}^i = \parens*{\mathbf{a} + \mathbf{b}}/2$ is written with round $r+1$ to $A[i]$ and $\mathbf{a}'$, $\mathbf{b}'$ be the values such that $\mathbf{x}^j\ = \parens*{\mathbf{a}' + \mathbf{b}'}/2$ is written with round $r+1$ to $A[j]$. Note that $\mathbf{a}, \mathbf{a}', \mathbf{b}$ and $\mathbf{b}'$ are part of $\mathbf{x}_r$.

Since $\mathbf{c}\in X_i$ we have that $\norm*{\mathbf{a} - \mathbf{c}}_2 \leq \norm*{\mathbf{a} - \mathbf{b}}_2$ and $\norm*{\mathbf{b} - \mathbf{c}}_2 \leq \norm*{\mathbf{a} - \mathbf{b}}_2$. 
Similarly,  as $\mathbf{c}\in X_j$, $\norm*{\mathbf{a}' - \mathbf{c}}_2 \leq \norm*{\mathbf{a}' - \mathbf{b}'}_2$ and $\norm*{\mathbf{b}' - \mathbf{c}}_2 \leq \norm*{\mathbf{a}' - \mathbf{b}'}_2$. Therefore, using triangle inequality, $\sqrt{\diameter{\braces*{\mathbf{a},\mathbf{a'},\mathbf{b},\mathbf{b'}}}} \leq \norm*{\mathbf{a} - \mathbf{b}}_2 + \norm*{\mathbf{a'} - \mathbf{b'}}_2$ and following Lemma~\ref{fugger-lemma4},
\begin{equation*}
    \norm*{\mathbf{x}^i - \mathbf{x}^j}_2^2 
    \leq \frac{7}{8} \diameter{\braces*{\mathbf{a}, \mathbf{a}', \mathbf{b}, \mathbf{b}'}}
    \leq \frac{7}{8} \diameter{\mathbf{x}_r}
\end{equation*}
Since this is true for any pair of processes in $V_r$, then $\diameter{\mathbf{x}_{r+1}} \leq \frac{7}{8} \diameter{\mathbf{x}_r}$.
After $R=\lceil \log_{{7}/{8}} q \rceil$ rounds
    \begin{equation*}
        \diameter{\mathbf{x}_{R+1}} 
        \leq \parens*{\frac{7}{8}}^R \diameter{\mathbf{x}_1} 
        = \parens*{\frac{7}{8}}^{\lceil \log_{{7}/{8}} q \rceil} \diameter{\mathbf{x}_1}
        \leq q\, \diameter{\mathbf{x}_1}.
    \end{equation*}
\end{proof}
By Lemma~\ref{smmaa-contraction}, Algorithm~\ref{alg:sm-multi-approx-agree} {provides} 
$1/6$-contraction in $14$ rounds.
\section{Impossibility of Asynchronous SGD with System Partitions}
\label{section:impossibility}

A non-convex function $Q$ may have several \emph{stationary points}, 
i.e., $\mathbf{x}\in \R^d$ such that $\nabla Q(\mathbf{x}) = 0$.
A stationary point can be either global or local minimum or maximum, 
or a \emph{saddle point}, which is not a local extremum of the function. 
Although SGD converges to a stationary point (see \eqref{convergence-require}),
it may happen that starting from the same initial point, 
SGD converges to different stationary points in different random executions.
If these stationary points are $\gamma$ apart, for a large enough $\gamma$,
and the probability of reaching them is large enough, 
then we can prove that no distributed SGD implementation, 
satisfying both internal and external convergence, 
can tolerate \emph{partitioning}. 
(Recall that in our model, the system is \emph{partitioned} if $\fp > \fopt$, 
and the correct processes may represent a minority of the processes.)
Intuitively, this is because each partition can converge individually 
to points that are $\gamma$ apart from each other, violating internal convergence.

Definition~\ref{def:split}, below, formalizes the phenomenon 
where the SGD algorithm converges $\gamma$ apart in two different 
random executions with probability at least $p$.

\paragraph{\emph{\bf Sequential Algorithm.}}
Let $\mathcal{A}^{seq}(Q,G,\mathbf{x}_0,\eta,T)$ be a sequential SGD algorithm optimizing 
the function $Q$ using stochastic gradient $G$ for $T$ iterations, starting from $\mathbf{x}_0$ with learning rate $\eta$. The stochastic gradient $G$ is an unbiased estimator of the gradient \eqref{unbiased-est} and has bounded variance \eqref{bounded-var}.
For simplicity, we assume a constant learning rate, but this definition can be extended to also consider decreasing learning rates.
Let $\mathbf{x}_{seq}$ be a random variable, 
corresponding to the output of the sequential algorithm. 
We denote the event that SGD converges to a point in a set of stationary points 
$S$ using the event $E_{seq}(\beta,S)$, for some small $\beta$. 
Formally, for $\beta\in \R$ and point $\mathbf{x}\in \R^d$, 
let $E_{seq}(\beta,\mathbf{x})$ be the event that 
$\norm*{\mathbf{x}_{seq} - \mathbf{x}}_2 \leq \beta$. 
For a set of points $S\subseteq \R^d$, 
$E_{seq}(\beta,S)$ denotes the event that for some $\mathbf{x}\in S$,
$\norm*{\mathbf{x}_{seq} - \mathbf{x}}_2 \leq \beta$,
i.e., $E_{seq}(\beta,S) = \bigcup_{\mathbf{x}\in S} E_{seq}(\beta,\mathbf{x})$.
Note that for a stationary point $\mathbf{x}$, using smoothness
\eqref{l-lipsc}, 
$\norm*{\nabla Q(\mathbf{x}_{seq})}_2^2
\leq L^2 \norm*{\mathbf{x}_{seq} - \mathbf{x}}_2^2$.
This implies that if the algorithm converges near a stationary point, 
then this point has small squared gradient.
The distance between two sets of points $S_1,S_2\subseteq \R^d$ is defined as
$\mathrm{dist}\parens*{S_1,S_2} \triangleq \min_{\mathbf{x}_1\in S_1, 
\mathbf{x}_2\in S_2} \norm*{\mathbf{x}_1 - \mathbf{x}_2}_2$, i.e., the smallest distance between a point in $S_1$ and a point in $S_2$.

\begin{definition}\label{def:split}
Function $Q$ is \emph{($\gamma$,$p$)-split} for $\gamma>0$ and $p\in (0,1)$, 
if there are two sets of points $S_1, S_2\subseteq \R^d$ where 
$\mathrm{dist}(S_1,S_2) \geq d$ and for some $\alpha,\beta\geq0$
such that $\parens*{d - \alpha - \beta} \geq \gamma$, 
$\,\Prob \bracks*{E_{seq}(\alpha,S_1)} \Prob \bracks*{E_{seq}(\beta,S_2)} \geq p$.
\end{definition}

Next, we present two examples of ($\gamma$,$p$)-split functions where $p = 1/2$.

\paragraph{\emph{\bf Example 1.}}
Consider the function $Q(x) = -x^2 e^{-x^2}$.
Its derivative is $\nabla Q(x) = Q'(x) = 2x\left(x^{2}-1\right)e^{-x^{2}}$.
The function $Q$ and its derivative are depicted in Figure~\ref{fig:example1}.
The function $Q$ is $L$-smooth (for some $L$) since its second derivative is bounded. 
There are two (global) minimum points, at $x\in \braces*{-1,1}$, both with the minimum value $-e^{-1}$.
Consider the following stochastic gradient; for every $x\neq 0$, $G(x) = \nabla Q(x)$, and for $x=0$, 
\begin{equation*}
    G(0) = 
\begin{cases}
    \sigma & \text{with probability 1/2} \\
    -\sigma & \text{with probability 1/2}
\end{cases}
\end{equation*}
Since $\nabla Q(0) = 0$, $\E\bracks*{G(x)} = \nabla Q(x)$ and $\E\bracks*{\parens*{G(x) -\nabla Q(x)}^2} \leq \sigma^2$.

Consider a sequential SGD algorithm $\mathcal{A}^{seq}$ with batch size $1$ optimizing $Q$ using $G$, starting at $x_1 = 0$ and with a constant step size $\eta$. At the first iteration, with probability 1/2 the direction of the gradient is positive, and with probability 1/2 it is negative. This translates to $x_2 = -\eta \sigma$ or $x_2 = \eta \sigma$, both with probability 1/2.
Once the direction is set, the algorithm becomes deterministic, since there is no randomness in the stochastic gradient. The gradient is {odd}, 
since $-Q'(x) = 2x\left(x^{2}-1\right)e^{-x^{2}} = Q'(-x)$,
therefore, the two versions of the SGD iterates are also odd.
Hence, if the first stochastic gradient is positive the algorithm will eventually converge to $x=1$, and if it is negative the algorithm will eventually converge to $x=-1$.
(This can be easily seen looking at Figure~\ref{fig:example1}.)
Once the parameter is set to one of these points ($x=1$ or $x=-1$), since the gradient is equal to 0 in both points, the parameters will never change again. 

To conclude, there is some iteration $T'$ such that for any $T\geq T'$, $\Prob \bracks*{E_{seq}(0,\{1\})} = \Prob \bracks*{E_{seq}(0,\{-1\})} = 1/2$.
This implies that with probability 1/4 given two independent executions of $\mathcal{A}^{seq}$, one will converge to $x=1$ and the other to $x=-1$.
Since the distance between 1 and -1 is 2, Following Definition~\ref{def:split}, $Q$ is $(2,1/4)$-split.

\paragraph{\emph{\bf Example 2.}}
Consider the function
$Q(x) = e^{-x^2}$.
Its derivative is $\nabla Q(x) = -2x e^{-x^2}$.
This function is $L$-smooth (for some $L$) since its second derivative is bounded. Both $Q$ 
and its derivative are depicted in Figure~\ref{fig:example2}.
Similarly to the previous example, consider the following stochastic gradient; for every $x\neq 0$, $G(x) = \nabla Q(x)$, and for $x=0$, $G(0) = \sigma$ w.p. 1/2 and $G(0) = -\sigma$ w.p. 1/2. Obviously, since $\nabla Q(x) = 0$, $\E\bracks*{G(x)} = \nabla Q(x)$ and $\E\bracks*{\parens*{G(x) -\nabla Q(x)}^2} \leq \sigma^2$.

Consider a sequential SGD algorithm $\mathcal{A}^{seq}$ with batch size $1$ optimizing $Q$ using $G$, starting at $x_1 = 0$, and consider the first iteration. With probability 1/2 the direction of the gradient is positive, and with probability 1/2 it is negative. Once the direction is set, the algorithm becomes deterministic, since there is no randomness in the stochastic gradient. The direction of all consecutive iterations stay positive or negative, depending on the direction of the first iteration. Since the value of $Q$ continues to decrease as $x$ goes to $\infty$ and $-\infty$,
for every $\gamma$ there is iteration $T$ such that $Q$ is $(\gamma,1/4)$-split.

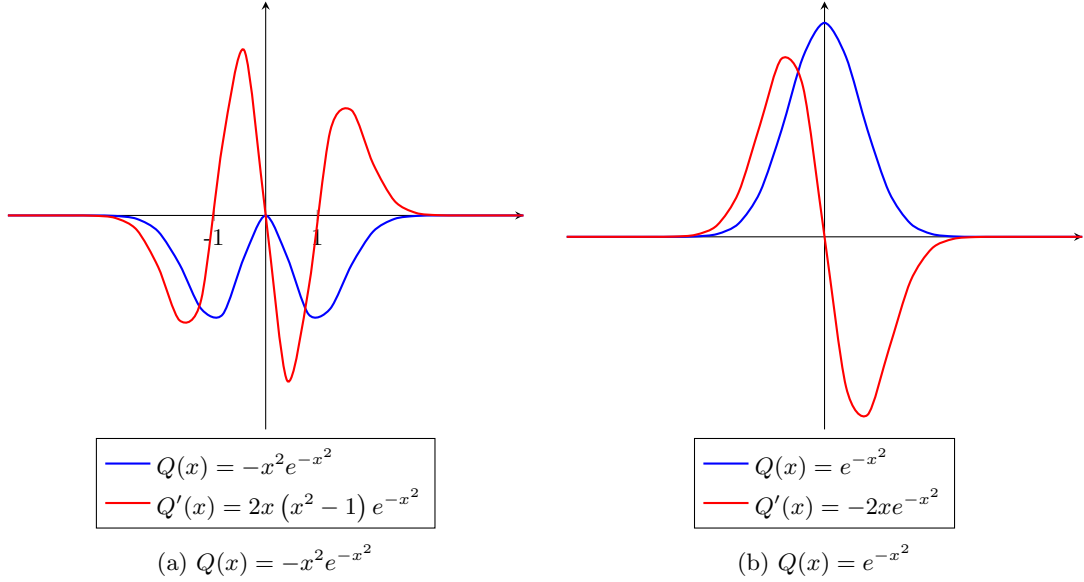
\begin{figure}[htp]
\centering
\begin{subfigure}{.45\textwidth}
  \centering
  \begin{tikzpicture}
  \begin{axis} [axis lines=center,   
    ymin=-0.75,ymax=0.75,
    ytick={0,0}, xtick={1,-1},
    xticklabels={1,-1},
    legend style={at={(0.5,-0.02)},anchor=north,legend cell align=left}
    ]
    \addplot [smooth, thick, blue] { -x^2 * exp(-x^2) };
    \addplot [smooth, thick, red] { 2*x * (x^2 - 1) * exp(-x^2) };
    \addlegendentry{$Q(x) = -x^2 e^{-x^2}$}
    \addlegendentry{$Q'(x) = 2x\left(x^{2}-1\right)e^{-x^{2}}$}
  \end{axis}
\end{tikzpicture}
  \caption{$Q(x) = -x^2 e^{-x^2}$}
  \label{fig:example1}
\end{subfigure}%
\begin{subfigure}{.45\textwidth}
  \centering
  \begin{tikzpicture}
  \begin{axis} [axis lines=center,   
    ymin=-0.9,ymax=1.1,
    ticks=none,
    legend style={at={(0.5,-0.02)},anchor=north,legend cell align=left}
    ]
    \addplot [smooth, thick, blue] { exp(-x^2) };
    \addplot [smooth, thick, red] { -2*x * exp(-x^2) };
    \addlegendentry{$Q(x) = e^{-x^2}$}
    \addlegendentry{$Q'(x) = -2xe^{-x^{2}}$}
  \end{axis}
\end{tikzpicture}
  \caption{$Q(x) = e^{-x^2}$}
  \label{fig:example2}
\end{subfigure}
\caption{Plot of $Q(x)$ and its gradient function}
\end{figure}

\paragraph{\emph{\bf Distributed Algorithm.}} 
{Consider an algorithm $\mathcal{A}$ and let $\mathcal{T}$ be its execution tree}. 
Let $\mathbf{x}^i(\mathcal{T})\in \R^d$ be a random variable, 
corresponding to the output of process $i$ from algorithm $\mathcal{A}$ 
over execution tree $\mathcal{T}$.
If $i$ crashed in the execution then this value is $\bot$. 
Let $E(\beta,\mathbf{x},\mathcal{T},i)$ 
be the event that for process $i$ such that
$\mathbf{x}^i(\mathcal{T})\neq \bot$,
$\norm*{\mathbf{x}^i(\mathcal{T}) - \mathbf{x}}_2 \leq \beta$. 
For a set of points $S$, 
let $E(\beta,S,\mathcal{T},i) = \bigcup_{\mathbf{x}\in S} E(\beta,\mathbf{x},\mathcal{T},i)$.

The next lemma shows that if two processes can converge near points that are far enough from each other with high enough probability, then their outputs will also be far from each other in expectation.
\begin{lemma}
\label{impossibility-helper}
    Let $\mathcal{A}$ be a cluster-based algorithm, $\mathcal{T}$ an execution tree of $\mathcal{A}$ and two processes $i$ and $j$. If $\Prob \bracks*{E(\alpha,S_1,\mathcal{T},i) \cap E(\beta,S_2,\mathcal{T},j)} \geq p$ and $\Prob \bracks*{E(\alpha,S_1,\mathcal{T},j) \cap E(\beta,S_2,\mathcal{T},i)} \geq p$ for $\alpha,\beta \geq 0$ and $S_1,S_2\subseteq \R^d$ such that $\mathrm{dist}\parens*{S_1,S_2} \geq d > \max\braces*{0,\alpha+\beta}$, then
    \begin{equation*}
        \E \bracks*{\norm*{\mathbf{x}^i(\mathcal{T}) - \mathbf{x}^j(\mathcal{T})}_2^2} \geq {2p \parens*{d -\alpha - \beta}^2}
    \end{equation*}
\end{lemma}

\begin{proof}
    If both events $E(\alpha,S_1,\mathcal{T},i)$ and $E(\beta,S_2,\mathcal{T},j)$ happen in an execution in $\mathcal{T}$, then there are two points $\mathbf{x}_1\in S_1$, $\mathbf{x}_2\in S_2$ such that, $\norm*{\mathbf{x}^i(\mathcal{T}) - \mathbf{x}_1}_2 \leq \alpha$ and $\norm*{\mathbf{x}^j(\mathcal{T}) - \mathbf{x}_2}_2 \leq \beta$.
    Using triangle inequality,
    \begin{align*}
        \norm*{\mathbf{x}_1 - \mathbf{x}_2}_2 &= \norm*{\mathbf{x}_1 - \mathbf{x}^i(\mathcal{T}) + \mathbf{x}^i(\mathcal{T}) - \mathbf{x}^j(\mathcal{T}) + \mathbf{x}^j(\mathcal{T}) - \mathbf{x}_2}_2 \\
        &\leq \norm*{\mathbf{x}_1 - \mathbf{x}^i(\mathcal{T})}_2 + \norm*{\mathbf{x}^i(\mathcal{T}) - \mathbf{x}^j(\mathcal{T})}_2 + \norm*{\mathbf{x}^j(\mathcal{T}) - \mathbf{x}_2}_2
    \end{align*}
    This implies that 
    By rearranging the inequality
    \begin{align*}
        \norm*{\mathbf{x}^i(\mathcal{T}) - \mathbf{x}^j(\mathcal{T})}_2 
        &\geq \norm*{\mathbf{x}_1 - \mathbf{x}_2}_2 - \norm*{\mathbf{x}_1 - \mathbf{x}^i(\mathcal{T})}_2 - \norm*{\mathbf{x}^j(\mathcal{T}) - \mathbf{x}_2}_2\\
        &\geq d - \alpha - \beta.
    \end{align*}
    Similarly, if both events $E(\alpha,S_1,\mathcal{T},j)$ and $E(\beta,S_2,\mathcal{T},i)$ happen in an execution in $\mathcal{T}$, then $ \norm*{\mathbf{x}^j(\mathcal{T}) - \mathbf{x}^i(\mathcal{T})}_2  \geq d - \alpha - \beta$.
    
    Denote the events $E_1 = E(\alpha,S_1,\mathcal{T},i) \cap E(\beta,S_2,\mathcal{T},j)$ and $E_2 = E(\alpha,S_1,\mathcal{T},j) \cap E(\beta,S_2,\mathcal{T},i)$.     
    Assume events $E_1$ and $E_2$ are not disjoint and there is an execution in $\mathcal{T}$ where there are two points $\mathbf{a}\in S_1$ and $\mathbf{b}\in S_2$ such that, $\norm*{\mathbf{x}^i(\mathcal{T}) - \mathbf{a}}_2 \leq \alpha$ and $\norm*{\mathbf{x}^i(\mathcal{T}) - \mathbf{b}}_2 \leq \beta$. This implies that $\norm*{\mathbf{a} - \mathbf{b}}_2 \leq \norm*{\mathbf{x}^i(\mathcal{T}) - \mathbf{a}}_2 + \norm*{\mathbf{x}^i(\mathcal{T}) - \mathbf{b}}_2 = \alpha + \beta < d$, in contradiction to the lemma assumption.
    Since events $E_1$ and $E_2$ must be disjoint, $\Prob \bracks*{E_1 \cup E_2} = \Prob \bracks*{E_1} + \Prob \bracks*{E_2}$. Using the previous inequality,
    \begin{align*}
        \E \bracks*{\norm*{\mathbf{x}^i(\mathcal{T}) - \mathbf{x}^j(\mathcal{T})}_2^2}
        &\geq \parens*{\Prob \bracks*{E_1} + \Prob \bracks*{E_2}} \E \bracks*{\left.\norm*{\mathbf{x}^i(\mathcal{T}) - \mathbf{x}^j(\mathcal{T})}_2^2 \,\right\vert E_1 \cup E_2} \\
        &\geq 2p \parens*{d -\alpha - \beta}^2.
    \end{align*}{}
\end{proof}

To ensure the distributed algorithm implements an SGD algorithm, 
we require it to preserve the same convergence distribution as 
the sequential algorithm, as explained next. 

We say that $\mathcal{A}$ \emph{preserves the convergence distribution
over an execution tree $\mathcal{T}$}, if for any $\beta>0$,
set of points $S$ and process $i$,
\begin{equation}
\label{same-seq-dist}
    \Prob \bracks*{E(\beta,S,\mathcal{T},i)} \geq \Prob\bracks*{E_{seq}(\beta,S)}.
\end{equation}
We say that $\mathcal{A}$ \emph{distributively implements
$\mathcal{A}^{seq}(Q,G,\mathbf{x}_0,\eta,T)$ over an execution tree $\mathcal{T}$} 
if it converge externally \eqref{convergence-require} and internally 
\eqref{diff-require} over the outputs $\braces*{\mathbf{x}^i(\mathcal{T})}_{i=1}^n$, 
and it preserves the convergence distribution over $\mathcal{T}$.

\bigskip
The \emph{$i$-local execution tree} $\mathcal{T}_i$ has a node for 
each of $i$'s local states that appear in execution tree $\mathcal{T}$.
The node is labeled with the probability of $i$ reaching  
this state in $\mathcal{T}$, which is equal to the probability to reach 
some node in $\mathcal{T}$ that contains this local state. 
There is an edge from node $u$ to $v$, if there is an execution 
in $\mathcal{T}$ where the local state in $v$ directly 
follows the local state in $u$. 
This can be extended to an \emph{$S$-local execution tree},
for a set of processes $S$, where each node contains the local states of all processes in $S$.
For simplicity, we assume that once a process transitioned to some 
local state it will never repeat it in the execution, 
e.g., by including the full local history in the local state,
as done in~\cite{goren2021probabilistic}. 

\begin{definition}[Probabilistic Indistinguishably~\cite{goren2021probabilistic}]
Two execution trees are \emph{probabilistically indistinguishable} to 
process $i$, if they induce the same $i$-local execution tree. 
\end{definition}

It was shown~\cite{goren2021probabilistic} that if two execution trees are 
probabilistically indistinguishable to process $i$, then the probability 
for $i$ to perform some action in these two trees is equal.
This easily extends to sets of processes.

\medskip
The proof of the next theorem adapts the impossibility proofs 
of~\cite{AttiyaKS20,HadzilacosHT19} to probabilistic algorithms 
by using probabilistic indistinguishability~\cite{goren2021probabilistic}.
The result is proved for a \emph{weak} adversary, 
which cannot observe the local coin flips, 
shared memory states and the messages sent during the execution. 

\begin{theorem}
\label{thm:thmlb}
If a function $Q$ is ($\gamma$,$p$)-split with 
$p > \frac{\epsmodel}{2\gamma^2}$ and $\gamma > \sqrt{\epsmodel/2}$,
then no algorithm $\mathcal{A}$ distributively implements $\mathcal{A}^{seq}(Q,G,\mathbf{x}_0,\eta,T)$ 
over all valid execution trees which have at most $f$ failures such that $f > \fopt$.
\end{theorem}

\begin{proof}
    Assume there is an algorithm $\mathcal{A}$ that distributively implements $\mathcal{A}^{seq}(Q,G,\mathbf{x}_0,\eta,T)$ over any execution tree $\mathcal{T}$ that has at most $f> \fopt$ failures.
    Since $f> \fopt$, following Lemma~\ref{lemm:fopt-comm}, there are two sets of processes $A$ and $B$, each of size $n-f$, such that $\cluster(A) \cap \cluster(B) = \emptyset$.
    
    Consider the following three execution trees.
    In execution tree $\mathcal{T}_1$, processes in $B$ don't take any local steps, i.e., processes in $B$ are crashed in all executions in $\mathcal{T}_1$.
    Similarly, in execution tree $\mathcal{T}_2$ processes in $A$ don't take any local steps.
    Following our assumption, algorithm $\mathcal{A}$ distributively implements $\mathcal{A}^{seq}(Q,G,\mathbf{x}_0,\eta, T)$ over both $\mathcal{T}_1$ and $\mathcal{T}_2$.
    
    Since $Q$ is $(\gamma,p)$-split, 
    there are two sets of points $S_1, S_2\subseteq \R^d$ 
    where $\mathrm{dist}\parens*{S_1,S_2} \geq d$ 
    and for some $\alpha,\beta\geq 0$ such that $\parens*{d - \alpha - \beta} \geq \gamma$, 
    $\,\Prob \bracks*{E_{seq}(\alpha,S_1)} \Prob \bracks*{E_{seq}(\beta,S_2)} \geq p$.
    
    The third execution tree $\mathcal{T}_3$ is an interleaving of 
    $\mathcal{T}_1$ and $\mathcal{T}_2$, constructed as follows:
    all messages between processes in $A$ and $B$ are delayed 
    until all processes finish executing the algorithm. 
    We build the tree such that the even levels are nodes from $\mathcal{T}_1$, 
    and the odd levels are nodes from $\mathcal{T}_2$. 
    Specifically, nodes in level $2l$ of $\mathcal{T}_3$ 
    are the nodes from the $l$-th level of $\mathcal{T}_1$, 
    and nodes in level $2l+1$ of $\mathcal{T}_3$ 
    are the nodes from the $l$-th level of $\mathcal{T}_2$.
    Since $\cluster(A) \cap \cluster(B) = \emptyset$, 
    processes in $A$ do not share memory with processes in $B$, and vice versa. 
    By construction, processes in one set do not receive messages from the other 
    set until the algorithm terminates.
    Therefore, processes in one set do no affect the processes in the other set,
    and this construction yields a legal execution tree.
    
    By construction, $\mathcal{T}_1$ and $\mathcal{T}_3$ 
    are probabilistically indistinguishable to processes in $A$, and 
    $\mathcal{T}_2$ and $\mathcal{T}_3$ are probabilistically indistinguishable 
   to processes in $B$.     
   In addition, the events $E(\alpha,S_1,\mathcal{T}_3,i)$ and $E(\beta,S_2,\mathcal{T}_3,j)$ 
   are independent of each other for every $i\in A$ and $j\in B$. 
   Therefore, by \eqref{same-seq-dist}, 
    \begin{multline*}
    \Prob \bracks*{E(\alpha,S_1,\mathcal{T}_3,i) \cap E(\beta,S_2,\mathcal{T}_3,j)}
    \\= \Prob \bracks*{E(\alpha,S_1,\mathcal{T}_3,i)} \Prob\bracks*{E(\beta,S_2,\mathcal{T}_3,j)} 
        \geq \Prob \bracks*{E_{seq}(\alpha,S_1)} \Prob\bracks*{E_{seq}(\beta,S_2)} 
        \geq p .
    \end{multline*}
    Similarly, $\Prob \bracks*{E(\alpha,S_1,\mathcal{T}_3,j) \cap E(\beta,S_2,\mathcal{T}_3,i)} \geq p$,
    By Lemma~\ref{impossibility-helper},
    \begin{equation*}
    \\E \bracks*{\norm*{\mathbf{x}^i(\mathcal{T}_3) - \mathbf{x}^j(\mathcal{T}_3)}_2^2} \geq 2{p \parens*{d -\alpha - \beta}^2} \geq 2{p \gamma^2} > \delta.
    \end{equation*}
    This contradicts internal convergence property \eqref{diff-require}
    of Algorithm $\mathcal{A}$.
    This contradicts internal convergence property \eqref{diff-require}
    of Algorithm $\mathcal{A}$.
\end{proof}

\section{Summary}
\label{sec:summary}

We present crash-tolerant asynchronous SGD algorithms for 
the cluster-based model. 
For strongly convex functions, our algorithm obtains maximal speedup
of the convergence rate over the sequential algorithm, 
and tolerates any number of failures.
For other functions, we employ multidimensional approximate agreement 
to bring parameters close together in each iteration.
This algorithm requires that the set of non-failed processes represents 
a majority of the processes. 
We prove that this condition is necessary for optimizing certain functions.

Our results assume processes fail only by crashing, 
which is an adequate model for several computing systems. 
Moreover, concentrating on crash failures allows to obtain good bounds on 
the convergence rate, 
as well as optimal bounds on the ratio of faulty processes. 
This also leads to simpler and more modular proofs.
We believe that the cluster-based model with crash failures 
offers a blueprint for optimization algorithms 
for \emph{high-performance computing} systems.
These architectures include many multi-processor computers, 
each running multiple threads that share a memory space, 
which are connected by a network. 

Future work could explore the use of specific properties of 
objective functions, beyond strong convexity, to improve the algorithms.
Another direction is to have more cooperative inter-cluster computation, e.g., using fetch\&add.



\bibliographystyle{plainurl}
\bibliography{citations}

\appendix

\section{Additional Propositions and Proofs}

\subsection{Useful Mathematical Propositions}
\label{app:use-props}

For every set of vectors $\mathbf{v}_1,\dots,\mathbf{v}_n \in \R^d$,
we use the two following relaxed versions of the triangle inequality~\cite{karimireddy2021scaffold,CollaborativeLearningNeurIPS}. 
\begin{equation}
    \label{relaxed-triangle-ineq2}
    \norm*{\sum_{i=1}^n \mathbf{v}_i}_2^2 \leq n \sum_{i=1}^n \norm*{\mathbf{v}_i}_2^2
\end{equation}
\begin{equation}
    \label{relaxed-triangle-ineq1}
    \forall a>0,\, \norm*{\mathbf{v}_i + \mathbf{v}_j}_2^2 \leq \parens*{1 + a}\norm*{\mathbf{v}_i}_2^2 + \parens*{1 + \frac{1}{a}}\norm*{\mathbf{v}_j}_2^2
\end{equation}

\begin{proposition}
\label{prop:sep-mean-var}
Let $\mathbf{x}\in \R^d$ be a random variable, then
\begin{equation*}
    \E \bracks*{\norm*{\mathbf{x}}_2^2} = \norm*{\E \bracks*{\mathbf{x}}}_2^2 + \Var\bracks*{\mathbf{x}}
\end{equation*}
\end{proposition}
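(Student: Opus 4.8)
The plan is to carry out the standard bias--variance decomposition coordinate-free, using only the definition of $\norm{\cdot}_2$ via the inner product and linearity of expectation. Write $\boldsymbol{\mu} \triangleq \E[\mathbf{x}]$, which is a fixed (deterministic) vector in $\R^d$, and decompose $\mathbf{x} = (\mathbf{x} - \boldsymbol{\mu}) + \boldsymbol{\mu}$. Since $\norm{\mathbf{a} + \mathbf{b}}_2^2 = \tup{\mathbf{a}+\mathbf{b}, \mathbf{a}+\mathbf{b}} = \norm{\mathbf{a}}_2^2 + 2\tup{\mathbf{a}, \mathbf{b}} + \norm{\mathbf{b}}_2^2$ follows directly from bilinearity of the inner product, applying this with $\mathbf{a} = \mathbf{x} - \boldsymbol{\mu}$ and $\mathbf{b} = \boldsymbol{\mu}$ gives
\begin{equation*}
    \norm*{\mathbf{x}}_2^2 = \norm*{\mathbf{x} - \boldsymbol{\mu}}_2^2 + 2\tup*{\mathbf{x} - \boldsymbol{\mu}, \boldsymbol{\mu}} + \norm*{\boldsymbol{\mu}}_2^2 .
\end{equation*}

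Now I would take expectations of both sides. By linearity of expectation, $\E[\norm{\mathbf{x}}_2^2]$ equals the sum of the three expectations. The first term is exactly $\Var[\mathbf{x}]$ by its definition. For the middle term, since $\boldsymbol{\mu}$ is deterministic, $\tup{\cdot, \boldsymbol{\mu}}$ is a fixed linear functional, so expectation commutes with it (immediate coordinatewise from linearity of $\E$); hence $\E[\tup{\mathbf{x} - \boldsymbol{\mu}, \boldsymbol{\mu}}] = \tup{\E[\mathbf{x}] - \boldsymbol{\mu}, \boldsymbol{\mu}} = \tup{\mathbf{0}, \boldsymbol{\mu}} = 0$. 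The last term is deterministic, so its expectation is $\norm{\boldsymbol{\mu}}_2^2 = \norm{\E[\mathbf{x}]}_2^2$. Combining the three pieces yields $\E[\norm{\mathbf{x}}_2^2] = \norm{\E[\mathbf{x}]}_2^2 + \Var[\mathbf{x}]$, as claimed. There is no real obstacle here; the only point worth stating explicitly is the commutation of expectation with the inner product against a fixed vector, which is the single line justified above.
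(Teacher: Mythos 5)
Your proof is correct and is essentially the argument the paper has in mind: the standard bias--variance decomposition $\E\bracks*{\norm*{\mathbf{x}}_2^2} = \norm*{\E\bracks*{\mathbf{x}}}_2^2 + \Var\bracks*{\mathbf{x}}$. The only cosmetic difference is that the paper invokes the one-dimensional identity $\E\bracks*{(X-\E\bracks*{X})^2} = \E\bracks*{X^2} - \E\bracks*{X}^2$ and lets it apply coordinatewise, whereas you expand $\norm*{(\mathbf{x}-\boldsymbol{\mu})+\boldsymbol{\mu}}_2^2$ directly in $\R^d$ and kill the cross term by linearity of expectation against the fixed vector $\boldsymbol{\mu}$; both routes are equally valid and equally short.
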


\begin{proof}
    For any random variable $X$, $\parens*{\E\bracks*{X - \E \bracks*{X}}}^2 = \E \bracks*{X^2} - \E\bracks*{X}^2$. The lemma directly follows from this equality.
\end{proof}

\begin{proposition}
\label{prop:var-diff}
Let $\mathbf{x},\mathbf{y}\in \R^d$ be two random variables, then
\begin{equation*}
    \Var \bracks*{\mathbf{x} - \mathbf{y}} \leq 2\Var \bracks*{\mathbf{x}} + 2\Var \bracks*{\mathbf{y}}
\end{equation*}
\end{proposition}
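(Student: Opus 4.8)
The plan is to unfold the definition of variance for a random vector and then apply the relaxed triangle inequality \eqref{relaxed-triangle-ineq2} with $n=2$. Recall that for a random vector $\mathbf{z}$ we have $\Var[\mathbf{z}] = \E[\norm{\mathbf{z} - \E[\mathbf{z}]}_2^2]$. Applying this to $\mathbf{z} = \mathbf{x} - \mathbf{y}$ and using linearity of expectation, $\E[\mathbf{x} - \mathbf{y}] = \E[\mathbf{x}] - \E[\mathbf{y}]$, so the centered vector splits as $(\mathbf{x} - \mathbf{y}) - \E[\mathbf{x} - \mathbf{y}] = (\mathbf{x} - \E[\mathbf{x}]) - (\mathbf{y} - \E[\mathbf{y}])$.

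Next I would set $\mathbf{v}_1 = \mathbf{x} - \E[\mathbf{x}]$ and $\mathbf{v}_2 = -(\mathbf{y} - \E[\mathbf{y}])$, and invoke \eqref{relaxed-triangle-ineq2} in the form $\norm{\mathbf{v}_1 + \mathbf{v}_2}_2^2 \leq 2\norm{\mathbf{v}_1}_2^2 + 2\norm{\mathbf{v}_2}_2^2$ pointwise, then take expectations of both sides. Since $\norm{-(\mathbf{y} - \E[\mathbf{y}])}_2^2 = \norm{\mathbf{y} - \E[\mathbf{y}]}_2^2$, this yields $\Var[\mathbf{x} - \mathbf{y}] \leq 2\E[\norm{\mathbf{x} - \E[\mathbf{x}]}_2^2] + 2\E[\norm{\mathbf{y} - \E[\mathbf{y}]}_2^2] = 2\Var[\mathbf{x}] + 2\Var[\mathbf{y}]$, which is the claim.

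There is essentially no obstacle here: the statement is a two-line consequence of the definition of variance and the already-stated inequality \eqref{relaxed-triangle-ineq2}, requiring no independence assumption on $\mathbf{x}$ and $\mathbf{y}$. The only point worth a moment's care is keeping track of the sign inside the norm when rewriting $\mathbf{x} - \mathbf{y}$ as a sum, but this is immaterial since the factor is squared.
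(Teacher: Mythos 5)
Your proposal is correct and follows essentially the same argument as the paper's proof: center $\mathbf{x}-\mathbf{y}$ by its expectation, split the centered vector as $(\mathbf{x}-\E[\mathbf{x}])-(\mathbf{y}-\E[\mathbf{y}])$, and apply \eqref{relaxed-triangle-ineq2} with $n=2$ before taking expectations. The sign-handling remark is fine and immaterial, exactly as you note.
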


\begin{proof}
    \begin{align*}
        \Var \bracks*{\mathbf{x} - \mathbf{y}} 
        &= \E \bracks*{\norm*{\mathbf{x} - \mathbf{y} - \E \bracks*{\mathbf{x}} + \E \bracks*{\mathbf{y}}}_2^2} \\
        &\leq 2\E \bracks*{\norm*{\mathbf{x} - \E \bracks*{\mathbf{x}}}_2^2} + 2 \E \bracks*{\norm*{\E \bracks*{\mathbf{y}} - \mathbf{y}}_2^2} && \explain{by \eqref{relaxed-triangle-ineq2}} \\
        &= 2\Var \bracks*{\mathbf{x}} + 2\Var \bracks*{\mathbf{y}}
    \end{align*}
\end{proof}

\begin{proposition}
\label{prop:l2-avg}
    Let $S_1,S_2 \subseteq \R^d$, where $|S_1|=s_1$ and $|S_2|=s_2$, then
    \begin{equation*}
        \norm*{\frac{1}{s_1} \sum_{\mathbf{y}\in S_1} \mathbf{y} - \frac{1}{s_2} \sum_{\mathbf{y}'\in S_2} \mathbf{y}'}_2^2 
        \leq \frac{1}{s_1 s_2} \sum_{\mathbf{y}\in S_1} \sum_{\mathbf{y}'\in S_2} \norm*{\mathbf{y} - \mathbf{y}'}_2^2
    \end{equation*}
\end{proposition}

\begin{proof}
    \begin{align*}
    \norm*{\frac{1}{s_1} \sum_{\mathbf{y}\in S_1} \mathbf{y} - \frac{1}{s_2} \sum_{\mathbf{y}'\in S_2} \mathbf{y}'}_2^2 
    &= \norm*{\frac{1}{s_1} \sum_{\mathbf{y}\in S_1} \mathbf{y} - \frac{1}{s_1} \sum_{\mathbf{y}'\in S_2} \frac{s_1}{s_2} \mathbf{y}'}_2^2 \\
    &= \norm*{\frac{1}{s_1} \sum_{\mathbf{y}\in S_1} \mathbf{y} - \frac{1}{s_1} \sum_{\mathbf{y}\in S_1} \frac{1}{s_1} \parens*{\sum_{\mathbf{y}'\in S_2} \frac{s_1}{s_2} \mathbf{y}'}}_2^2 \\
    &\leq \frac{1}{s_1} \sum_{\mathbf{y}\in S_1} \norm*{\mathbf{y} - \frac{1}{s_2} \sum_{\mathbf{y}'\in S_2} \mathbf{y}'}_2^2 && \explain{by \eqref{relaxed-triangle-ineq2}}\\
    &\leq \frac{1}{s_1 s_2} \sum_{\mathbf{y}\in S_1} \sum_{\mathbf{y}'\in S_2} \norm*{\mathbf{y} - \mathbf{y}'}_2^2 && \explain{by \eqref{relaxed-triangle-ineq2}}
\end{align*}
\end{proof}

We also need a weighted version of this proposition.
\begin{proposition}
\label{prop:sep-weights}
    Let $\mathbf{x}_1,\dots,\mathbf{x}_k, \mathbf{y}_1, \dots, \mathbf{y}_k \in \R^d$ and weights $\sum_{i=1}^k w_i = 1$, then
    \begin{equation*}
        \norm*{\sum_{i=1}^k w_i \mathbf{x}_i - \sum_{i=1}^k w_i \mathbf{y}_k}_2^2 \leq \sum_{i=1}^{k} w_i \norm*{\mathbf{x}_i - \mathbf{y}_i}_2^2 .
    \end{equation*}
    This also implies for any $\mathbf{y}\in \R^d$ that
    \begin{equation*}
        \norm*{\sum_{i=1}^k w_i \mathbf{x}_i - \mathbf{y}}_2^2 \leq \sum_{i=1}^{k} w_i \norm*{\mathbf{x}_i - \mathbf{y}}_2^2 
    \end{equation*}
\end{proposition}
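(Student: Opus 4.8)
The plan is to prove the first inequality and then obtain the second as an immediate special case. Setting $\mathbf{v}_i \triangleq \mathbf{x}_i - \mathbf{y}_i$ for each $i$, the left-hand side of the first claim is exactly $\norm*{\sum_{i=1}^k w_i \mathbf{v}_i}_2^2$ and the right-hand side is $\sum_{i=1}^k w_i \norm*{\mathbf{v}_i}_2^2$, so the statement is nothing more than the vector form of Jensen's inequality for the convex map $\mathbf{v}\mapsto \norm*{\mathbf{v}}_2^2$, applied to the convex combination with nonnegative weights $w_1,\dots,w_k$ summing to $1$.

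Rather than invoke convexity as a black box, I would give the one-line computation behind it, in the same spirit as the proof of Proposition~\ref{prop:l2-avg}. Using $\sum_{j=1}^k w_j = 1$ to duplicate the single sum into a double sum and then symmetrizing in $i,j$,
\[
    \sum_{i=1}^k w_i \norm*{\mathbf{v}_i}_2^2 = \sum_{i=1}^k\sum_{j=1}^k w_i w_j \norm*{\mathbf{v}_i}_2^2 = \frac{1}{2}\sum_{i=1}^k\sum_{j=1}^k w_i w_j \parens*{\norm*{\mathbf{v}_i}_2^2 + \norm*{\mathbf{v}_j}_2^2},
\]
while expanding the quadratic form with bilinearity of $\tup{\cdot,\cdot}$ (as in \eqref{inned-prod-add}),
\[
    \norm*{\sum_{i=1}^k w_i \mathbf{v}_i}_2^2 = \sum_{i=1}^k\sum_{j=1}^k w_i w_j \tup{\mathbf{v}_i,\mathbf{v}_j}.
\]
Subtracting the two displays and using $\norm*{\mathbf{v}_i}_2^2 + \norm*{\mathbf{v}_j}_2^2 - 2\tup{\mathbf{v}_i,\mathbf{v}_j} = \norm*{\mathbf{v}_i - \mathbf{v}_j}_2^2$ gives
\[
    \sum_{i=1}^k w_i \norm*{\mathbf{v}_i}_2^2 - \norm*{\sum_{i=1}^k w_i \mathbf{v}_i}_2^2 = \frac{1}{2}\sum_{i=1}^k\sum_{j=1}^k w_i w_j \norm*{\mathbf{v}_i - \mathbf{v}_j}_2^2 \geq 0,
\]
since every $w_i w_j \geq 0$, which is the desired inequality. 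Substituting back $\mathbf{v}_i = \mathbf{x}_i - \mathbf{y}_i$ finishes the first statement, and taking $\mathbf{y}_1 = \dots = \mathbf{y}_k = \mathbf{y}$ yields the second.

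There is no real obstacle here; the only point worth noting is that the argument genuinely uses the nonnegativity of the weights (the last step would fail for signed coefficients), which is implicit in calling the $w_i$ "weights" and is exactly the setting in which the proposition is applied — the MAA outputs are convex combinations of the inputs.
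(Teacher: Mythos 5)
Your proof is correct, but it follows a genuinely different route from the paper. The paper proves the inequality by repeatedly applying the weighted relaxed triangle inequality \eqref{relaxed-triangle-ineq1} with the carefully chosen parameters $a_i = \frac{1 - \sum_{j\leq i} w_j}{w_i}$, so that the resulting products of factors $\parens*{1+a_i}$ and $\parens*{1+1/a_l}$ telescope to $\frac{1}{w_i}$, giving $\sum_i \frac{1}{w_i}\norm*{w_i\mathbf{x}_i - w_i\mathbf{y}_i}_2^2 = \sum_i w_i\norm*{\mathbf{x}_i-\mathbf{y}_i}_2^2$. You instead prove the exact identity $\sum_i w_i\norm*{\mathbf{v}_i}_2^2 - \norm*{\sum_i w_i\mathbf{v}_i}_2^2 = \frac{1}{2}\sum_{i,j} w_i w_j\norm*{\mathbf{v}_i-\mathbf{v}_j}_2^2$ by expanding the square bilinearly and symmetrizing, which is Jensen's inequality for $\mathbf{v}\mapsto\norm*{\mathbf{v}}_2^2$ made explicit. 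Your version buys a couple of things: it is an equality rather than a chain of estimates, it makes transparent exactly where nonnegativity of the weights enters (a point you rightly flag, since the statement only says the weights sum to $1$), and it handles zero weights with no special care, whereas the paper's argument divides by $w_i$ and needs $a_i>0$, implicitly assuming strictly positive weights. What the paper's route buys is uniformity with its surrounding material: it reuses the already-stated inequality \eqref{relaxed-triangle-ineq1} in the same style as the proof of Proposition~\ref{prop:l2-avg}, at the cost of a more opaque telescoping computation. Either way the second claim follows, as you say, by taking all $\mathbf{y}_i$ equal to $\mathbf{y}$.
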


\begin{proof}
Define $a_i = \frac{1 - \sum_{j=1}^{i} w_j}{w_i}$. Note that $1 + a_i = \frac{1 - \sum_{j=1}^{i-1} w_j}{w_i}$ and $1+\frac{1}{a_i} = \frac{1 - \sum_{j=1}^{i-1} w_j}{1 - \sum_{j=1}^{i} w_j}$. By repeatedly applying \eqref{relaxed-triangle-ineq1} we get that,
    \begin{multline*}
        \norm*{\sum_{i=1}^k w_i \mathbf{x}_i - \sum_{i=1}^k w_i \mathbf{y}_i}_2^2 \\
        \leq \sum_{i=1}^{k-1} \parens*{1 + a_i} \prod_{l=1}^{i-1} \parens*{1 + \frac{1}{a_l}}\norm*{w_i \mathbf{x}_i - w_i \mathbf{y}_i}_2^2  + \prod_{l=1}^{k-1} \parens*{1 + \frac{1}{a_l}} \norm*{w_k \mathbf{x}_k - w_k \mathbf{y}_i}_2^2 \\
        = \sum_{i=1}^{k-1} \frac{1 - \sum_{j=1}^{i-1} w_j}{w_i} \prod_{l=1}^{i-1} \frac{1 - \sum_{j=1}^{l-1} w_j}{1 - \sum_{j=1}^{l} w_j} \norm*{w_i \mathbf{x}_i - w_i \mathbf{y}_i}_2^2 + \prod_{l=1}^{k-1} \frac{1 - \sum_{j=1}^{l-1} w_j}{1 - \sum_{j=1}^{l} w_j} \norm*{w_k \mathbf{x}_k - w_k \mathbf{y}_i}_2^2 
    \end{multline*}
Since $\prod\limits_{l=1}^{i-1} \frac{1 - \sum_{j=1}^{l-1} w_j}{1 - \sum_{j=1}^{l} w_j} = \frac{1}{1 - \sum_{j=1}^{i-1} w_j}$ and $1 - \sum_{j=1}^{k-1} w_j = w_k$, we get that 
\begin{equation*}
        \norm*{\sum_{i=1}^k w_i \mathbf{x}_i - \sum_{i=1}^k w_i \mathbf{y}_i}_2^2 
        \leq \sum_{i=1}^{k} \frac{1}{w_i} \norm*{w_i \mathbf{x}_i - w_i \mathbf{y}_i}_2^2 
        = \sum_{i=1}^{k} w_i \norm*{\mathbf{x}_i - \mathbf{y}_i}_2^2 
    \end{equation*}
\end{proof}

We sometimes use the next property of the inner product 
\begin{equation}
\label{inned-prod-add}
    \tup{\mathbf{x} + \mathbf{y},\mathbf{z}} = \tup{\mathbf{x},\mathbf{z}} + \tup{\mathbf{y},\mathbf{z}}
\end{equation}
In addition, \emph{Young’s inequality} states that for every $\beta>0$
\begin{equation}
\label{eq:young}
\begin{aligned}
     |\tup{\mathbf{x}, \mathbf{y}}| \leq \frac{\beta}{2}\norm*{\mathbf{x}}_2^2 + \frac{1}{2\beta}\norm*{\mathbf{y}}_2^2
\end{aligned}
\end{equation}

\subsection{Additional Proofs}
\label{app:add-proofs}

\vartotal*

\begin{proof}
For every $1\leq i \leq b$, denote $\mathbf{g}_i = G(\mathbf{x}_i,z_i)$.
\begin{align*}
    \Var\bracks*{\frac{1}{b}\sum_{i=1}^b \mathbf{g}_i}
    &= \E\bracks*{\norm*{\frac{1}{b}\sum_{i=1}^b \mathbf{g}_i - \E\bracks*{\frac{1}{b}\sum_{i=1}^b \mathbf{g}_i}}_2^2} \\
    &= \E\bracks*{\norm*{\frac{1}{b}\sum_{i=1}^b \parens*{\mathbf{g}_i - \E\bracks*{\mathbf{g}_i}}}_2^2} && \explain{Linearity of expectation} \\
    &= \frac{1}{b^2} \E \bracks*{\tup*{\sum_{i=1}^b \parens*{\mathbf{g}_i - \E\bracks*{\mathbf{g}_i}}, \sum_{i=1}^b \parens*{\mathbf{g}_i - \E\bracks*{\mathbf{g}_i}}}}\\
    &= \frac{1}{b^2} \sum_{i=1}^b \sum_{j=1}^b \E \bracks*{\tup*{\mathbf{g}_i - \E \bracks*{\mathbf{g}_i} ,\mathbf{g}_j - \E\bracks*{\mathbf{g}_j}}} && \explain{by \eqref{inned-prod-add}}
\end{align*}
Note that $z_i$ and $z_j$ are independent for $i\neq j$, and therefore, $\mathbf{g}_i$ and $\mathbf{g}_j$ are also stochastically independent and we get that
\begin{equation*}
    \E \bracks*{\tup*{\mathbf{g}_i - \E \bracks*{\mathbf{g}_i} ,\mathbf{g}_j - \E\bracks*{\mathbf{g}_j}}}
    = \tup*{\E \bracks*{\mathbf{g}_i - \E \bracks*{\mathbf{g}_i}} ,\E \bracks*{\mathbf{g}_i - \E\bracks*{\mathbf{g}_j}}} = 0 
\end{equation*}
Using this in the first equation yields
\begin{align*}
    \Var\bracks*{\frac{1}{b}\sum_{i=1}^b \mathbf{g}_i}
    &= \frac{1}{b^2} \sum_{i=1}^b \sum_{j=1}^b \E \bracks*{\tup*{\mathbf{g}_i - \E \bracks*{\mathbf{g}_i} ,\mathbf{g}_j - \E\bracks*{\mathbf{g}_j}}} \\
    &= \frac{1}{b^2} \sum_{i=1}^b \E \bracks*{\norm*{\mathbf{g}_i - \E \bracks*{\mathbf{g}_i}}_2^2} \\
    &= \frac{1}{b^2} \sum_{i=1}^b \Var \bracks*{\mathbf{g}_i} \leq \frac{\sigma^2}{b} && \explain{by \eqref{bounded-var}}
\end{align*}
\end{proof}

\bottouoneovert*

\begin{proof}
    The proof is by induction. For the base case, 
    \begin{align*}
        a_2 &\leq \parens*{1 - \frac{\beta\mu}{(\gamma + 1)}} a_1 + \frac{\beta^2 c}{b(\gamma + 1)^2}\\
        &\leq \parens*{1 - \frac{\beta\mu}{b(\gamma + 1)}} a_1 + \frac{\beta^2 c}{b(\gamma + 1)^2} && \explain{$b\geq 1$}\\
        &= \parens*{\frac{\gamma}{b\parens*{\gamma + 1}^2} - \frac{\beta\mu - 1}{b\parens*{\gamma + 1}^2}} \parens*{\gamma + 1} a_1 + \frac{\beta\mu -1}{b(\gamma + 1)^2} \frac{\beta^2 c}{\beta\mu - 1} \\
        &\leq \parens*{\frac{\gamma}{b\parens*{\gamma + 1}^2} - \frac{\beta\mu - 1}{b\parens*{\gamma + 1}^2}} \nu + \frac{\beta\mu -1}{b(\gamma + 1)^2} \nu \\
        &= \frac{\gamma}{\parens*{\gamma + 1}^2} \frac{\nu}{b} \leq \frac{\nu}{\parens*{\gamma + 2}b}
    \end{align*}
    
    For the induction step,
    \begin{align*}
    a_{t+1} &\leq \parens*{1 - \eta_t \mu} a_t + \frac{c\eta_t^2}{b} \\
    &\leq \parens*{1 - \frac{\beta\mu}{\gamma + t}}\frac{\nu}{b\parens*{\gamma + t}} + \frac{\beta^2 c}{b(\gamma + t)^2}\\
    &= \parens*{\frac{\gamma + t - \beta\mu}{b\parens*{\gamma + t}^2}}\nu + \frac{\beta^2 c}{b(\gamma + t)^2} \\
    &\leq \parens*{\frac{\gamma + t - 1}{b\parens*{\gamma + t}^2}}\nu - \underbrace{\parens*{\frac{\beta\mu - 1}{b\parens*{\gamma + t}^2}}\nu + \frac{\beta^2 c}{b(\gamma + t)^2}}_{\text{$\leq 0$ by the definition of $\nu$}}\leq \frac{\nu}{b\parens*{\gamma + t + 1}}
\end{align*}
\end{proof}

We restate Lemma~\ref{small-diam} to also account for the difference between the learning parameters of the first iteration. Since $\max_{i,j\in V_0} \norm*{\mathbf{x}_{1}^i - \mathbf{x}_{1}^j}_2^2 = 0$ in Algorithm~\ref{alg:dis-sgd}, we get the lemma as stated in Section~\ref{section:non convex}.

\begin{lemma}
\label{app-small-diam}
    Consider Algorithm~\ref{alg:dis-sgd} with constant learning rate 
    $\eta \leq \min \braces*{\frac{1}{2},\frac{1}{L}}$ and
    $q = \frac{\eta}{4}$. Then for every iteration $t\geq 1$,
    \begin{equation*}
        \max_{i,j\in V_t} \E\bracks*{\norm*{\mathbf{x}_{t+1}^i - \mathbf{x}_{t+1}^i}_2^2} \leq  \frac{2\sigma^2 \eta^3}{\thresh}
        + \frac{1}{2^t} \max_{i,j\in V_0} \norm*{\mathbf{x}_{1}^i - \mathbf{x}_{1}^j}_2^2
    \end{equation*}
\end{lemma}

\begin{proof}
Following Lemma~\ref{rec-model-diam}
\begin{align*}
    \max_{i,j\in V_t} \E\bracks*{\norm*{\mathbf{x}_{t+1}^i - \mathbf{x}_{t+1}^i}_2^2} 
    &\leq  q \parens*{2 + 2L^2\eta^2} \max_{i,j\in V_{t-1}}\E\bracks*{\norm*{\mathbf{x}_t^i - \mathbf{x}_t^j}_2^2} + q \frac{4\sigma^2\eta^2}{\thresh}\\
    &= \frac{1}{2}\parens*{\eta + L^2\eta^3} \max_{i,j\in V_{t-1}}\E\bracks*{\norm*{\mathbf{x}_t^i - \mathbf{x}_t^j}_2^2} + \frac{\sigma^2\eta^3}{\thresh} \\
    &\leq \eta \max_{i,j\in V_{t-1}}\E\bracks*{\norm*{\mathbf{x}_t^i - \mathbf{x}_t^j}_2^2} + \frac{\sigma^2\eta^3}{\thresh} && \explain{$\eta \leq \frac{1}{L}$} \\
    &\leq \frac{1}{2} \max_{i,j\in V_{t-1}}\E\bracks*{\norm*{\mathbf{x}_t^i - \mathbf{x}_t^j}_2^2} + \frac{\sigma^2\eta^3}{\thresh} && \explain{$\eta \leq \frac{1}{2}$} 
    \end{align*}
    
    By denoting $u_t = \max_{i,j\in V_{t-1}}\E\bracks*{\norm*{\mathbf{x}_t^i - \mathbf{x}_t^j}_2^2}$ we can write $u_{t+1} \leq \frac{1}{2} u_t + \frac{\sigma^2 \eta^3}{\thresh}$. 
    By induction 
    \begin{equation*}
        u_{t+1} \leq \frac{\sigma^2 \eta^3}{\thresh} \sum_{i=0}^{t-1} 2^{-i} + 2^{-t} \,u_1
        \leq \frac{\sigma^2 \eta^3}{\thresh} \sum_{i=0}^{\infty} 2^{-i} + 2^{-t} \,u_1
        \leq \frac{2\sigma^2\eta^3 }{\thresh} + 2^{-t} \,u_1
    \end{equation*}{}
\end{proof}


    

\section{Non-Convex Algorithm with Different Initial Points}
\label{sec:differentinitialpoints}

This section assumes that each process $i$ starts at a different initial point $\mathbf{x}_1^i$ in the non-convex algorithm (Algorithm~\ref{alg:dis-sgd}), as considered in the strongly-convex case (Algorithm~\ref{alg:dis-sgd-strongly-convex}). The next lemma uses the same parameters as in Lemma~\ref{lem:non-convex-small-error}, with the assumption of different starting points.

\begin{lemma}
\label{lem:different-initial-convergence-rate}
Let $Q$ be an $L$-smooth cost function. 
Consider Algorithm~\ref{alg:dis-sgd} with different initial points $\braces*{\mathbf{x}_1^i}_{i=1}^n$, $T\geq \max\braces*{16L^2\thresh,4\thresh}$, constant learning rate $\eta = \frac{\sqrt{\thresh}}{\sqrt{T}}$ and parameters set as in Lemma~\ref{small-diam},
then for every process $i\in V_T$
    \begin{equation*}
        \E\bracks*{\norm*{\nabla Q(\mathbf{x}_\tau^i)}_2^2} 
        \leq
        \frac{4 \parens*{Q(\mathbf{x}_1) - Q^*}}{\sqrt{\thresh T}} + \frac{20\sigma^2 \max\braces*{L,L^2}}{\sqrt{\thresh T}} + \frac{16\sigma^2}{\sqrt{\thresh T}}
        + \parens*{\frac{16}{\thresh}
        + \frac{16L^2}{T}} \max_{i,j\in V_0} \norm*{\mathbf{x}_1^i - \mathbf{x}_1^j}_2^2
    \end{equation*}
\end{lemma}

\begin{proof}
    Note that $\eta = \frac{\sqrt{\thresh}}{\sqrt{T}}\leq \min\braces*{\frac{1}{4L},\frac{1}{2}}$.
    By Lemma~\ref{app-small-diam}, 
    \begin{align*}
        \sum_{t=2}^T \max_{i,j\in V_t} \E \bracks*{\norm*{\mathbf{x}_t^i - \mathbf{x}^j_t}_2^2} 
        &\leq \parens*{T-1}\, \frac{2\sigma^2\eta^3}{\thresh{}} + \sum_{t=2}^T \frac{1}{2^t} \max_{i,j\in V_0} \norm*{\mathbf{x}_{1}^i - \mathbf{x}_{1}^i}_2^2 \\
        &\leq  \frac{2\sqrt{\thresh{}}\sigma^2}{\sqrt{T}} + \sum_{t=2}^T \frac{1}{2^t} \max_{i,j\in V_0} \norm*{\mathbf{x}_{1}^i - \mathbf{x}_{1}^i}_2^2
    \end{align*}
    Following Lemma~\ref{thm:non-convex} and the previous inequality
    \begin{align*}
        &\E \bracks*{\norm*{\nabla Q(\mathbf{x}_\tau^i)}_2^2} \\
        &= \frac{4 \parens*{Q(\mathbf{x}_1) - Q^*}}{\sqrt{\thresh T}} + \frac{4\sigma^2 L}{\sqrt{\thresh T}} + \parens*{\frac{8}{\thresh} + \frac{8L^2}{T}} \parens*{\frac{2\sqrt{\thresh}\sigma^2}{\sqrt{T}} + \max_{i,j\in V_0} \norm*{\mathbf{x}_1^i - \mathbf{x}_1^j}_2^2 \sum_{t=0}^{T-1} 2^{-t}} \\
         &= \frac{4 \parens*{Q(\mathbf{x}_1) - Q^*}}{\sqrt{\thresh T}} + \frac{4\sigma^2 L}{\sqrt{\thresh{}T}} + \frac{16\sigma^2}{\sqrt{\thresh{}T}} + \frac{16 \sqrt{\thresh} L^2\sigma^2}{T^{3/2}} + \parens*{\frac{8}{\thresh} + \frac{8L^2}{T}} \max_{i,j\in V_0} \norm*{\mathbf{x}_1^i - \mathbf{x}_1^j}_2^2 \sum_{t=0}^{\infty} 2^{-t} \\
         &\leq \frac{4 \parens*{Q(\mathbf{x}_1) - Q^*}}{\sqrt{\thresh T}} + \frac{20\sigma^2 \max\braces*{L,L^2}}{\sqrt{\thresh{}T}} + \frac{16\sigma^2}{\sqrt{\thresh T}} + \parens*{\frac{16}{\thresh} + \frac{16L^2}{T}} \max_{i,j\in V_0} \norm*{\mathbf{x}_1^i - \mathbf{x}_1^j}_2^2
    \end{align*}
    Where the last inequality uses that $T\geq \thresh$.
\end{proof}

The analysis in Lemma~\ref{lem:different-initial-convergence-rate} yields an error term of $\Omega \parens*{\max_{i,j\in V_0} \norm*{\mathbf{x}_1^i - \mathbf{x}_1^j}_2^2}$. 
To deal with this and obtain external convergence, we can add an initial MDAA round, before starting the algorithm's iterations in Algorithm~\ref{alg:dis-sgd}. Using the previous lemma, we can show that with an additional of $\approx \log T$ communication rounds the algorithm achieves external convergence with different initial points. However, the error term depends on the initial distance between the starting points, which decreases with $T$ at the same rate as the other components of the error term. 

\begin{theorem}
    Let $Q$ be an $L$-smooth cost function. 
    Consider Algorithm~\ref{alg:dis-sgd} with different initial points $\braces*{\mathbf{x}_1^i}_{i=1}^n$ and initial MDAA round satisfying $({1}/{\sqrt{T})}$-contraction, $T\geq \max\braces*{16L^2\thresh,4\thresh}$, constant learning rate $\eta = \frac{\sqrt{\thresh}}{\sqrt{T}}$ and parameters set as in Lemma~\ref{small-diam},
    then for every process $i\in V_T$
    \begin{equation*}
        \E\bracks*{\norm*{\nabla Q(\mathbf{x}_\tau^i)}_2^2} 
        \leq
        \frac{4 \parens*{Q(\mathbf{x}_1) - Q^*}}{\sqrt{\thresh T}} + \frac{20\sigma^2 \max\braces*{L,L^2}}{\sqrt{\thresh T}} + \frac{16\sigma^2}{\sqrt{\thresh T}} + \parens*{\frac{16}{\thresh\sqrt{T}} + \frac{16L^2}{T^{3/2}}} \max_{i,j\in V_0} \norm*{\mathbf{x}_1^i - \mathbf{x}_1^j}_2^2
    \end{equation*}
\end{theorem}

Note that an adaptive contraction parameter can be used to achieve a constant distance between the parameters of the first iteration (i.e., $\epsilon$-agreement), thus, eliminating the dependency on the initial distance between the points (see~\cite{AbrahamAD04}).

The next theorem shows that the analysis in Lemma~\ref{lem:different-initial-convergence-rate} is {tight}, by showing that the error term of $\Omega \parens*{\max_{i,j\in V_0} \norm*{\mathbf{x}_1^i - \mathbf{x}_1^j}_2^2}$ in unavoidable for \emph{any} constant learning rate $\eta$ and contraction parameter $q$. 
This means that when Algorithm~\ref{alg:dis-sgd}, without an initial MDAA round, uses a constant learning rate and contraction parameter with different initial points, it cannot provide external convergence.

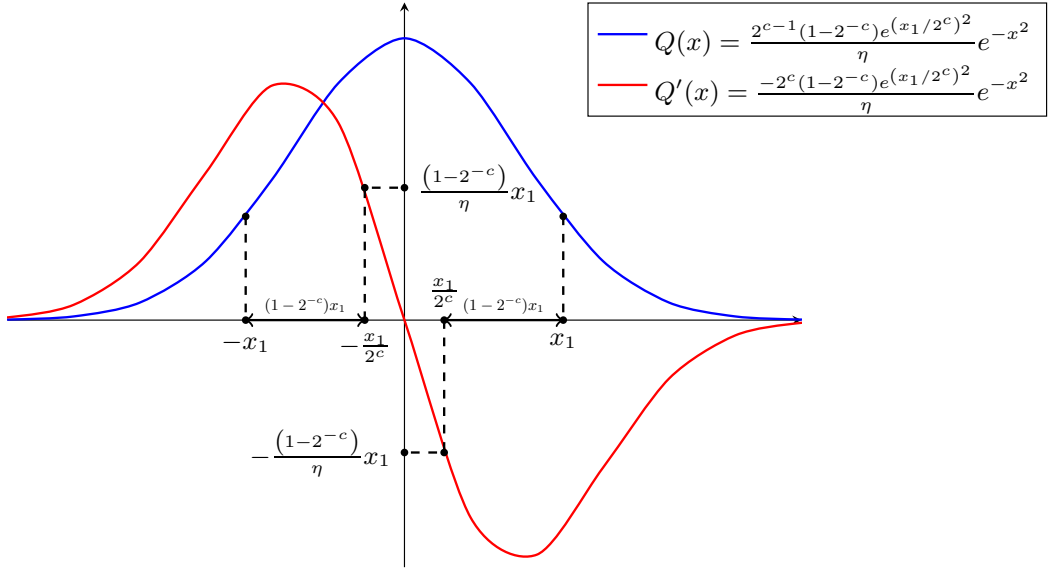
\begin{figure}[t]
\centering
\begin{tikzpicture}[declare function = {f(\x) = 0.5*2*0.75*exp(0.0625)*exp(-\x^2);}, declare function = {g(\x) = -0.5*4*0.75*exp(0.0625)*\x* exp(-\x^2);},scale=1.1]
    \tikzstyle{every node}=[font=\small]
  \begin{axis} [axis lines=center, 
    xmin=-2.5,xmax=2.5,
    ymin=-0.7,ymax=0.9,
    ticks=none,
    x post scale = 1.4,
    y post scale = 1.2,
    legend style={at={(1.02,1.00)},anchor=north,legend cell align=left}]
    \addplot [smooth, thick, blue] {f(x)};
    \addplot [smooth, thick, red] {g(x)};
    
    \node[circle,fill,scale=0.3,label=below:$x_1$] at (axis cs:1,0){};
    \node[circle,fill,scale=0.3,label=below:$-x_1$] at (axis cs:-1,0){};
    \node[circle,fill,scale=0.3] at (axis cs:1,{f(1)}){};
    \node[circle,fill,scale=0.3] at (axis cs:-1,{f(-1)}){};
    \addplot[color = black, dashed, thick] coordinates {(1, {f(1))}) (1, 0)};
    \addplot[color = black, dashed, thick] coordinates {(-1, {f(-1)}) (-1, {0})};
    
    \node[circle,fill,scale=0.3,label=above:$\frac{x_1}{2^c}$] at (axis cs:0.25,0){};
    \node[circle,fill,scale=0.3,label=below:$-\frac{x_1}{2^c}$] at (axis cs:-0.25,0){};
    \node[circle,fill,scale=0.3,label=left:$-\frac{\parens*{1-2^{-c}}}{\eta}x_1$] at (axis cs:0,{g(0.25)}){};
    \node[circle,fill,scale=0.3,label=right:$\frac{\parens*{1-2^{-c}}}{\eta}x_1$] at (axis cs:0,{g(-0.25)}){};
    \node[circle,fill,scale=0.3] at (axis cs:0.25,{g(0.25)}){};
    \node[circle,fill,scale=0.3] at (axis cs:-0.25,{g(-0.25)}){};
    
    \addplot[color = black, dashed, thick] coordinates {(0.25, 0) (0.25, {g(0.25)}) (0,{g(0.25)})};
    \addplot[color = black, dashed, thick] coordinates {(-0.25, 0) (-0.25, {g(-0.25)}) (0, {g(-0.25)})};
    
     \draw[<->, thick] (axis cs:-0.25, {0}) -- (axis cs:-1, {0}) node [above,pos=0.5,scale=0.6] {$(1-2^{-c})x_1$};
     
     \draw[<->, thick] (axis cs:0.25, {0}) -- (axis cs:1, {0}) node [above,pos=0.5,scale=0.6] {$(1-2^{-c})x_1$};;
    
    \addlegendentry{$Q(x) =  \frac{2^{c-1}(1-2^{-c}) e^{\parens*{x_1/2^c}^2}}{\eta} e^{-x^2}$}
    \addlegendentry{$Q'(x) = \frac{-2^{c}(1-2^{-c}) e^{\parens*{x_1/2^c}^2}}{\eta} e^{-x^2}$}
  \end{axis}
\end{tikzpicture}
\caption{Plot of $Q$ and its gradient function used in the proof of Theorem~\ref{thm:different-initial-lb}}
\label{fig:func1}
\end{figure}

\begin{theorem}
\label{thm:different-initial-lb}
Consider Algorithm~\ref{alg:dis-sgd} with constant learning rate $\eta$, such that $\eta_t = \eta$ for every $t\geq 1$, and any contraction parameter $q<1$ for the MDAA algorithm in Line~\ref{lin:approx-agree}.
For any $\mathbf{x}_1 \in \R^d$, there exists an $L$-smooth cost function $Q$ and set of initial parameters $\braces*{\mathbf{x}_1^i}_{i=1}^n$ such that $\max_{i\in V_0} \norm*{\mathbf{x}_1^i}_2 = \norm*{\mathbf{x}_1}_2$
and the output of some process $i$ from Algorithm~\ref{alg:dis-sgd}, for any number for iterations $T$, over some valid execution tree $\mathcal{T}$ has error
\begin{align*}
    \E_{\mathcal{T}}\bracks*{\norm*{\nabla Q(\mathbf{x}^i)}_2^2} 
    = \Omega \parens*{ \max_{i\in V_0} \norm*{\mathbf{x}_1^i}_2^2}
\end{align*}
\end{theorem}

\begin{proof}
Consider a system with singleton clusters, which is simply the pure message-passing model. In this case, $\fopt = \lfloor(n-1)/2\rfloor$ and assume that the number of failures is exactly $f=\fopt$. We consider a cost function $Q$ of dimension $d=1$, hence, we can use the Midpoint aggregation rule in the approximate agreement algorithm, which achieves contraction rate of 1/2.
For simplicity, assume that $\thresh{} = 3$ and note that the proof can be generalized for $\thresh{}>3$.

Let integer $c$ such that $\frac{1}{2^c}\leq q$.
Let $x_1\in \R$ and consider the function $Q(x) = \frac{2^{c-1}(1-2^{-c}) e^{\parens*{x_1/2^c}^2}}{\eta} e^{-x^2}$. The function $Q$ is $L$-smooth (for some $L$) since its second derivative is bounded.
Its derivative is $\nabla Q(x) = Q'(x) = \frac{-2^{c}(1-2^{-c}) e^{\parens*{x_1/2^c}^2}}{\eta} x e^{-x^2}$.
The function $Q(x)$ and its derivative are depicted in Figure~\ref{fig:func1}.
The function $Q'$ is \emph{odd}, since $-Q'(x) = Q'(-x)$.
The stochastic gradient is simply set to be the full gradient. 

Assume that $n\geq 2\thresh{} + 1$ and is an odd integer. 
Divide the processes, except process $1$, into two sets $S_1$ and $S_2$, each containing $\lfloor n/2\rfloor$ processes. (If $n$ is even, one process can crash at the beginning of the execution.) All the processes $i\in S_1$ start with ${x}_1^i = x_1$, all the processes $i\in S_2$ start with ${x}_1^i = -x_1$ and process $1$ starts with $x_1^1 = 0$.
As the stochastic gradient is deterministic, the execution tree reduces to a simple deterministic execution.
Consider the following execution of Algorithm~\ref{alg:dis-sgd}; at the first iteration all the processes in $S_1$ send $Q'(x_1)$, processes in $S_2$ send $Q'(-x_1)$ and process $1$ sends $Q'(0) = 0$. Each process receives exactly $\thresh = 3$ gradients, one of each kind.
As the gradient is odd, for every processes $i$, ${g}_t^i = \frac{1}{3} \parens*{0 + Q'(x) + Q'(-x)} = 0$. Hence, each process $i$ updates ${y}_1^i$ to be the same as its initial point. In the AA algorithm of the first iteration, at each round, the processes in $S_1$ receive messages only from processes in $S_1$ and process $1$. Similarly, the processes in $S_2$ receive messages only from processes in $S_2$ and process $1$. Process $1$ receives all the messages from both processes in $S_1$ and $S_2$. Hence, each process $i\in S_1$ updates its parameter to be $x_2^i = x_1/2^c$, each process $i\in S_2$ updates its parameter to be $x_2^i = -x_1/2^c$ and process $1$ stays with $x_2^1 = 0$. 

Consider the second iteration, where each process in $S_1$ and $S_2$ receives gradients only from its own group, and process $1$ receives one of each kind. Since $\nabla Q(x_1/2^c) = \frac{-(1-2^{-c})}{\eta} x_1$, each process $i\in S_1$ updates its parameter to be $y_2^i = x_1$, each process $i\in S_2$ updates its parameters to be $y_2^i = -x_1$ and process $1$ stays with $y_2^1 = 0$. The AA phase of the second iteration carries out exactly like in the first one. The rest of the iterations carry on exactly as the second iteration. We get that for the processes $i\in S_1\cup S_2$ and iterations $2\leq t \leq T+1$, $\abs*{\nabla Q(x_t^i)} = \abs*{\frac{(1-2^{-c})}{\eta} x_1}$, which concludes the lemma.
\end{proof}

Since $\max_{i,j\in V_0} \norm*{\mathbf{x}_1^i - \mathbf{x}_1^j}_2 \leq 2 \max_{i\in V_0} \norm*{\mathbf{x}_1^i}_2$, \autoref{thm:different-initial-lb} implies an error of $\Omega \parens*{ \max_{i\in V_0} \norm*{\mathbf{x}_1^i}_2^2} = \Omega \parens*{\max_{i,j\in V_0} \norm*{\mathbf{x}_1^i - \mathbf{x}_1^j}_2^2}$.

\end{document}